\newtheorem{thm}{Theorem}[section]
\newtheorem{lem}[thm]{Lemma}
\newtheorem{assum}[thm]{Assumption}
\newtheorem{proposition}[thm]{Proposition}
\newtheorem{remark}[thm]{Remark}
\newtheorem{model}[thm]{Model}
\newtheorem{corollary}[thm]{Corollary}
\newcommand\reallywidehat[1]{%
\savestack{\tmpbox}{\stretchto{%
  \scaleto{%
    \scalerel*[\widthof{\ensuremath{#1}}]{\kern-.6pt\bigwedge\kern-.6pt}%
    {\rule[-\textheight/2]{1ex}{\textheight}}
  }{\textheight}%
}{0.5ex}}%
\stackon[1pt]{#1}{\tmpbox}%
}
\newcommand*{\rom}[1]{\expandafter\@slowromancap\romannumeral #1@}
\newcommand{\abs}[1]{\left|#1\right|}
\DeclareMathOperator*{\argmax}{arg\,max}
\DeclareMathOperator*{\argmin}{arg\,min}
\newcommand{\p}[1]{\left(#1\right)}
\newcommand{\pp}[1]{\left[#1\right]}
\newcommand{\ppp}[1]{\left\{#1\right\}}
\newcommand{\norm}[1]{\left\|#1\right\|}
\newcommand{\s}[1]{\mathsf{#1}}
\numberwithin{equation}{section}
\newcommand{\tamir}[1]{\textcolor{red}{[Tamir: #1]}}
\begin{document}

\title{Structure from Noise: Confirmation Bias in Particle Picking in Structural Biology}

\author[1]{Amnon Balanov\thanks{Corresponding author: \url{amnonba15@gmail.com}}}
\author[1]{Alon Zabatani}
\author[1]{Tamir Bendory}

\affil[1]{\normalsize School of Electrical and Computer Engineering, Tel Aviv University, Tel Aviv 69978, Israel}

\maketitle

\begin{abstract}
The computational pipelines of single-particle cryo-electron microscopy (cryo-EM) and cryo-electron tomography (cryo-ET) include an early particle-picking stage, in which a micrograph or tomogram is scanned to extract candidate particles, typically via template matching or deep-learning-based techniques. The extracted particles are then passed to downstream tasks such as classification and 3D reconstruction.
Although it is well understood empirically that particle picking can be sensitive to the choice of templates or learned priors, a quantitative theory of the bias introduced by this stage has been lacking.

Here, we develop a mathematical framework for analyzing bias in template matching-based detection with concrete applications to cryo-EM and cryo-ET. We study this bias through two downstream tasks: (i) maximum-likelihood estimation of class means in a Gaussian mixture model and (ii) 3D volume reconstruction from the extracted particle stack. We show that when template matching is applied to pure noise, then under broad noise models, the resulting maximum-likelihood estimates converge asymptotically to deterministic, noise-dependent transforms of the user-specified templates, yielding a structure from noise effect. We further characterize how the resulting bias depends on the noise statistics, sample size, dimension, and detection threshold. Finally, controlled experiments using standard cryo-EM software corroborate the theory, demonstrating reproducible structure from noise artifacts in low-SNR data.
\end{abstract}

\newpage
\tableofcontents

\newpage

\section{Introduction}\label{sec:intro}
Cryo-electron microscopy (cryo-EM) and cryo-electron tomography (cryo-ET) have revolutionized structural biology by enabling visualization of macromolecular complexes in near-native states and at increasingly high resolutions. These complementary techniques are now at the heart of the field. Cryo-EM (Figure \ref{fig:1}(a)) reconstructs the 3D structures of isolated macromolecules from 2D projection images acquired at unknown orientations, and has become essential for resolving atomic-level structures of proteins that are difficult to crystallize~\cite{milne2013cryo, bai2015cryo, nogales2016development, renaud2018cryo, yip2020atomic}. Cryo-ET (Figure \ref{fig:2}(a)) captures 3D volumes of macromolecular assemblies in situ by collecting a tilt series of 2D projections at known angles~\cite{chen2019complete, schaffer2019cryo, zhang2019advances, turk2020promise, watson2024advances}. While cryo-ET typically yields lower resolution, it preserves the native spatial context and is uniquely suited for probing the complexity of cellular environments.

As these techniques mature, a critical concern has emerged: the presence of biases, both experimental and algorithmic, that can affect the accuracy of the final 3D reconstructions. This concern is well recognized in the structural biology community, and practical strategies have been developed to mitigate known sources of bias, e.g.,~\cite{sorzano2022bias}. However, a deeper theoretical understanding of how such biases arise and propagate through the downstream reconstruction pipeline remains lacking.

Among the algorithmic sources of bias in single-particle reconstruction are incorrect alignment parameters, inaccurate contrast transfer function (CTF) correction, improper image normalization, erroneous masking in real or Fourier space, and the use of incorrect initial templates~\cite{sorzano2022bias}. This work focuses on the latter and aims to develop a theoretical framework for quantifying its impact. In particular, we seek to establish both theoretical and empirical foundations for analyzing \emph{confirmation bias during the particle-picking stage} in cryo-EM and cryo-ET.

Originally studied in the social sciences, confirmation bias refers to the tendency to interpret information in ways that support pre-existing beliefs or expectations, rather than allowing the data to speak for itself \cite{klayman1995varieties, kassin2013forensic, mynatt1977confirmation}. In the context of structural biology, confirmation bias, viewed as a specific instance of the broader category of model bias, can manifest when researchers, often unconsciously, assume that an unknown structure resembles a previously determined model or a prediction generated by computational tools. 
These assumptions, introduced early in the pipeline, can bias downstream processing and produce reconstructions that resemble prior models, even when unsupported by the data~\cite{scheres2012prevention,sorzano2022bias,balanov2025confirmation}. This raises a fundamental question that guides this work: 
\begin{quote}
    \textit{To what extent does confirmation bias in particle-picking algorithms shape and potentially distort the downstream pipeline, and in particular the final 3D structure in cryo-EM and cryo-ET?}
\end{quote}

\subsection{Examples of confirmation bias in structural biology}
\paragraph{The Einstein from Noise phenomenon.}
To illustrate how methodological assumptions can generate structure from noise, we revisit a striking example widely discussed in the cryo-EM community, the Einstein from Noise phenomenon, which serves as the inspiration for this work. In this thought experiment, researchers analyze a dataset believed to contain noisy, shifted instances of a known image, typically an image of Einstein, but in reality, the data consists entirely of pure noise. Despite the absence of any underlying signal, aligning and averaging the images relative to the template yields a reconstruction that closely resembles the original image~\cite{shatsky2009method, sigworth1998maximum}.

This phenomenon played a key role in a significant scientific controversy over the structure of an HIV molecule~\cite{mao2013molecular, henderson2013avoiding, van2013finding, subramaniam2013structure, mao2013reply} and has since served as a cautionary tale for methodological overfitting. For a detailed statistical treatment, see~\cite{balanov2024einstein, balanov2025expectation}. While often cited in discussions of particle picking using template-matching, it is crucial to note that the Einstein from Noise effect arises from a distinct mechanism: in template-matching particle picking methods, confirmation bias arises from selecting micrograph patches that most closely match the templates (essentially a form of \emph{selection bias}, which is the topic of this work). In contrast, the Einstein from Noise effect emerges from aligning already extracted particles to a reference template. We further elaborate on this distinction in Appendix~\ref{sec:different-statistical-models}.

\paragraph{The risk of confirmation bias.} 
The risk of confirmation bias is especially pronounced in scenarios with low signal-to-noise ratios (SNRs), a defining characteristic of cryo-EM and cryo-ET datasets. These datasets also exhibit significant uncertainty in particle orientations, rendering reconstructions highly sensitive to initialization and heuristic choices \cite{singer2020computational, bendory2020single}.
Figures~\ref{fig:1} and~\ref{fig:2} illustrate the cryo-EM and cryo-ET setups and their respective particle reconstruction workflows, highlighting critical stages where confirmation bias may emerge. The computational workflows for both modalities involve several common stages: particle detection and extraction (i.e., particle picking), alignment and classification, and 3D reconstruction. In cryo-EM, individual 2D particles are picked from micrographs and grouped into 2D classes (Figure~\ref{fig:1}(a)); in cryo-ET, smaller subvolumes, termed subtomograms, are extracted from a reconstructed tomogram and then aligned and averaged to enhance the SNR (Figure~\ref{fig:2}(a)) \cite{zhang2019advances, watson2024advances}. In both pipelines, downstream reconstruction accuracy depends critically on the fidelity of the initial \emph{particle-picking stage---the main focus of this work.}

\paragraph{Confirmation bias in particle-picking.}
This study presents a theoretical and empirical framework to systematically investigate confirmation bias during the crucial \emph{particle-picking} stage of cryo-EM and cryo-ET. Particle picking is the process of extracting 2D particle projections from noisy micrographs (in cryo-EM, Figure~\ref{fig:1}(b--c)) or 3D subtomograms from noisy tomograms (in cryo-ET, Figure~\ref{fig:2}(b)). While there are many different particle-picking algorithms, most fall into three broad categories: template-matching algorithms (e.g.,~\cite{cruz2024high, huang2004application,martinez2025template,bohm2000toward}), template-free methods based on generic shape cues such as blob- or Laplacian-of-Gaussian (LoG)-based picking, and deep learning algorithms (e.g.,~\cite{bepler2019positive,wagner2019sphire}); see Section~\ref{sec:discussion} for further discussion. 

Template-matching methods work by scanning the micrograph or tomogram with a predefined reference structure (or template) and selecting regions that correlate highly with it. Template-free methods, by contrast, avoid explicit structural references and instead rely on coarse geometric or intensity-based cues to identify candidate particles. Deep learning methods train neural networks, often convolutional neural networks (CNNs), to detect particles based on features learned directly from annotated training data. While these approaches often perform well under moderate-to-high SNR, we show that in low-SNR conditions, common to both cryo-EM and cryo-ET, they can introduce systematic structural biases.

\begin{figure}[t!]
    \centering
    \captionsetup{width=1.0\linewidth}
    \includegraphics[width=0.8\linewidth]{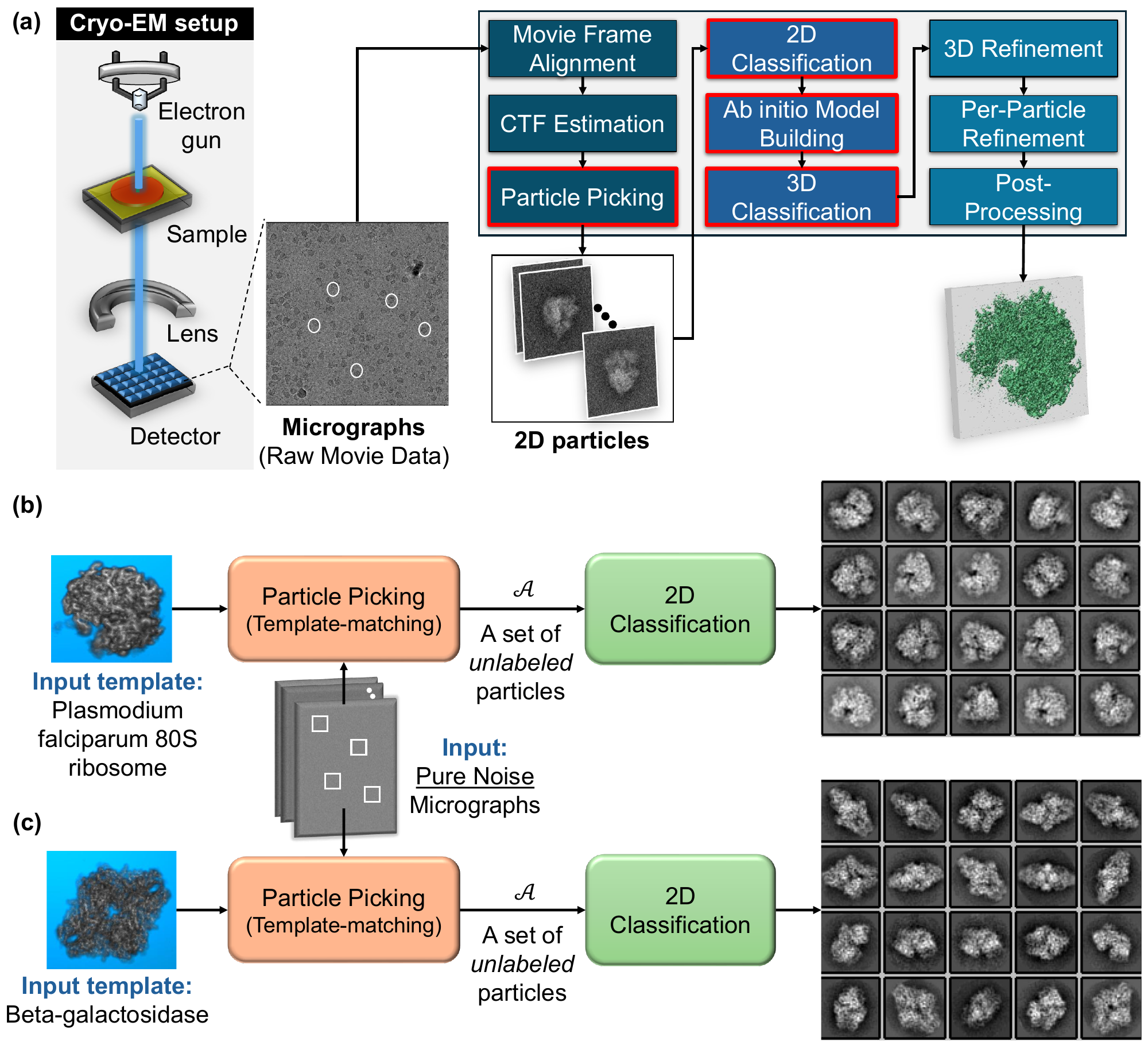}
    \caption{\textbf{The cryo-electron microscope (cryo-EM) computational pipeline and structure from noise in particle picking.}
    \textbf{(a)} Schematic cryo-EM workflow from 2D micrographs to a 3D reconstruction~\cite{bendory2020single}: preprocessing (motion correction and CTF estimation/correction) is followed by particle picking, 2D classification, and 3D ab initio reconstruction and refinement. Red boxes mark stages particularly vulnerable to confirmation bias.
    \textbf{(b--c)} \textbf{Structure from noise under template-matching picking.}
    The template-matching picker (Algorithm~\ref{alg:particlePickerTemplateMatching}) is applied to \emph{pure-noise} micrographs using two different 3D templates:
    (b) Plasmodium falciparum 80S ribosome~\cite{wong2014cryo} and (c) beta-galactosidase~\cite{bartesaghi2014structure}.
    The extracted particles are then processed by 2D classification (RELION VDAM~\cite{kimanius2021new}, without in-plane alignment, yielding 2D class centers that closely resemble the picking templates despite the noise-only input.
    Settings: $L=20$ template projections (and $L=20$ 2D classification classes), patch size $d=48\times48$, $M=2\times10^5$ extracted particles (from 1000 micrographs), and threshold $T=3.8$; ; see Section~\ref{sec:empirical} for an analogous experiment \emph{with in-plane alignment}, together with the full set of parameters.}
    \label{fig:1}
\end{figure}

Figure~\ref{fig:1}(b–c) demonstrates confirmation bias in cryo-EM particle picking using a template-matching algorithm applied to micrographs composed entirely of pure noise (as in the Einstein from Noise experiment). In these methods, predefined particle templates are correlated with input micrographs to identify potential particle locations. After template matching with different initial templates, the selected particles have been processed using RELION's VDAM 2D classification using a reference-free initialization~\cite{kimanius2021new}. Remarkably, despite the absence of any informative signal in the data, the resulting 2D class averages closely resemble projections of the original 3D templates. This behavior is consistent across multiple templates, algorithms, and experimental configurations, and also appears in the cryo-ET pipeline, as shown in Figure~\ref{fig:2}(b).
We refer to this phenomenon, where downstream reconstructions produce biologically plausible structures from pure noise, as \emph{structure from noise}.

\begin{figure}[t!]
    \centering
    \captionsetup{width=1.0\linewidth}
    \includegraphics[width=0.8\linewidth]{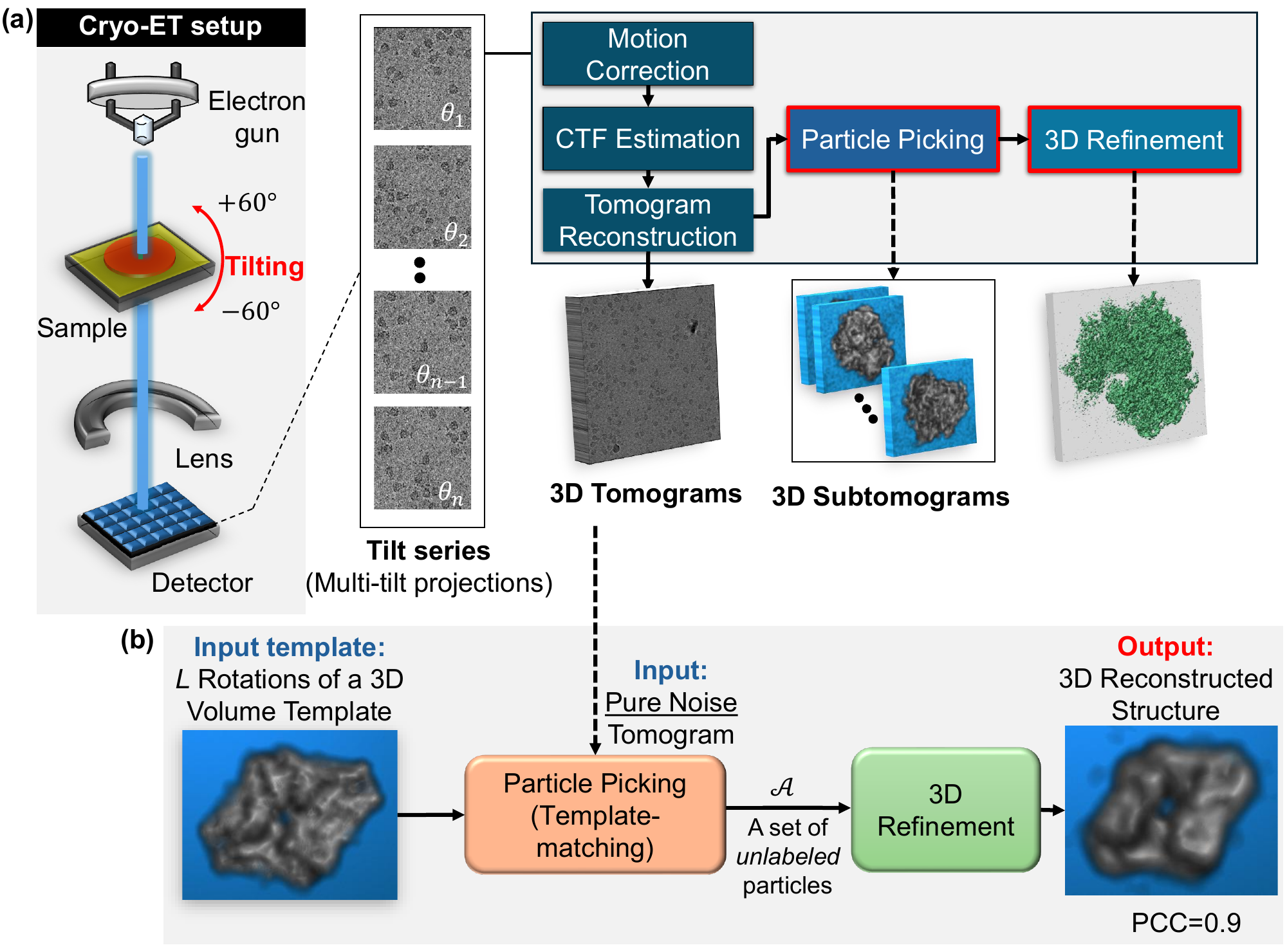}
    \caption{\textbf{Structure from noise in the particle-picking stage of the cryo-electron tomography (cryo-ET) pipeline.}
    \textbf{(a)} In cryo-ET, a tilt series of 2D projections (typically $-60^\circ$ to $+60^\circ$) is acquired and reconstructed into a 3D tomogram; after standard preprocessing (motion correction, tilt-series alignment, and CTF estimation/correction), particle picking extracts subtomograms that are refined by subtomogram averaging to improve SNR~\cite{watson2024advances}.
    \textbf{(b)} \textbf{Confirmation bias under template-matching picking.}
    Template matching (Algorithm~\ref{alg:particlePickerTemplateMatching}) is applied to a \emph{pure-noise} tomogram using $L=50$ rotated templates of a beta-galactosidase volume~\cite{bartesaghi2014structure}.
    From $M=2\times10^4$ extracted subtomograms (size $d=24\times24\times24$, threshold $T=3.5$), expectation-maximization 3D refinement algorithm yields a reconstruction highly correlated with the template volume (PCC $=0.9$).}
    \label{fig:2}
\end{figure}

\subsection{Main results and work structure}  
While prior work~\cite{balanov2025confirmation, balanov2024einstein, balanov2025expectation}  studied confirmation bias arising in downstream  classification, where the bias is induced at the level of the estimation and classification procedure itself, the present work focuses on a different statistical mechanism: confirmation bias introduced already at the particle-picking stage, where template matching selects noise patches that are subsequently passed to downstream reconstruction. Thus, the main novelty of this work is the analysis of particle-picking-induced selection bias and how it affects downstream tasks. Section~\ref{sec:different-statistical-models} and Figure~\ref{fig:7} further compare this mechanism with other forms of confirmation bias arising elsewhere in the cryo-EM/cryo-ET pipeline.

To systematically analyze confirmation bias at the particle-picking stage, we study both classical template matching and deep learning-based particle pickers. To quantify the resulting bias, we introduce a metric that measures the similarity between the extracted particles and the original templates, defined via the outputs of two key downstream tasks: maximum-likelihood estimation of Gaussian mixture model (GMM) class centers (i.e., the statistical model that governs the 2D classification task) and 3D volume reconstruction. We show that, when picking is applied to pure noise (under several models, including white, spherically symmetric, and stationary Gaussian models), the resulting maximum-likelihood estimates converge to deterministic transforms of the picking templates. This reveals a form of model bias: even in the absence of any true particles, the reconstruction is systematically shaped by the user-specified templates, rather than by genuine structure in the data.

We further characterize how the induced bias depends on the noise statistics, the number of extracted particles, the patch or volume dimension, and the detection threshold. On the empirical side, we demonstrate on synthetic cryo-EM and cryo-ET datasets that standard pipelines (RELION VDAM 2D classification and ab initio 3D refinement~\cite{kimanius2021new}) exhibit the predicted structure from noise behavior under template-based picking, and that a widely used neural-network picker, Topaz~\cite{bepler2019positive}, likewise introduces reproducible structure when applied to pure-noise and low-SNR micrographs.

The remainder of the paper is organized as follows. Section~\ref{sec:preliminaries} formalizes the template-based detection problem, introduces the probabilistic noise models, and describes GMM-based class averaging as a canonical reconstruction task. Section~\ref{sec:model} develops the theoretical framework for confirmation bias in template-matching algorithms and analyzes its asymptotic and finite-sample behavior, with specializations to cryo-EM and cryo-ET pipelines given in Section~\ref{sec:applications}. Section~\ref{sec:empirical} presents empirical results using RELION~\cite{kimanius2021new} and Topaz~\cite{bepler2019positive}, and Section~\ref{sec:discussion} discusses implications for routine workflows and strategies for mitigating bias in the particle-picking stage.

\section{Preliminaries and problem formulation} \label{sec:preliminaries}
In this section, we introduce a general framework for analyzing bias in template matching. We begin by formalizing the detection and reconstruction problem in which template-matching methods are used to extract signals from a large noisy observation. We then define the template-matching selection rule and specify the probabilistic noise models used as input to the template-matching algorithm. Finally, we describe the downstream reconstruction procedure that serves to quantify how the selected patches are biased toward the templates. Concrete instances arising in cryo-EM 2D classification and cryo-EM/cryo-ET 3D reconstruction are presented in Section~\ref{sec:applications}. Throughout the text, $\langle a, b \rangle$ represents the inner product between the vectors $a$ and $b$. 

\subsection{The detection and reconstruction process}
We begin with a setting in which a researcher observes a single long measurement $y\in\mathbb{R}^{N}$ (e.g., a time series, a micrograph, or a tomogram). This measurement is believed to contain multiple occurrences (i.e., instances) of a small collection of $L$ unknown signal types $\{s_\ell\}_{\ell=0}^{L-1}\subset\mathbb{R}^{d}$.
Across the observation $y$, these signal types appear repeatedly at unknown locations, where in total there are $Q \gg L$ copies embedded in $y$, and the goal is to estimate the underlying patterns $\{s_\ell\}$; see Figure~\ref{fig:10} for a conceptual overview.

Formally, the researcher postulates the generative model
\begin{align}
    (\text{Postulated statistics})\qquad
    y = \sum_{\ell=0}^{L-1} c_\ell \ast s_\ell + \xi ,
\label{eq:general-MTD-model}
\end{align}
where $\ast$ denotes discrete linear convolution, $\xi$ is additive background noise, and each $c_\ell\in\{0,1\}^{N}$ is a sparse location indicator: $c_\ell(t)=1$ indicates that an instance of type $\ell$ starts at position $t$, and $c_\ell(t)=0$ otherwise. The total number of embedded instances is $Q \triangleq \sum_{\ell=0}^{L-1}\sum_{t=0}^{N-1} c_\ell(t)$, and it is assumed that signal instances are non-overlapping, that is, no index of $y$ is covered by more than one shifted copy of any signal; equivalently, if $c_{\ell_1}(t_1)=c_{\ell_2}(t_2)=1$ with $(\ell_1,t_1)\neq(\ell_2,t_2)$, then $|t_1-t_2|\ge d$.

The goal is to recover the unknown signals $\{s_\ell\}_{\ell=0}^{L-1}$ from the observation $y$. Many imaging and signal-processing pipelines adopt a two-stage strategy (Figure~\ref{fig:10}(c)). First, a detection and extraction stage scans $y$ and produces a collection of patches that are believed to contain signal occurrences. Second, a reconstruction stage aggregates these noisy patches to estimate the underlying signals, for example via a downstream procedure, such as maximum-likelihood estimation of class centers in a GMM fitted to the selected patches.

In practice, the detection stage is often implemented by template matching. The researcher specifies a finite bank of reference templates $\{x_\ell\}_{\ell=0}^{L-1} \subset \mathbb{R}^d$ encoding prior expectations or approximate guesses for the signals $\{s_\ell\}$, and a patch is selected whenever its correlation with at least one template exceeds a prescribed threshold.

In our analysis of confirmation bias, we focus on the null regime in which no true signal
instances are present, i.e., $c_\ell \equiv 0$ for all $\ell$, so that the underlying statistics reduce to
\begin{align}
    (\text{Underlying statistics})\qquad
    y = \xi.
\label{eq:underlying-statistics}
\end{align}
This null hypothesis can also be viewed as the extreme low-SNR limit $\mathrm{SNR} \to 0$ (equivalently $\|s_\ell \| \to 0$), reflecting the highly noise-dominated regime common in many applications, including cryo-EM and cryo-ET.
We then ask: if the detection stage is applied to \eqref{eq:underlying-statistics} rather than to~\eqref{eq:general-MTD-model} and it selects pure-noise patches solely because they correlate with the templates $\{x_\ell\}_{\ell=0}^{L-1}$, what structure does a downstream reconstruction procedure recover from these selected patches? 

\begin{figure}[t!]
    \centering
    \includegraphics[width=0.9\linewidth]{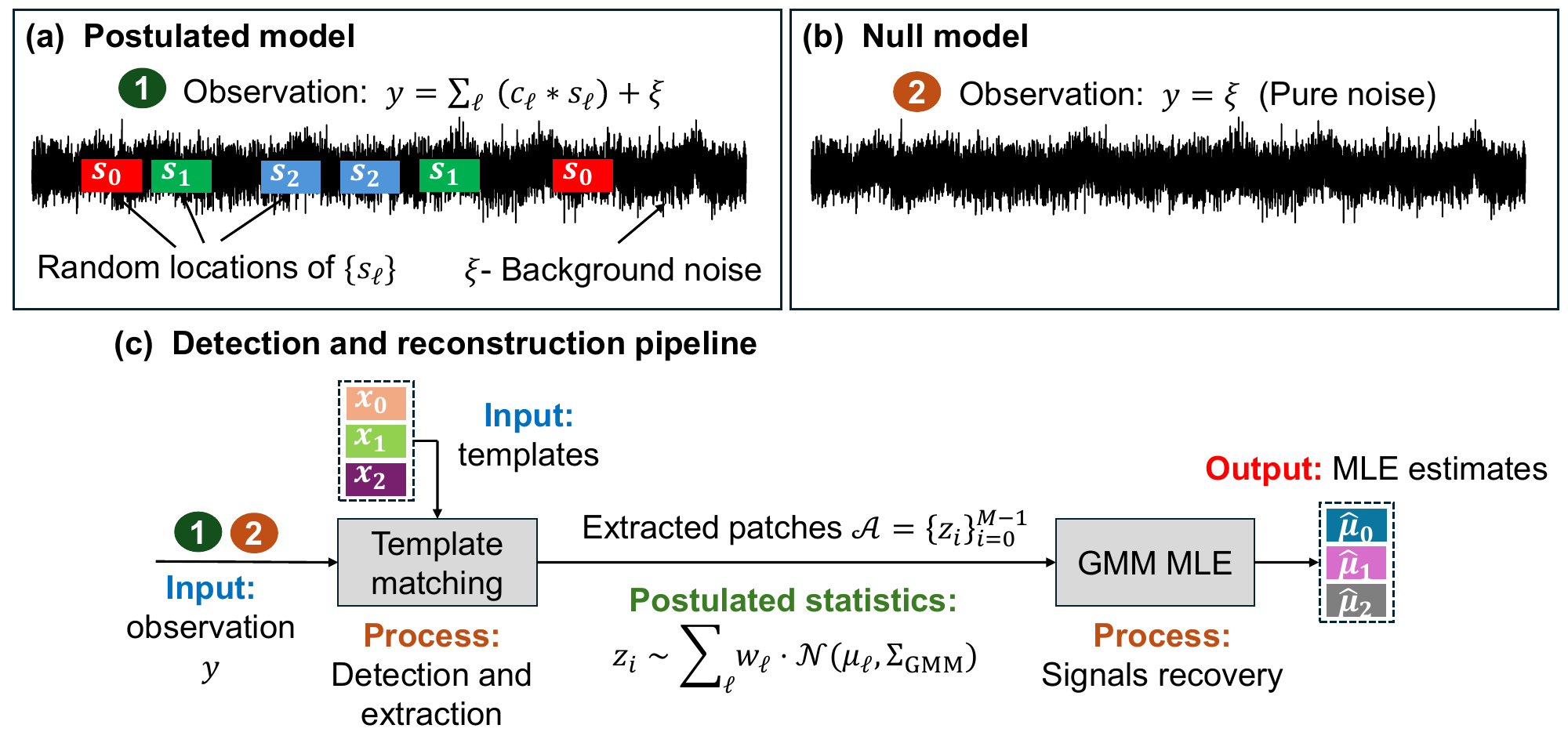}
    \caption{\textbf{Conceptual overview of the model for analyzing confirmation bias in template matching.}
    \textbf{(a)} Postulated signal-plus-noise model~\eqref{eq:general-MTD-model}: the observation contains multiple non-overlapping instances of unknown signals $\{s_\ell\}$ embedded in background noise $\xi$.
    \textbf{(b)} Null model~\eqref{eq:underlying-statistics}: the observation is pure noise.
    \textbf{(c)} Detection-and-reconstruction pipeline: assuming the postulated model, the researcher applies a two-stage procedure. First, template matching selects patches that correlate with a bank of templates $\{x_\ell\}$ (Algorithm~\ref{alg:particlePickerTemplateMatching}). Second, a downstream maximum-likelihood estimator (MLE) fits a Gaussian mixture model (GMM) to the selected patches and outputs component-mean estimates $\{\widehat{\mu}_\ell\}$ (see Section~\ref{subsec:confirmation-bias-analysis-via-downstream}). The central question of this work is whether, and how, this pipeline can yield template-dependent reconstructions even when the true data follow the null model.}
    \label{fig:10}
\end{figure}

\subsection{The template-matching selection}

The researcher generates a collection of candidate patches from the observation $y$, denoted by $\{y_i\}_{i=0}^{N-1} \subset \mathbb{R}^d$. 
Then, the \emph{template-matching selector}, as detailed in Algorithm~\ref{alg:particlePickerTemplateMatching}, operates by cross-correlating each candidate observation $\{y_i\}_{i=0}^{N-1}$ with a finite bank of normalized templates $\{x_\ell\}_{\ell=0}^{L-1} \subset \mathbb{R}^d$.\footnote{We assume $\|x_\ell\|_2 = 1$ for all $\ell$, corresponding to the standard energy normalization used in correlation-based matching. This ensures that correlation scores are directly comparable across templates and observations, avoiding trivial scaling effects.} An observation $y_i$ is selected if its correlation with at least one template exceeds a fixed threshold $T$, that is, if $\max_{0 \le \ell \le L-1} \langle y_i, x_\ell \rangle \ge T$.
The selected observations are collected into the set $\mathcal{A}$, with cardinality $|\mathcal{A}| = M$. Importantly, the elements of $\mathcal{A}$ are \emph{unlabeled}: for a given $y \in \mathcal{A}$, the template (if any) with which it has the highest correlation is not recorded by the selection rule.
In cryo-EM and cryo-ET, patches $\{y_i\}_{i=0}^{N-1}$ represent candidate boxed images (or subtomograms) extracted from a micrograph (or tomogram), i.e., each patch may or may not contain a true particle instance, and the task is to precisely decide which patches correspond to particles rather than background noise. The templates are typically derived from a reference volume; see Section~\ref{sec:applications} for these specific applications.

\begin{algorithm}[] 
  \caption{\texttt{Template-Matching Selector}}
  \label{alg:particlePickerTemplateMatching}
  \textbf{Input:} $L$ templates $\{x_\ell\}_{\ell=0}^{L-1}$, $N$ candidate patches $\{y_i\}_{i=0}^{N-1}$, and a threshold $T$. \\
  \textbf{Output:} Set $\mathcal{A}$ of selected patches.
  \begin{algorithmic}[1]
    \State Initialize $\mathcal{A} \gets \emptyset$
    \For{$i = 0$ to $N-1$}
      \If{$\underset{0 \leq \ell \leq L-1} \max \langle y_i, x_\ell \rangle \geq T$}
        \State Add $y_i$ to $\mathcal{A}$: $\mathcal{A} \gets \mathcal{A} \cup \{y_i\}$
      \EndIf
    \EndFor
    \State \Return $\mathcal{A}$
  \end{algorithmic}
\end{algorithm}

\subsection{Probabilistic models for the input patches}
Our goal is to quantify confirmation bias induced by template-based detection. Following the Einstein from Noise experiment, and to examine this effect in its most extreme form, we analyze the null regime~\eqref{eq:underlying-statistics} in which the data contain only noise and no true signal instances. Motivated by the noise statistics encountered in cryo-EM datasets, we present a hierarchy of probabilistic models for the input patches $\{y_i\}_{i=0}^{N-1}$ presented to the template-matching selector, beginning with white Gaussian noise and progressing to more realistic stationary, strongly mixing noise processes that capture spatial correlations.

\begin{model}[Independent and identically distributed Gaussian noise patches]
\label{model:whiteNoiseIID}
The patches $\{y_i\}_{i=0}^{N-1} \subset \mathbb{R}^d$ are modeled as i.i.d.\ white Gaussian noise, that is, $y_i \sim \mathcal{N}(0,\sigma^2 I_d)$, where $\mathcal{N}(0,\sigma^2 I_d)$ denotes a $d$-dimensional Gaussian distribution with zero mean and covariance matrix $\sigma^2 I_d$.
\end{model}

This idealized model is analytically convenient but not fully realistic: in practice, patches extracted from imaging data are typically dependent (due to spatial overlap), and the background noise is neither perfectly white.

\paragraph{Stationary process with strong-mixing property.}
To capture these effects, we next consider a more general model based on a stationary noise process. Assume the underlying statistics given by~\eqref{eq:underlying-statistics} so that $y = \xi$, and let $\{\xi_t\}_{t\in\mathbb{Z}}$ be a zero-mean, stationary noise process with covariance function
\begin{align}
    \mathbb{E}[\xi_t] = 0, \qquad \mathrm{Cov}(\xi_t,\xi_{t+h}) = K(h),
    \label{eq:stationary-cov-1d}
\end{align}
for every $t,h \in \mathbb{Z}$. We construct candidate patches using a sliding window of length $d$:
\begin{align}
    y_i \triangleq \big(\xi_i, \xi_{i+1}, \dots, \xi_{i+d-1}\big)^\top \in \mathbb{R}^d,
    \label{eq:ys-def-1d}
\end{align}
for every $i \in \mathbb{Z}$. By stationarity of $\{\xi_t\}$, the marginal distribution of $y_i$ does not depend on $i$, and each patch has covariance
\begin{align}
    \nonumber
    \Sigma &\triangleq \mathrm{Cov}(y_i) = \big(\Sigma_{jk}\big)_{1\le j,k\le d}, 
    \\
    \Sigma_{jk} &= \mathrm{Cov}(\xi_{i+j-1},\xi_{i+k-1}) = K(j-k).
    \label{eq:Sigma-from-K-1d}
\end{align}

In many applications, including cryo-EM and cryo-ET, it is natural to assume that distant patches are essentially independent, whereas nearby patches may be correlated but with correlations that decay with distance.
We formalize this by assuming that the process $\{\xi_t\}$ is $\alpha$-mixing (strong mixing) with respect to the index $t$~\cite{bradley2005basic}. Its $\alpha$-mixing coefficients are defined by
\begin{align}
    \alpha(r) \triangleq \sup_{\substack{A \in \sigma(\xi_t:\,t \le n),\\ B \in \sigma(\xi_t:\,t \ge n+r)}} \big|\mathbb{P}(A\cap B) - \mathbb{P}(A)\mathbb{P}(B)\big|,
    \label{eq:alpha-def-1d}
\end{align}
where $\sigma(\xi_t:\,t \le n)$ denotes the $\sigma$-algebra generated by $\{\xi_t:\, t \le n\}$ (events depending only on the past up to time $n$), and $\sigma(\xi_t:\,t \ge n+r)$ is the $\sigma$-algebra generated by $\{\xi_t:\, t \ge n+r\}$ (events depending only on the distant future).
We further assume that $\alpha(r) \to 0$ as $r \to \infty$ together with the standard summability condition~\cite{bradley2005basic}
\begin{align}
    \sum_{r=1}^\infty \alpha(r)^{\delta/(2+\delta)} < \infty
    \qquad \text{for some } \delta>0.
    \label{eq:alpha-mixing-summability-1d}
\end{align}

Informally, this condition provides a mathematical model for short-range dependence: local regions of a signal (e.g., a micrograph or tomogram) are expected to be correlated, but the dependence weakens with distance, so sufficiently separated patches behave almost independently. Assumption~\eqref{eq:alpha-mixing-summability-1d} strengthens this by requiring that the dependence decay fast enough for empirical averages over many selected patches to exhibit the same qualitative concentration behavior as in the independent case. Such assumptions are standard in the analysis of stochastic processes and are natural here for analyzing the spatially correlated noise model with limited range.

We now specify a marginal model for the patches together with a weak dependence structure across them. 

\begin{model}[Spherically symmetric noise patches]
\label{model:isotropicIID}
Let $\{y_i\}_{i=0}^{N-1} \subset \mathbb{R}^d$ be a strictly stationary sequence of random vectors with common marginal distribution $Y \in \mathbb{R}^d$. Assume that $\{y_i\}$ is $\alpha$-mixing with respect to $i$, satisfying \eqref{eq:alpha-def-1d}-\eqref{eq:alpha-mixing-summability-1d} for some $\delta>0$, and that $\mathbb{E}\|Y\|_2^{2+\delta}<\infty$.
Assume that $Y$ admits a probability density function $p_Y$ that is continuous almost everywhere, and that $\mathbb{E}[Y]=0$.
Moreover, assume that $Y$ is rotationally invariant, i.e., $p_Y(y)=\phi(\|y\|_2)$ for some function $\phi$.
In addition, for every unit vector $u\in\mathbb{S}^{d-1}$,
\begin{align}
    \lim_{T\to\infty}\frac{\mathbb{E}\!\left[\langle u,Y\rangle\,\middle|\,\langle u,Y\rangle\ge T\right]}{T}=1,
    \label{eqn:asymptoticTruncatedExpectation}
\end{align}
and for every $\varepsilon>0$,
\begin{align}
    \sup_{u\in\mathbb{S}^{d-1}} \mathbb{P}\!\left( \left\|\frac{Y}{T}-u\right\|_2>\varepsilon \,\Bigg|\, \langle u,Y\rangle\ge T \right)\xrightarrow[T\to\infty]{}0 .
    \label{eq:component-concentration}
\end{align}
\end{model}

Conditions~\eqref{eqn:asymptoticTruncatedExpectation}-\eqref{eq:component-concentration} hold for Gaussian noise and, more generally, for many rotationally invariant light-tailed distributions (e.g., sub-Gaussian and radial log-concave laws). Condition~\eqref{eqn:asymptoticTruncatedExpectation} requires that, conditioning on a large excursion of $\langle u,Y\rangle$ in direction $u$, the conditional mean grows linearly with the threshold $T$. Condition~\eqref{eq:component-concentration} strengthens this by imposing directional concentration: given $\langle u,Y\rangle\ge T$, the normalized vector $Y/T$ concentrates around $u$ uniformly over $u\in\mathbb{S}^{d-1}$. The white-Gaussian model in Model~\ref{model:whiteNoiseIID} is a special case (i.i.d.\ patches with $Y\sim\mathcal{N}(0,\sigma^2 I_d)$). 

We next introduce a more realistic correlated-noise model that is generally not spherically symmetric at the patch level. In this setting, each patch remains Gaussian but with a nontrivial covariance structure that reflects the correlations of the underlying noise process.

\begin{model}[Stationary Gaussian noise patches]
\label{model:stationaryGaussianNoisePatches} 
Suppose that the stationary noise process $\{\xi_t\}_{t\in\mathbb{Z}}$ in~\eqref{eq:stationary-cov-1d} is Gaussian. Then, each patch $y_i$ defined by the sliding window in~\eqref{eq:ys-def-1d}, is a $d$-dimensional Gaussian vector with distribution $y_i \sim \mathcal{N}(0,\Sigma)$, where the covariance matrix $\Sigma$ is given by~\eqref{eq:Sigma-from-K-1d} and is assumed to be non-singular.  
The sequence $\{y_i\}_{i=0}^{N-1}$ is therefore a strictly stationary Gaussian vector process that is $\alpha$-mixing with coefficients $\alpha(r)$ satisfying the summability condition~\eqref{eq:alpha-mixing-summability-1d}.
\end{model}

\subsection{Confirmation bias analysis via downstream reconstruction} \label{subsec:confirmation-bias-analysis-via-downstream}
Recall that, after template-based extraction, the selected patches are unlabeled. In the downstream analysis, the researcher aims to estimate and recover the underlying signals $\{s_\ell\}$ in \eqref{eq:general-MTD-model} from these extracted patches.
We adopt this same downstream mechanism as a vehicle for quantifying confirmation bias in template matching: rather than applying it to patches extracted under the postulated model \eqref{eq:general-MTD-model}, we apply it to patches extracted under the null statistics \eqref{eq:underlying-statistics} (pure noise). We then analyze how the resulting reconstructions relate to the templates used in the detection stage.

A central downstream reconstruction process is Gaussian-mixture-based class averaging, which arises, for example, in cryo-EM 2D classification. Let $\mathcal{A}=\{z_i\}_{i=0}^{M-1}\subset\mathbb{R}^d$ denote the collection of patches selected by the template-matching stage. In the downstream analysis, these patches are commonly modeled as independent samples from a GMM \cite{reynolds2009gaussian}:
\begin{align}
    z_0, z_1, \ldots, z_{M-1} \stackrel{\text{i.i.d.}}{\sim} \sum_{\ell=0}^{K-1} w_\ell \cdot \mathcal{N}(\mu_\ell, \Sigma_{\text{GMM}}),
    \label{eqn:GMMmodelMain}
\end{align}
where $\mu_\ell \in \mathbb{R}^d$ is the mean of component $\ell$, $w_\ell$ is its mixing weight (with $w_\ell \geq w_{\min} > 0$ and $\sum_{\ell=0}^{K-1} w_\ell = 1$), and $\Sigma_{\text{GMM}}$ is a shared non-singular covariance matrix.
In an ideal, unbiased setting under the postulated signal-plus-noise model \eqref{eq:general-MTD-model}, each mixture mean $\mu_\ell$ is interpreted as a class average intended to recover one of the underlying signal types $s_\ell$. Accordingly, we set the number of mixture components equal to the number of signal types, $K \triangleq L$.
However, in the null setting~\eqref{eq:underlying-statistics}, there are no true signal occurrences present, but solely noise. Consequently, any nontrivial resemblance between the reconstructed centers $\{\mu_\ell\}$ and the templates $\{x_\ell \}$ cannot be attributed to genuine signal and must arise from selection bias induced by the template-matching stage.

A standard approach to estimate the means $\{\mu_\ell\}_{\ell=0}^{L-1}$ from the selected patches is maximum-likelihood estimation \cite{wald1949note,scheres2012relion}. The log-likelihood of the GMM in \eqref{eqn:GMMmodelMain} is
\begin{align}
    \mathcal{L}\big(\{\mu_\ell\}_{\ell = 0}^{L-1}; \{z_i\}_{i = 0}^{M-1}\big) = \sum_{i=0}^{M-1} \log \Bigg(\sum_{\ell=0}^{L-1} w_\ell \, \mathcal{N}\big(z_i ; \mu_\ell, \Sigma_{\text{GMM}} \big)\Bigg),    \label{eqn:maximumLikelihoodTermMain-1}
\end{align}
where $\mathcal{N}(z_i ; \mu_\ell, \Sigma_{\text{GMM}})$ denotes the multivariate Gaussian density evaluated at $z_i$.
The GMM maximum-likelihood estimators of the component means are defined by
\begin{align}
    \{\widehat{\mu}_\ell\}_{\ell=0}^{L-1} \triangleq \argmax_{\{\mu_\ell\}_{\ell=0}^{L-1}} \mathcal{L}\big(\{\mu_\ell\}_{\ell = 0}^{L-1}; \{z_i\}_{i = 0}^{M-1}\big). \label{eqn:maximumOfLogLikelihoodMain}
\end{align}

Within our framework when no signal occurrences appear in the data, the central question is: to what extent do the estimated centers $\{\widehat{\mu}_\ell\}$ resemble the templates $\{x_\ell\}$ used in the template-matching selector? Under an unbiased procedure and in the null regime, one would expect the class averages to bear no systematic correlation with the templates. Any persistent alignment between $\{\widehat{\mu}_\ell\}$ and $\{x_\ell\}$ therefore serves as a quantitative manifestation of confirmation bias.

We emphasize that the GMM in \eqref{eqn:GMMmodelMain} provides a \emph{postulated} model for the selected patches. It does not describe the true distribution of the extracted particles by template matching (Algorithm~\ref{alg:particlePickerTemplateMatching}), which in general is highly non-Gaussian and strongly biased by the selection rule. In other words, the model is statistically \emph{misspecified}, and this mismatch must be taken into account when interpreting the resulting class averages. A more detailed discussion of this misspecification is provided in Appendix~\ref{sec:misspecifiedModels}.

\section{Theoretical results}\label{sec:model}

In this section we present our main theoretical results on confirmation bias in template-based selection. Section~\ref{sec:populationAnalysisBias} analyzes the population-level behavior of bias in the asymptotic regime $N,T \to \infty$, while Section~\ref{seubsec:finite-sample-effects} studies the effects of finite-sample for $N < \infty$. For completeness, Appendix~\ref{subsec:local-maxima} introduces a more realistic template-matching mechanism based on local maxima and shows that this refinement does not change the leading-order bias.

\subsection{Population analysis of the template-matching selection bias}\label{sec:populationAnalysisBias}

We begin by quantifying the bias induced by template matching in the setting where the candidate patches are white Gaussian (or, more generally, spherically symmetric) noise. Theorem~\ref{thm:classesCentersVersusTemplatesInformal} shows that, under Model~\ref{model:whiteNoiseIID} or Model~\ref{model:isotropicIID}, the maximum-likelihood centroids of the postulated GMM obtained from the selected patches converge, after rescaling by the threshold $T$, to the templates themselves. In the following Theorem~\ref{thm:classesCentersVersusTemplatesInformalStationary} (proved in Appendix~\ref{sec:proofOfMainThms}) we extend this result to stationary Gaussian noise with spatial correlations (Model~\ref{model:stationaryGaussianNoisePatches}), where the limiting bias is given by an anisotropic rescaling of each template determined by the noise covariance.

\begin{thm}[Bias of GMM centers with spherically symmetric noise]
\label{thm:classesCentersVersusTemplatesInformal}  
Let $\{y_i\}_{i=0}^{N-1}$ be the candidate patches drawn according to  Model~\ref{model:whiteNoiseIID} or Model~\ref{model:isotropicIID}.  
Fix $L \geq 2$, and let $\{x_\ell\}_{\ell=0}^{L-1} \subset \mathbb{R}^d$ denote the normalized input templates, with $x_{\ell_1} \neq x_{\ell_2}$ for every $\ell_1 \neq \ell_2$.
Let $T \in \mathbb{R}^{+}$ be the template-matching threshold and let $\mathcal{A}$ be the set of extracted patches produced by Algorithm~\ref{alg:particlePickerTemplateMatching}.
In the subsequent reconstruction step, the extracted patches $\mathcal{A}$ are modeled by a GMM with $L$ components as in~\eqref{eqn:GMMmodelMain}, and let $\{\widehat{\mu}_\ell\}_{\ell=0}^{L-1}$ denote the corresponding GMM maximum-likelihood estimators of the component means, as defined in~\eqref{eqn:maximumOfLogLikelihoodMain}.
Then there exists a permutation $\pi:\{0,\dots,L-1\} \to \{0,\dots,L-1\}$ such that, for every $0 \leq \ell \leq L-1$,
\begin{align}
    \lim_{T \to \infty} \lim_{N \to \infty} \frac{\widehat{\mu}_{\pi(\ell)}}{T} = x_\ell,    \label{eqn:nonVanishingEstimator2D-iid}
\end{align}
where the convergence holds in probability.
\end{thm}

Theorem~\ref{thm:classesCentersVersusTemplatesInformal} reveals a pronounced bias introduced by template-based selection: in the spherically symmetric noise setting, the estimated class centroids retain full correlation (Pearson cross-correlation of $1$) with the templates used in the selection stage, even when the input data are pure noise. This is in stark contrast to the unbiased expectation that averaging pure noise yields zero as the number of samples grows. The appearance of the permutation $\pi$ is due to the intrinsic label-switching symmetry of GMMs: the likelihood is invariant under relabeling of the mixture components, so the maximum-likelihood estimator identifies the set of centers only up to a permutation of their indices.

\begin{thm}[Bias of GMM centers with stationary Gaussian noise]
\label{thm:classesCentersVersusTemplatesInformalStationary}  
Assume the same setting as in Theorem~\ref{thm:classesCentersVersusTemplatesInformal}, except that the input candidate patches $\{y_i \}_{i=0}^{N-1}$ are drawn according to Model~\ref{model:stationaryGaussianNoisePatches} with a covariance matrix $\Sigma$ as in~\eqref{eq:Sigma-from-K-1d}, and suppose that the $\alpha$-mixing condition \eqref{eq:alpha-mixing-summability-1d} holds. Then there exists a permutation $\pi:\{0,\dots,L-1\} \to \{0,\dots,L-1\}$ such that, for every $0 \leq \ell \leq L-1$,
\begin{align}
    \lim_{T \to \infty} \lim_{N \to \infty} \frac{\widehat{\mu}_{\pi(\ell)}}{T} = \frac{\Sigma x_\ell}{x_\ell^\top \Sigma x_\ell},
    \label{eqn:nonVanishingEstimator2D-stationary}
\end{align}
where the convergence holds in probability.
\end{thm}

Theorem~\ref{thm:classesCentersVersusTemplatesInformalStationary} shows that the confirmation bias persists even when the noise is non-spherically symmetric. 
While in the spherically symmetric setting of Theorem~\ref{thm:classesCentersVersusTemplatesInformal} the estimated centroids align with the raw templates $x_\ell$ (up to a global scaling by $T$), under stationary Gaussian noise the limiting direction of the estimated centroids is given by $\frac{\Sigma x_\ell}{x_\ell^\top \Sigma x_\ell}$, which is an anisotropic rescaling of the template $x_\ell$ determined by the noise covariance $\Sigma$. 
Thus, even in the presence of realistic spatial correlations, Gaussian-mixture class averaging of patches selected by template matching remains fully biased toward the templates used in the selection stage, rather than vanishing as one might expect under a noise-only model.

\paragraph{The bias mechanism.} 
We now explain the mechanism underlying the observed bias. The key point is that template-matching is a selection operation that changes the data distribution.
Specifically, the extracted patches do not follow the original noise model; rather, they follow the conditional law of $Y$ given the event $\max_{\ell}\langle Y, x_\ell\rangle \ge T$, so the threshold $T$ induces a truncated (selection-biased) distribution.
In particular, when the noise is spherically symmetric, this conditioning breaks rotational invariance and produces a nonzero conditional mean that aligns with the template $x_\ell$.
More precisely, let $\ell^\star=\arg\max_{\ell}\langle Y,x_\ell\rangle$. Then, for patches selected with $\ell^\star=\ell$, the conditional mean is aligned with the template $x_\ell$:
\begin{align}
    \mathbb{E}\!\left[\,Y \,\middle|\, \ell^\star=\ell,\ \max_{j}\langle Y,x_j\rangle \ge T\,\right] \ \parallel\ x_\ell.
\end{align}
For correlated Gaussian noise $Y \sim \mathcal{N}(0,\Sigma)$, the same mechanism holds but the conditional mean aligns with the direction $\Sigma x_\ell$.

However, this alignment does not by itself imply that the maximum-likelihood estimators of the GMM means recover these conditional centroids.
Indeed, the distribution of the extracted patches is generally non-Gaussian, so the postulated GMM model~\eqref{eqn:GMMmodelMain} is misspecified; see Appendix~\ref{subsec:GMM-likelhood-template-selected} for further discussion and analysis.
In particular, the behavior of one center per template emerges only when $T$ is sufficiently large: as $T$ increases, the true mixture model components induced by the template-matching algorithm become effectively well-separated, so the maximum-likelihood estimation of the GMM centers places one center near each such biased mean. In contrast, at moderate thresholds, these components can overlap substantially, and the maximum-likelihood estimator need not recover these conditional means. 
Figure~\ref{fig:3}(a) illustrates this phenomenon: at low threshold values, the centroids do not align with the original templates because the extracted patches are not reliably associated with their corresponding classes. However, as the threshold increases, the estimated centroids progressively align more closely with the original templates. This reveals a trade-off in selecting the threshold during the selection stage, which we discuss further in Section~\ref{sec:discussion}.

\begin{figure}[t!]
    \centering
    \includegraphics[width=0.85\linewidth]{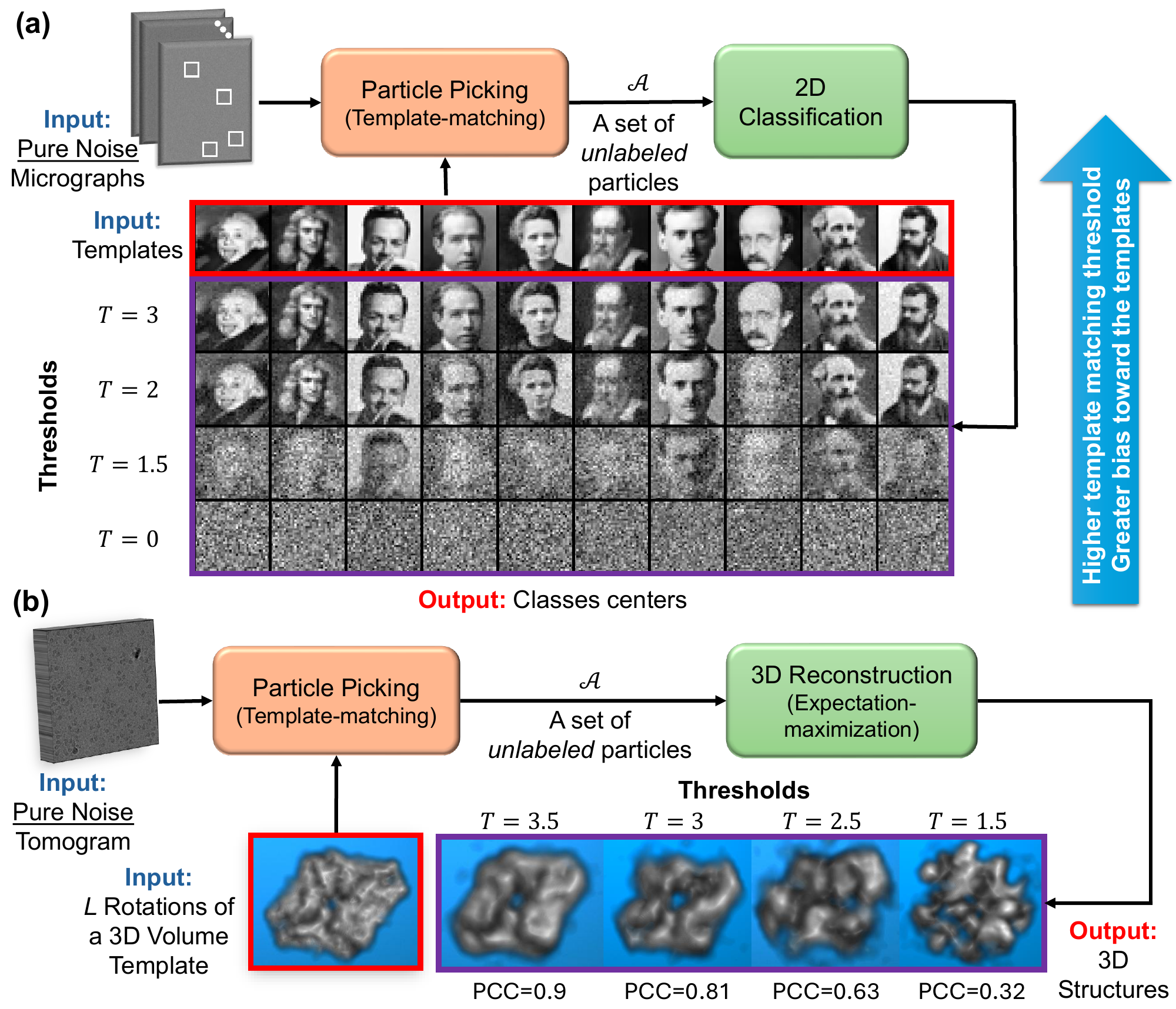}
    \caption{\textbf{Effect of template-matching thresholds on confirmation bias.}
    \textbf{(a)} \textbf{GMM means estimation ("2D classification").}
    Template matching is applied to an observation $y$ from~\eqref{eq:underlying-statistics} under noise Model~\ref{model:whiteNoiseIID} using a bank of $L=10$ templates of prominent physicists.
    The resulting (unlabeled) patches are then classified using a standard expectation-maximization algorithm with $L$ components; as the threshold $T$ increases, the estimated centers become increasingly aligned with the templates.
    Parameters: $M=10^5$ selected patches, image size $d=50\times50$.
    \textbf{(b)} \textbf{3D reconstruction (cryo-ET).}
    Template matching is applied to a pure-noise tomogram from Model~\ref{model:whiteNoiseIID} using $L=50$ rotated templates of beta-galactosidase~\cite{bartesaghi2014structure}.
    Subtomogram extraction followed by 3D refinement yields reconstructions whose similarity to the template volume increases with $T$ (PCC shown).
    Parameters: volume size $d=24\times24\times24$, $M=2\times10^4$ selected subtomograms, and noise variance $\sigma^2=1$.}
    \label{fig:3}
\end{figure}

\subsection{Finite-sample analysis of template-matching selection bias} \label{seubsec:finite-sample-effects}

The analytical results derived above are in the asymptotic regime where the number of selected patches satisfies $M \to \infty$. In practice, however, applications involve a large but finite number of patches, so it is important to understand how finite-sample effects influence the observed confirmation bias. To this end, we study the mean squared error $\mathbb{E}_{\mathcal{A}} [\|\widehat{\mu}_\ell(\mathcal{A}) - T v_\ell\|_2^2]$, which quantifies the discrepancy between the estimated class centroid $\widehat{\mu}_\ell$ and the asymptotic bias direction
\begin{align}
    v_\ell \triangleq \frac{\Sigma x_\ell}{x_\ell^\top \Sigma x_\ell}, \label{eqn:v_ell_def_2}
\end{align}
as in the right-hand side of~\eqref{eqn:nonVanishingEstimator2D-stationary}. The following proposition, proved in Appendix~\ref{sec:app_sampleComplexity}, characterizes the resulting finite-sample impact on the confirmation bias.

\begin{proposition}[Finite-sample effect on confirmation bias]
\label{prop:finite-sample-MSE}
Consider the setting of Theorem~\ref{thm:classesCentersVersusTemplatesInformalStationary}.  
Let $v_\ell$ be as in~\eqref{eqn:v_ell_def_2} and $\widehat{\mu}_\ell(\mathcal{A})$ denote the GMM maximum-likelihood estimator of the $\ell$-th class mean obtained from the extracted set of patches $\mathcal{A}$ in~\eqref{eqn:maximumOfLogLikelihoodMain}, with $|\mathcal{A}| = M$.
Then, there exist constants $M_0, T_0, C_1, C_2 > 0$, independent of $d$, $M$, and $T$, such that for every $M \geq M_0$ and $T \geq T_0$,
\begin{align}
    \mathbb{E}_{\mathcal{A}}\Big[\big\|\widehat{\mu}_\ell(\mathcal{A}) - T v_\ell\big\|_2^2\Big] \le C_1\,\frac{d}{M} + \frac{C_2}{T^2},
    \label{eq:finite-MSE-vs-Tx}
\end{align}
where the expectation is taken over the set of extracted particles $\mathcal{A}$.
\end{proposition}

The first term in the bound~\eqref{eq:finite-MSE-vs-Tx} has the classical parametric form of $d$-dimensional component means estimation under a well-separated GMM. The constant $C_1$ also captures the effect of temporal dependence between patches. Under the summability condition~\eqref{eq:alpha-mixing-summability-1d}, one can bound $C_1$ in terms of $1 + \sum_{r=1}^\infty \alpha(r)^{\delta/(2+\delta)}$. When the patches are more strongly correlated, the mixing coefficients $\alpha(r)$ decay more slowly, the sum increases, and $C_1$ increases accordingly. In the i.i.d.\ case, $\alpha(r)=0$ for all $r\ge 1$, the sum vanishes and $C_1$ reduces to the usual i.i.d.\ parametric constant. The second term, $C_2/T^2$, quantifies the residual bias between the population centers and the high-threshold limit $T v_\ell$.

An immediate consequence is that, for a fixed number of extracted patches $M$, smaller patches (i.e. lower dimension $d$) concentrate more rapidly around the biased templates and are therefore more susceptible to confirmation bias than larger patches. Further details and the proof of Proposition~\ref{prop:finite-sample-MSE} are given in Appendix~\ref{sec:app_sampleComplexity}.

\section{Applications to cryo-EM and cryo-ET}
\label{sec:applications}

In this section we specialize the framework and results of Section~\ref{sec:model} to standard single-particle cryo-EM and cryo-ET workflows. 

\subsection{2D classification}

In single-particle cryo-EM, the signals $\{s_\ell\}$ in~\eqref{eq:general-MTD-model} are not arbitrary: they arise as tomographic projections of a common 3D volume. More precisely, there exists an underlying volume $V : \mathbb{R}^3 \to \mathbb{R}$ such that $s_\ell = \Pi\big(R_\ell \cdot V\big)$, for $\ell = 0,\dots,L-1$, where $\Pi$ denotes the tomographic projection operator, $R_\ell \in \mathsf{SO}(3)$ is a 3D rotation acting via $(R_\ell \cdot V)(x) \triangleq V(R_\ell^{-1} x)$, and $L$ is the number of discretized viewing directions.
In practice, the set of admissible orientations is continuous, but for the current analysis we work with a finite discretization $\{R_\ell\}_{\ell=0}^{L-1} \subset \mathsf{SO}(3)$.
The templates used in the template-matching selector $\{x_\ell\}$, are generated in the same way from a template volume $V_{\text{template}} : \mathbb{R}^3 \to \mathbb{R}$:
\begin{align}
    x_\ell = \Pi\big(R_\ell \cdot V_{\text{template}}\big), \qquad \ell = 0,\dots,L-1.
    \label{eq:cryoEM-templates}
\end{align}

The 2D classification step in cryo-EM is commonly modeled using a GMM, exactly of the form introduced in~\eqref{eqn:GMMmodelMain}. Consequently, the results obtained for GMM-based class averaging developed in Section~\ref{sec:model} apply directly to the cryo-EM 2D classification pipeline.
We note, however, that the simplified GMM model~\eqref{eqn:GMMmodelMain} omits two standard aspects of practical cryo-EM workflows. First, in-plane alignment is typically performed, whereby each particle is rotated and translated in the image plane to best match a class average~\cite{kimanius2021new}. Second, the number of downstream 2D classes need not equal the number of picking templates. In the theory, we exclude in-plane alignment and take $K=L$ to isolate the particle-picking-induced bias in its clearest analytically tractable form. These simplifications are not essential to the qualitative mechanism: when $K\neq L$, template-based selection still induces a biased conditional distribution on the extracted patches, but the downstream GMM no longer yields a one-to-one correspondence between templates and fitted classes; if $K<L$, several template-induced modes may merge, whereas if $K>L$, a single such mode may split into multiple fitted components. As shown empirically in Section~\ref{sec:empirical}, the core confirmation-bias phenomena persist in these more realistic settings.

\subsection{3D volume reconstruction}
The theoretical framework developed above for 2D classification extends naturally to the standard 3D cryo-EM and cryo-ET reconstruction pipeline. 
In single-particle cryo-EM, particle picking is performed on 2D micrographs, whereas in cryo-ET it operates directly on 3D tomograms, yielding volumetric subtomograms. Both modalities can be described within a unified forward model: each extracted, discretized observation $z_i \in \mathbb{R}^d$ (a 2D particle image in cryo-EM or a 3D subtomogram in cryo-ET) is treated as a noisy, randomly rotated observation of an underlying 3D volume $V \in \mathbb{R}^{d_{\mathrm{vol}}}$,
\begin{align}
    z_i = \Pi\bigl(R_i \cdot V\bigr) + \epsilon_i, \qquad i = 0,\dots,M-1,
    \label{eq:unified-postulated-model-main}
\end{align}
where $\{R_i\}_{i=0}^{M-1} \subset \mathsf{SO}(3)$ are random rotations and $\{\epsilon_i\}_{i=0}^{M-1} \overset{\text{i.i.d.}}{\sim} \mathcal{N}(0,\Sigma_V)$ are additive Gaussian noise terms with covariance $\Sigma_V \succ 0$. In this unified view, cryo-ET corresponds to $\Pi = I$ (volumetric templates and subtomograms), while cryo-EM corresponds to a tomographic projection operator $\Pi$ (2D projection templates and particle images). The goal of the reconstruction step is to estimate $V$ in~\eqref{eq:unified-postulated-model-main} from the extracted observations $\{z_i\}_{i=0}^{M-1}$.

Under the postulated model~\eqref{eq:unified-postulated-model-main}, integrating over the latent rotations yields the log-likelihood
\begin{align}
    \mathcal{L}\big(V; \{z_i\}_{i=0}^{M-1}\big) = \sum_{i=0}^{M-1} \log \Bigg( \int_{R \in \mathsf{SO}(3)} \mathcal{N}\big( z_i ; \Pi(R \cdot V), \Sigma_V \big) \, d\rho(R)\Bigg),
    \label{eq:unified-likelihood-main}
\end{align}
where $\rho$ is a prior distribution on rotations (often taken to be uniform on $\mathsf{SO}(3)$). The corresponding maximum-likelihood estimator of the volume is
\begin{align}
    \widehat{V} \triangleq \argmax_{V} \mathcal{L}\big(V; \{z_i\}_{i=0}^{M-1}\big).
    \label{eq:unified-MLE-main}
\end{align}

For analytical tractability, we discretize $\mathsf{SO}(3)$ to a finite set $\{R_\ell\}_{\ell=0}^{L-1}$ and assume that the same set is used both in the template-matching stage and in the subsequent reconstruction. The template bank fed to the particle picker is then $\{x_\ell\}_{\ell=0}^{L-1}$ as defined in~\eqref{eq:cryoEM-templates}, and Algorithm~\ref{alg:particlePickerTemplateMatching} is applied with these templates and threshold $T$. In addition, we make the following assumption:

\begin{assum}\label{assump:Pi-identifiability}
Under the unified model~\eqref{eq:unified-postulated-model-main}, the map $V \mapsto \{\Pi(R_\ell\cdot V)\}_{\ell=0}^{L-1}$ determines $V$ up to a global rotation. That is, if
\begin{align}
    \bigl\{\Pi(R_\ell\cdot V_1)\bigr\}_{\ell=0}^{L-1} = \bigl\{\Pi(R_\ell\cdot V_2)\bigr\}_{\ell=0}^{L-1},
\end{align}
then there exists $R\in\mathsf{SO}(3)$ such that $V_2 = R\cdot V_1$.
\end{assum}
This property holds automatically in the cryo-ET case $\Pi = I$ if there are no internal symmetries to the molecule, and it coincides with standard identifiability assumptions in cryo-EM. The following corollary (proved in Appendix~\ref{sec:proofOfCorollary3Dreconstruction}) shows that the same confirmation bias phenomenon persists at the level of 3D reconstruction.

\begin{corollary}
[Bias of 3D volume estimate with spherically symmetric noise]
\label{thm:3DstructureInformal}
Let $\{y_i\}_{i=0}^{N-1}$ be the candidate patches drawn according to  Model~\ref{model:whiteNoiseIID} or Model~\ref{model:isotropicIID}. Fix $L \geq 2$, and let $V_{\mathrm{template}} \in \mathbb{R}^{d_{\text{vol}}}$ denote a fixed 3D template volume. Let $\{R_\ell\}_{\ell=0}^{L-1} \subset \mathsf{SO}(3)$ be a collection of distinct rotations, and define the 3D template bank $x_\ell \triangleq \Pi (R_\ell \cdot V_{\mathrm{template}})$ according to~\eqref{eq:cryoEM-templates}, with $x_{\ell_1} \neq x_{\ell_2}$ for every $\ell_1 \neq \ell_2$. Run Algorithm~\ref{alg:particlePickerTemplateMatching} with templates $\{x_\ell\}_{\ell=0}^{L-1}$ and threshold $T \in \mathbb{R}^{+}$, and let
$\mathcal{A} \subset \{y_i\}_{i=0}^{N-1}$ be the resulting set of extracted particles. In the subsequent 3D reconstruction step, model the extracted particles $\mathcal{A}$ via the likelihood~\eqref{eq:unified-likelihood-main}, and let $\widehat{V}$ denote the corresponding maximum-likelihood estimator as in~\eqref{eq:unified-MLE-main}, using the same  rotations $\{R_\ell\}_{\ell=0}^{L-1}$. Assume that Assumption~\ref{assump:Pi-identifiability} is satisfied for this collection of discretized rotations.
Then there exists a rotation $R \in \mathsf{SO}(3)$ such that
\begin{align}
    \lim_{T \to \infty} \lim_{N \to \infty} \frac{\widehat{V}}{T} = R \cdot V_{\mathrm{template}},
    \label{eq:cryoET-white-limit}
\end{align}
where the convergence holds in probability.
\end{corollary}

In words, under the white or spherically symmetric noise model, when both the threshold $T$ and the number of candidate observations $N$ are large, the reconstructed structure satisfies $\widehat{V} \approx T \,(R \cdot V_{\mathrm{template}})$ for some global rotation $R \in \mathsf{SO}(3)$. Corollary~\ref{thm:3DstructureInformal} relies on two key ingredients: (i) \emph{rotation-grid alignment}: the same discrete set $\{R_\ell\}$ must be used to generate the template bank~\eqref{eq:cryoEM-templates} and in the reconstruction likelihood~\eqref{eq:unified-likelihood-main}; and (ii) \emph{identifiability} of $V$ from the collection $\{\Pi(R_\ell \cdot V)\}_{\ell=0}^{L-1}$ up to a global rotation, as stated in Assumption~\ref{assump:Pi-identifiability}. When the rotation grids are not aligned, the maximum-likelihood estimator $\widehat{V}$ no longer reproduces the template exactly, but numerical experiments (see Figure~\ref{fig:3}(b)) indicate that $\widehat{V}$ remains highly correlated with $V_{\mathrm{template}}$.

\section{Empirical results}
\label{sec:empirical}
This section presents empirical results that corroborate and extend the theoretical findings from Sections~\ref{sec:model}--\ref{sec:applications}. We examine how deviations from the simplified model, such as the presence of true particles (i.e., rather than purely noise micrographs), and the use of in-plane alignment, affect the outcomes of 2D classification and 3D reconstruction. Additionally, we evaluate the behavior of modern particle-picking approaches, in particular deep learning-based methods, to assess their susceptibility to confirmation bias in realistic cryo-EM workflows.

Importantly, the 2D classification and 3D reconstruction methods employed in our analysis are reference-free, i.e., they rely on random initialization rather than incorporating prior templates. This design allows us to isolate the confirmation bias introduced during picking. However, we note that the non-convex nature of the downstream reconstruction tasks may also contribute to residual bias. 

\subsection{Particle picking based on template-matching}

In the previous sections, we introduced a simplified model for particle picking based on template matching, assuming that the input micrograph or tomogram consists solely of background noise. 
While this model offers important insights into the confirmation bias introduced by template-matching particle-picking, it does not fully reflect the complexity of modern cryo-EM reconstruction pipelines. To address this gap, we now turn to empirical experiments that reflect more realistic workflows. Below, we outline the experimental setup and highlight the key differences from the theoretical model. The corresponding results are presented in Figure~\ref{fig:5}.

\paragraph{Micrograph generation and particle extraction.}
In the theoretical analysis presented in Section \ref{sec:model}, we examined the case where the input micrograph consists solely of noise. This model can be considered an asymptotic case where the SNR of the micrograph approaches zero. However, in practice, true particles are located at unknown positions within the micrograph, with $\text{SNR} > 0$. 

We generate synthetic micrographs with non-overlapping true particles by iteratively placing particles at random locations and rejecting any proposed position that would overlap a previously placed particle. Each micrograph has size $2048 \times 2048$ and contains $200$ projections, randomly drawn from a bank of $50$ projections of the same structure, each of size $d = 36 \times 36$. White Gaussian noise with noise variance $\sigma^2$ is added to achieve an SNR of $\approx 1/25$, where $\text{SNR} \triangleq \frac{\mathrm{Var}(x_i)}{\sigma^2}$ and $x_i$ denotes a normalized particle projection. Using a template-matching threshold of $T = 4$, we collected $M = 2 \times 10^5$ particles in total, corresponding to about $350$ picked particles per micrograph, roughly $75 \%$  more than the $200$ true planted particles.
For the particle extraction, we employ template matching with local maxima, as outlined in Section~\ref{subsec:local-maxima}. A  detailed simulation of this procedure, including the construction of synthetic micrographs and the behavior of template matching under overlapping conditions, is presented in Appendix~\ref{sec:methods} and described explicitly in Algorithm~\ref{alg:particlePickerTemplateMatchingWithOverlapping}.

\paragraph{2D classification.} 
The theoretical model in Section~\ref{sec:model}, excludes in-plane alignment and assumes that the number of templates used during particle picking matches the number of 2D classes, an assumption that is generally not satisfied in practice. In this section, we consider a more realistic cryo-EM pipeline that includes in-plane alignment and allows the numbers of templates and 2D classes to differ.

To align with standard practice, we use RELION’s VDAM algorithm for 2D classification~\cite{kimanius2021new}. RELION is widely adopted in the cryo-EM community and VDAM, introduced in version 4.0, is a variable-metric gradient-based optimizer with adaptive moment estimation, inspired by Adam~\cite{diederik2014adam} and implemented with mini-batches for computational efficiency. Unless stated otherwise, we run VDAM with $L = 20$ classes, 200 iterations, regularization parameter 3, no CTF correction, and in-plane alignment using an angular sampling of $12^\circ$.

\paragraph{3D ab-initio modeling.} 
We also performed ab initio 3D reconstruction using RELION’s VDAM algorithm.
In this context, ab initio reconstruction refers to generating an initial 3D structure without relying on any external reference, thereby reducing potential sources of bias.
We employed RELION’s VDAM algorithm for this task with the following parameters: $L = 1$ class, 200 iterations, a regularization parameter of 4, no CTF correction, and no symmetry constraints (i.e., $C1$ symmetry). 
Unlike the idealized setting of Corollary~\ref{thm:3DstructureInformal}, this empirical reconstruction does not use the same discrete rotation grid as the one used during template matching. Rather, the experiment is intentionally carried out in a more realistic regime, in which the picking and reconstruction stages use different angular discretizations. This allows us to assess the extent to which the predicted confirmation-bias phenomenon persists under practical grid mismatch.

\paragraph{Results.} 
Figure~\ref{fig:5} presents results from the empirical setup in which the micrograph contains genuine particle projections, but the template used for particle picking corresponds to a different structure. In this scenario, the 2D class averages reflect a mixture of two competing sources: the true underlying particles and the mismatched template. As a result, while some class averages recover features of the actual particles, others resemble the template used for picking. Similarly, the ab initio 3D reconstruction derived from these particles exhibits structural elements from both the true volume and the template. This highlights the risk of introducing structural bias at the particle-picking stage, where even in the presence of a signal, the use of an incorrect template can propagate artifacts into downstream analyses and compromise the reliability of the final reconstruction.
In addition, we note that the use of a markedly dissimilar template in Figure~\ref{fig:5} is intended to make the bias mechanism visually transparent. In practice, however, the more dangerous regime may be one in which the incorrect template is structurally similar to the true particle, since the resulting bias can be substantially harder to detect.

\begin{figure}[t!]
    \centering
    \includegraphics[width=1.0\linewidth]{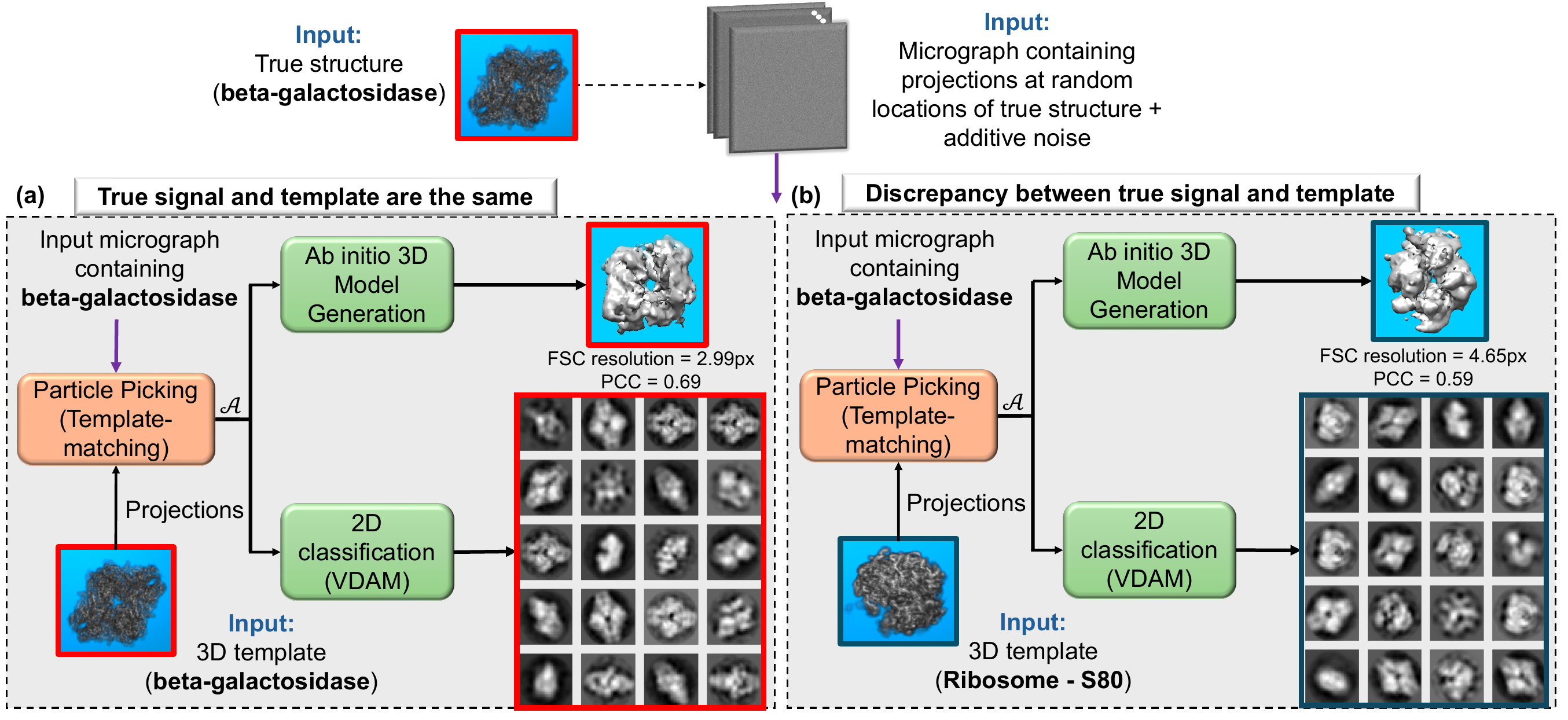}
    \caption{\textbf{Confirmation bias in particle picking when particles are present.}
    Micrographs contain true particle projections of beta-galactosidase~\cite{bartesaghi2014structure} corrupted by additive white Gaussian noise (SNR $\approx 1/25$), and particle picking is performed by template matching using either:
    \textbf{(a)} projections of the true beta-galactosidase structure, or
    \textbf{(b)} projections of an unrelated structure, the Plasmodium falciparum 80S ribosome~\cite{wong2014cryo}.
    In both cases, the extracted particles are processed by 2D classification and ab initio 3D reconstruction in RELION (VDAM)~\cite{kimanius2021new}.
    With matched templates \textbf{(a)}, the 2D classes and 3D map resemble the ground truth (PCC $=0.69$, Fourier Shell Correlation (FSC) resolution $=2.99$ pixels);
    with mismatched templates \textbf{(b)}, spurious ribosome-like classes appear and the 3D reconstruction becomes a hybrid, with degraded agreement (PCC $=0.59$,  FSC resolution $=4.65$ pixels).
    Settings: patch size $d=36\times36$, $M=2\times10^5$ extracted particles, threshold $T=4$; VDAM ran for 200 iterations without CTF correction.}
    \label{fig:5}
\end{figure}

\subsection{Particle picking based on Topaz neural network}
Deep learning–based particle pickers are now widely used in cryo-EM and cryo-ET, with Topaz being a prominent example~\cite{bepler2019positive}. Unlike template matching, which introduces bias through  more intricate mechanisms, such as explicit similarity to predefined structures, deep neural networks can inherit bias from their training data, architecture, and optimization dynamics. As a result, they may still exhibit confirmation bias by preferentially selecting particles that resemble features emphasized during training. 
To probe this effect, we consider both the pre-trained Topaz model~\cite{bepler2019positive}, used without additional training, and Topaz models trained on user-specified particle sets.

\paragraph{Pre-trained Topaz network.}
For the pre-trained Topaz network, we applied Topaz to micrographs of size $2048 \times 2048$, using an internal down-sampling factor of $s=2$. During extraction, we set the particle radius to $r=9$, corresponding to a patch size of $d = 36 \times 36$ on the original grid, and used a coordinate scaling factor of $x=2$ to account for the down-sampling.
We used a score threshold of $-1$ for the Topaz output, corresponding to a predicted particle probability of $p \ge 0.25$. Although the default threshold in Topaz and RELION is typically $0$ (that is, $p \ge 0.5$), in practice users often lower this threshold to improve recall, especially for challenging low-SNR datasets. We adopted this more permissive setting to extract a sufficiently large particle stack ($M = 2 \times 10^5$) within a reasonable runtime, enabling a direct comparison with the template-matching experiments in both dataset size and effective SNR. Thus, the observed structure-from-noise artifacts arise under realistic conditions, rather than from forcing the network to select arbitrary low-confidence patches.

Figure~\ref{fig:6}(a-c) shows that the pre-trained Topaz network induces structured bias. The mean of the selected particles by Topaz in Figure~\ref{fig:6}(a) displays a faint central circular feature, whereas under an unbiased model the mean of pure-noise patches should converge to zero. A similar central pattern appears in Figure~\ref{fig:6}(b), which shows the mean of particles selected by template matching with a large, heterogeneous template set, suggesting that both methods imprint structure on the picks. In contrast, the mean of uniformly random patches from the same noise micrographs (Figure~\ref{fig:6}(c)) is essentially zero.

To assess the impact on 3D refinement, we perform a controlled experiment in which pure-noise micrographs are split into two independent halves. Each half is processed with one of three picking strategies: template matching, Topaz, or random selection. Independent ab initio 3D reconstructions are then computed from each half using RELION’s VDAM algorithm, and their consistency is measured by the Fourier Shell Correlation (FSC), shown in Figure~\ref{fig:6}(d). Both Topaz and template matching yield high FSC curves between the halves, indicating that each method produces reproducible, method-dependent structure from noise. In contrast, random selection yields near-zero FSC, as expected under an unbiased model. Among the three, template matching exhibits the strongest bias, followed by Topaz. These results further demonstrate that the FSC gold standard can be misled by systematic errors, in line with the observations of~\cite{sorzano2022bias}.

\begin{figure}[t!]
    \centering
    \includegraphics[width=0.9\linewidth]{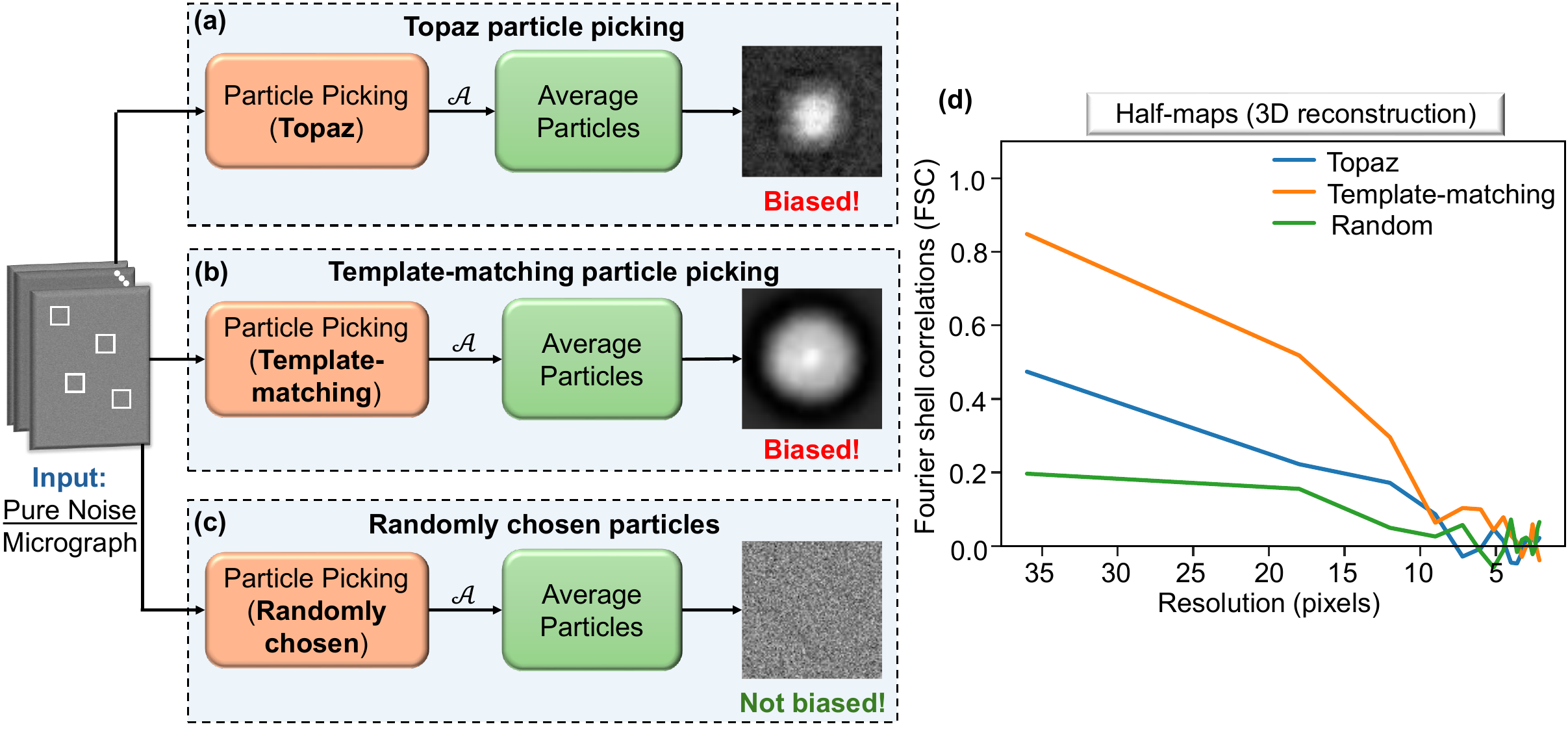}
    \caption{\textbf{Confirmation bias in pre-trained Topaz particle picking.}
    \textbf{(a--c)} Mean of picked particles from pure-noise micrographs: (a) Topaz picks yield a nonzero mean with a prominent central circular feature; (b) template matching with a large bank of Plasmodium falciparum 80S ribosome projection templates~\cite{wong2014cryo} produces a similar pattern; (c) random picks average to zero.
    \textbf{(d)} Bias assessment via half-map reproducibility: noise micrographs are split into two independent halves; on each half we pick $M=10^5$ particles using Topaz, template matching, or random selection, and reconstruct two half-maps using RELION VDAM. We report the Fourier Shell Correlation (FSC) between the half-maps. Settings: patch size $d=36\times36$; VDAM ran for 200 iterations without CTF correction. }

    \label{fig:6}
\end{figure}

\begin{figure*}[!t]
    \centering
    \includegraphics[width=1.0\linewidth]{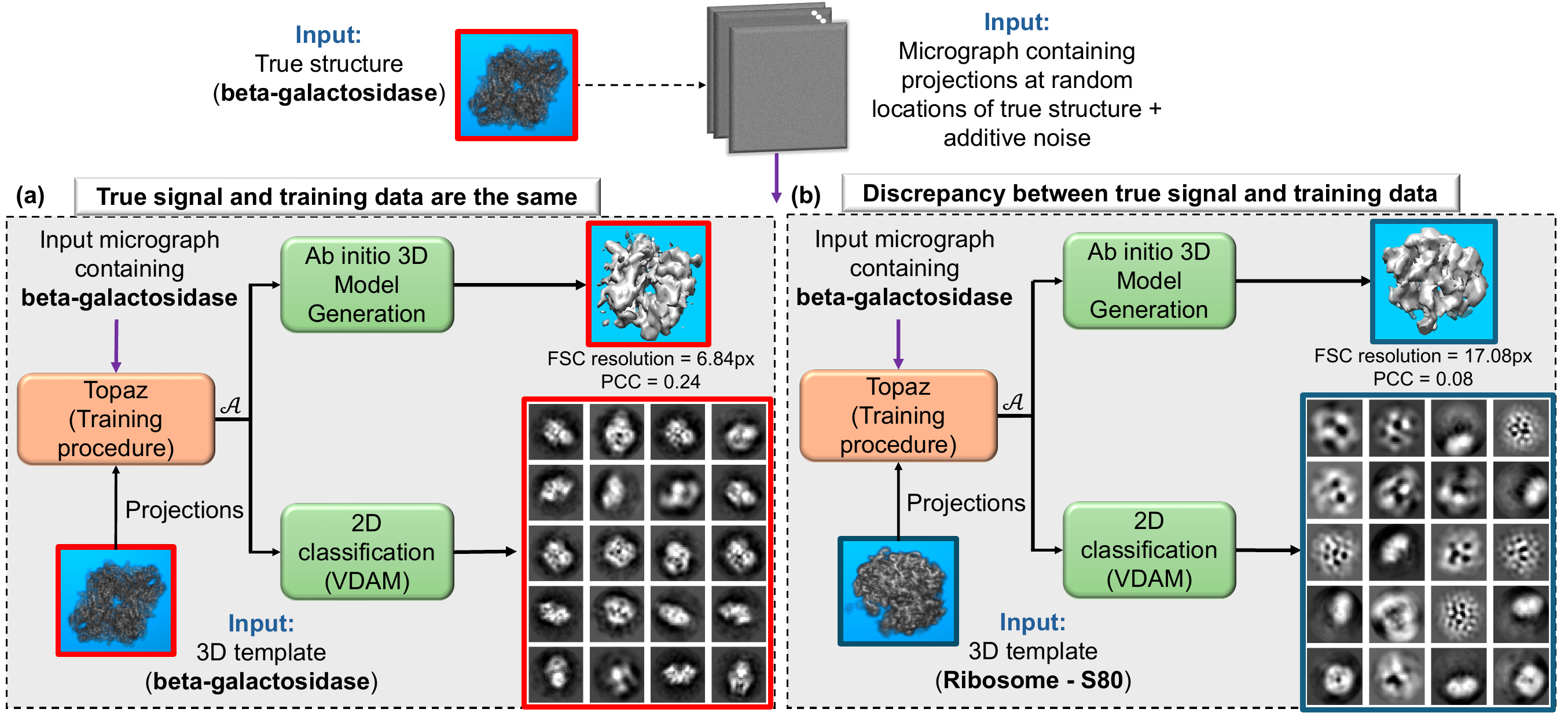}
    \caption{\textbf{Confirmation bias in trained Topaz particle picking when true particles are present.}
    Micrographs contain true particle projections of beta-galactosidase~\cite{bartesaghi2014structure} corrupted by additive white Gaussian noise. Particle picking is performed using a Topaz model trained on either:
    \textbf{(a)} projections of the true beta-galactosidase structure (matched training data), or
    \textbf{(b)} projections of an unrelated structure, the Plasmodium falciparum 80S ribosome~\cite{wong2014cryo} (mismatched training data).
    In both cases, the extracted particles are processed by 2D classification and ab initio 3D reconstruction in RELION (VDAM)~\cite{kimanius2021new}.
    With the matched training data \textbf{(a)}, the 2D classes and 3D map successfully recover the beta-galactosidase structure (PCC $= 0.24$, FSC resolution $= 6.84$ pixels).
    Conversely, when trained on the mismatched structure \textbf{(b)}, the network preferentially selects noise features that align with its learned prior. As a result, the 3D reconstruction of the true underlying signal is significantly degraded.}
    \label{fig:9}
\end{figure*}

\paragraph{Trained Topaz network.}
To compare deep-learning-based picking with classical template matching under realistic conditions, we tested whether Topaz's learned prior can override weak true signal when real particles are present. Following the setup of Figure~\ref{fig:5}, we generated synthetic micrographs containing true beta-galactosidase projections at low SNR ($\approx 1/25$). Starting from the same vanilla pre-trained Topaz model used in the previous experiment, we re-trained two networks: one on positive examples from beta-galactosidase, and one on an unrelated structure, the \textit{Plasmodium falciparum} 80S ribosome. In both cases, training was run for 10 epochs on 20 synthetic micrographs, each containing 200 particle projections embedded in white Gaussian noise at the same low-SNR level as the evaluation data.

As shown in Figure~\ref{fig:9}, when Topaz is trained on the mismatched ribosome data, the extracted particles yield 2D class averages and a 3D reconstruction with clear ribosome-like features. The reconstruction quality is also substantially worse than in the matched-training case (PCC 0.08 versus 0.24, and FSC resolution 17.08 pixels versus 6.84 pixels). This shows that, much like template matching, a dedicated neural-network picker can be strongly affected by the specific structural prior imposed during training. Rather than introducing only a generic bias toward centered, particle-like blobs, the network preferentially selects low-SNR data that resemble its training distribution, thereby propagating structure-specific hallucinations into the downstream pipeline.

\section{Discussion and Outlook} \label{sec:discussion}

To place the bias analyzed in this work in context, we compare it with other sources of bias that arise throughout the cryo-EM and cryo-ET reconstruction pipelines. We then discuss mitigation strategies and conclude with several promising directions for future research.

\subsection{Different forms of confirmation bias in cryo-EM}
\label{sec:different-statistical-models}

Several stages of the cryo-EM and cryo-ET pipeline are vulnerable to confirmation bias~\cite[Section 4.1]{sorzano2022bias}, including particle picking, 2D classification, and 3D reconstruction. The present work focuses on particle picking~\cite{scheres2015semi, heimowitz2018apple, bepler2019positive,eldar2020klt}, while related biases in low-SNR EM-based reconstruction have been analyzed in~\cite{balanov2025expectation}.
These stages introduce bias through a different statistical mechanism. Figure~\ref{fig:7} compares confirmation bias in particle picking with that in 2D classification. Below, we contrast the Einstein from Noise phenomenon from Section~\ref{sec:intro} and the bias observed in 2D classification~\cite{balanov2025confirmation} with the bias arising from particle picking.

\paragraph{The Einstein from Noise phenomenon.}  
Although frequently mentioned in the context of template-based particle picking, the Einstein from Noise effect arises from a distinct model. In the Einstein from Noise experiment, all observations are aligned and averaged with respect to a fixed template. In contrast, particle picking selects a subset of patches with the highest template correlations. As a result, particle picking produces reconstructions that asymptotically match the template up to global scaling (Theorems~\ref{thm:classesCentersVersusTemplatesInformal} and~\ref{thm:3DstructureInformal}), whereas Einstein from Noise yields an average that is only positively correlated with the template but not identical~\cite{balanov2024einstein}.

\paragraph{Confirmation bias in 2D classification and 3D volume reconstruction.}  
In later stages of cryo-EM processing, such as 2D classification and 3D volume estimation, iterative algorithms often use predefined templates for initialization. When applied to data with no true signal, these methods may still produce output that resembles the initial templates, even though the input data provides no evidence for such structure~\cite{balanov2025confirmation}.

Figure~\ref{fig:7}(b) illustrates this effect in 2D classification. Here, the input consisted entirely of pure noise, and the initial templates were $L=30$ distinct projections of the Beta-galactosidase structure of size $d=36 \times 36$. A total of $M=5 \times 10^5$ particles were processed by EM-based 2D classification for 20 iterations. Even without signal, the resulting class averages remain positively correlated with the templates. Unlike template-based particle picking, however, they do not reproduce the templates exactly, although they remain visually similar. Many modern 2D classification and 3D ab initio reconstruction methods use reference-free initialization, such as random initialization, to mitigate this bias~\cite{kimanius2021new}. Nevertheless, similar confirmation effects can still arise when templates are introduced explicitly or implicitly.

\begin{figure}[t!]
    \centering
    \includegraphics[width=0.65\linewidth]{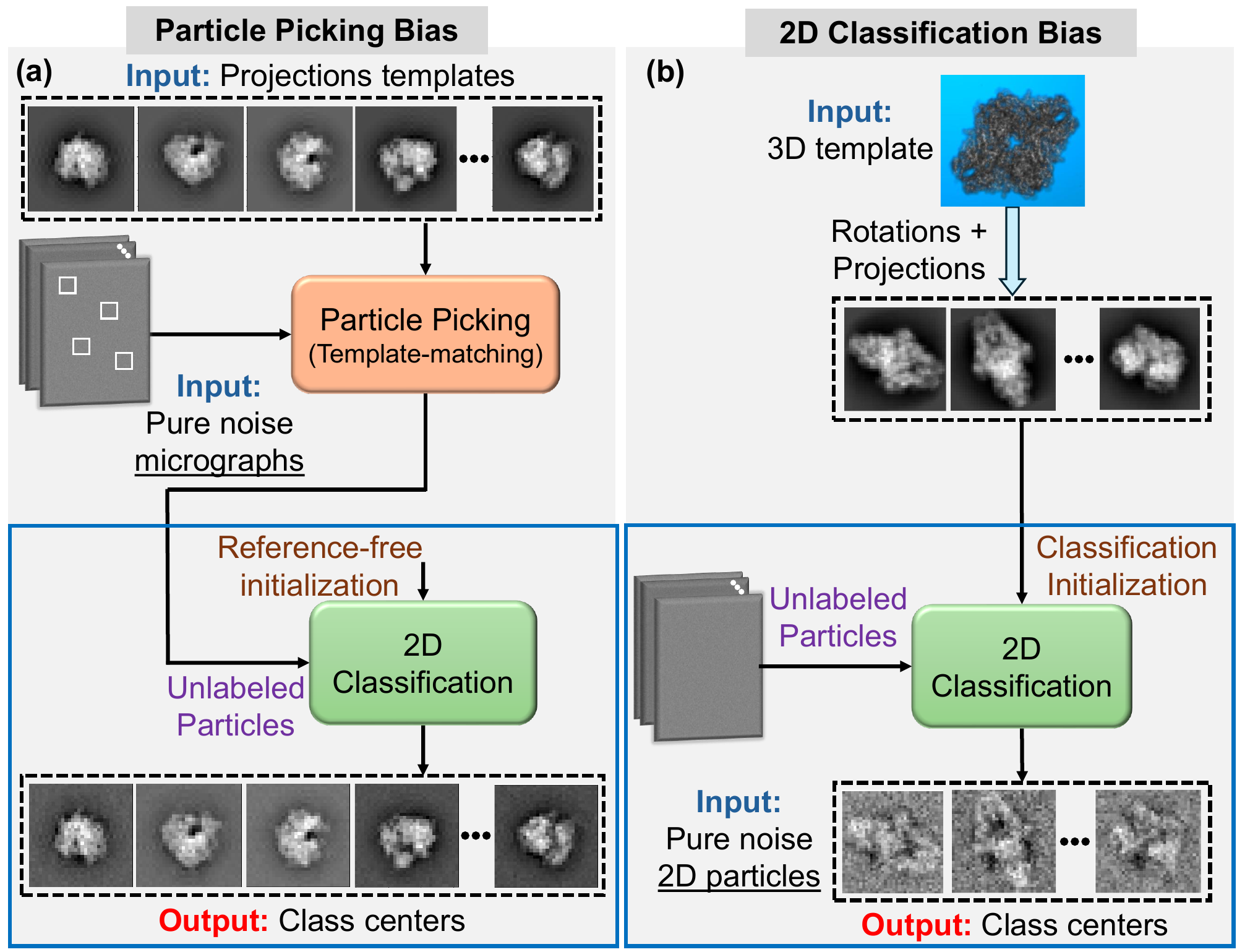}
    \caption{\textbf{Comparison of confirmation bias in particle picking versus 2D classification.} 
    \textbf{(a)} Confirmation bias during the particle-picking stage, the primary bias mechanism analyzed in this work. 
    \textbf{(b)} Confirmation bias during the 2D classification stage. Here, the input consists solely of pure noise particles, and the classification is initialized using predefined templates. Despite the absence of signal in the data, the resulting class averages exhibit a bias toward the initial templates, indicating a positive correlation (i.e., confirmation bias) \cite{balanov2025confirmation}. Notably, the class centers in Panel (a) appear to more closely resemble the initial templates than those in Panel (b).}
    \label{fig:7}
\end{figure}

\subsection{Strategies for mitigating confirmation bias}

Previous studies have explored the detection and mitigation of various forms of bias in cryo-EM. For example, recent work~\cite{sorzano2022bias} has proposed several practical strategies to reduce reconstruction bias, including identifying bias at the level of estimated parameters, performing multiple independent reconstructions, averaging parameter estimates to reduce variance, applying phase randomization at the image level, and validating algorithms across diverse preprocessing and reconstruction settings. Similarly, an additional work~\cite{lucas2023baited} provided an empirical analysis of template bias in high-resolution 2D template matching and proposed practical conditions and diagnostics to reduce template-induced bias in reconstruction. Despite these advances, the particle-picking stage remains particularly vulnerable. Here, we focus specifically on computational strategies to mitigate bias during particle picking, incorporating both practical considerations and new theoretically grounded approaches.

\paragraph{Statistical control of selection thresholds.}

Despite its importance, the threshold $T$ is typically chosen heuristically or manually, without rigorous statistical justification. A more principled alternative is to set thresholds using multiple hypothesis testing frameworks that control the false discovery rate (FDR)~\cite{benjamini1995controlling}. These techniques guarantee that the expected proportion of false positives remains below a predefined level, while typically achieving high statistical power. As a result, they help reduce the selection of spurious particles that arise due to random correlations with the templates. However, these methods are often specifically tailored to well-specified statistical models and may not generalize easily to more complex or poorly understood settings~\cite{eldar2024object}. Recent developments, such as knockoff-based procedures, offer a promising direction for extending FDR control to more flexible and model-agnostic scenarios by introducing synthetic variables that mimic the data's dependence structure~\cite{barber2015controlling}.

\paragraph{Template filtering and design.}
Several tools in current cryo-EM pipelines aim to reduce bias by preprocessing the templates themselves~\cite{scheres2012relion}. For example, low-pass filtering the templates reduces their high-frequency features, which are most susceptible to overfitting by noise. This encourages the selection of broader structural motifs rather than fine-grained details, mitigating high-frequency confirmation bias. Our theoretical model (see Section~\ref{sec:model}) shows that this practice leads to reconstructions that converge toward low-pass-filtered versions of the templates. While not a complete remedy, such filtering is routinely employed and is supported by empirical evidence, including the analysis presented in~\cite{balanov2024einstein}.

\paragraph{Template-free picking and alternative similarity measures.}
A natural way to mitigate confirmation bias is to reduce or avoid structure-specific templates during particle picking. In practical cryo-EM workflows, this is often done by starting with template-free candidate generation, such as blob- or Laplacian-of-Gaussian (LoG)-based picking, followed by 2D classification. The resulting class averages can then be reused as internal references in later picking rounds. Unlike classical template matching with an external structural reference, blob/LoG methods rely only on weak, generic, low-resolution shape priors rather than molecule-specific templates. Accordingly, based on the results of this work, one expects these methods to favor coarse, blob-like, low-frequency structure rather than detailed structural features.

More broadly, several template-free approaches have been proposed for unsupervised particle detection without external references~\cite{heimowitz2018apple}. Such methods reduce the direct influence of externally imposed structural priors and therefore offer a potentially safer alternative to template-based picking. However, they are also expected to become unreliable below some SNR threshold.

A complementary mitigation strategy within the template-based paradigm is to refine the similarity measure used during matching. Rather than relying on standard cross-correlation, one may consider more robust criteria, such as cross-entropy-based measures~\cite{sorzano2010clustering}. While promising, the design, analysis, and empirical validation of such alternatives remain largely open challenges.

\paragraph{Bypassing particle picking.}
Below a certain SNR threshold, any particle-picking method is fundamentally doomed to fail~\cite{dadon2024detection}. One way to circumvent this limitation is to eliminate particle picking altogether. Recently, it was shown, both theoretically and empirically, that one can reconstruct the 3D structure directly from raw micrographs at arbitrarily low SNR by bypassing particle picking and treating the unknown particle locations as nuisance variables~\cite{balanov2025orbit}. Earlier direct-structure methods, including approaches based on the method of moments~\cite{bendory2023toward} and approximations to EM~\cite{kreymer2023stochastic}, likewise aim to recover 3D volumes directly from micrographs without explicit particle selection. In principle, these approaches avoid selection bias altogether.

\subsection{Future work}
Several directions remain open for future work.

\paragraph{Beyond the present picking model.}
A natural next step is to extend the present theory beyond the pure-noise regime to the practically relevant low-SNR setting, where genuine particles are present. Such an analysis could quantify how genuine signal and template-induced bias interact in downstream reconstruction, and how it depends on the SNR, the density of the true particles, the picking threshold, and the mismatch between the true structure and the picking template.

Another important direction is to develop a theoretical framework for deep particle pickers. Unlike hand-crafted template-matching rules, deep pickers rely on learned scoring functions whose inductive biases are shaped by the training data, architecture, and optimization procedure. A rigorous analysis of such learned selection rules would broaden the scope of the present work and help clarify how modern data-driven picking pipelines may amplify or mitigate bias in low-SNR regimes.

A related extension is to move beyond controlled synthetic micrographs to realistic experimental cryo-EM data. Experimental micrographs contain structured background effects, such as ice gradients, carbon support edges, contamination, and variable ice thickness, that are absent from the idealized pure-noise model studied here. 

\paragraph{CTF-aware modeling.}
For analytical tractability, the present work does not model the CTF explicitly. In the pure-noise regime considered here, a fixed CTF acts as a linear filter, so Gaussian background noise remains Gaussian after CTF filtering, but with a modified covariance structure. From this viewpoint, CTF effects are partially absorbed into the correlated-noise model through the covariance matrix in Theorem~\ref{thm:classesCentersVersusTemplatesInformalStationary}. In practice, however, template matching is often CTF-aware and the CTF may vary across micrographs due to defocus changes. Extending the framework to explicit CTF-aware scoring and heterogeneous defocus is therefore an important direction for future work.

\paragraph{Cryo-ET validation and missing-wedge effects.}
An important direction for future work is broader empirical and theoretical validation in a practical cryo-ET setting. While the present work includes controlled cryo-ET illustrations on synthetic pure-noise tomograms, these experiments are intended as proof-of-concept demonstrations of the bias mechanism rather than fully realistic subtomogram-averaging studies. Incorporating key experimental features such as missing-wedge effects~\cite{chen2019complete}, tomographic reconstruction artifacts, and validation on real cryo-ET datasets therefore remains for future work.

\section*{Data Availability}
The detailed implementation and code are available at \href{https://github.com/AmnonBa/particle-picking-confirmation-bias}{https://github.com/AmnonBa/particle-picking-confirmation-bias}.

\section*{Acknowledgment}
T.B. is supported in part by BSF under Grant 2020159, in part by NSF-BSF under Grant 2019752, in part by ISF under Grant 1924/21, and in part by a grant from The Center for AI and Data Science at Tel Aviv University (TAD). We thank Amitay Eldar for insightful discussions and for his contributions to the initial formulation of the framework.

\bibliographystyle{plain}

\begin{appendices}

{\centering{\section*{Appendix}}}

\paragraph{Appendix organization.} 
Appendix~\ref{sec:problemFormulation} provides the preliminaries for the appendix. 
Appendix~\ref{sec:2Dclass} develops the theoretical framework and proofs for our general bias analysis of the GMM mean maximum-likelihood estimators, and its relation to the templates used in the template-matching selector.
Appendix~\ref{sec:3D-reconstruction} extends this framework to the 3D reconstruction process in both single-particle cryo-EM and cryo-ET.
In Appendix~\ref{sec:methods}, we describe in detail the empirical setup and procedures used for the simulations presented in Section~\ref{sec:empirical}.
Appendix~\ref{sec:different-statistical-models} compares the confirmation bias induced by template-matching selection, as analyzed in this work, with other sources of bias arising in single-particle cryo-EM and cryo-ET.

\paragraph{Notations.} 
Throughout, $\xrightarrow[]{\mathcal{D}}$, $\xrightarrow[]{\mathcal{P}}$, $\xrightarrow[]{\text{a.s.}}$, and $\xrightarrow[]{\mathcal{L}^p}$ denote convergence in distribution, in probability, almost surely, and in $\mathcal{L}^p$, respectively, for sequences of random variables. The Euclidean inner product is written as either $\langle a,b\rangle$ or $a^\top b$ when $a,b\in\mathbb R^d$, and as $\langle a,b\rangle\triangleq a^{\ast} b$ when $a,b\in\mathbb C^d$.
We denote by $\s{SO}(3)$ the special orthogonal group representing 3D rotations. All norms $\| \cdot \|$ are Frobenius norms. $\mathbbm{1}$ represents the indicator function. 

We employ standard asymptotic notation. For nonnegative sequences ${a_n}$ and ${b_n}$, we write $a_n=O(b_n)$ if there exists a constant $C<\infty$ and $n_0$ such that $a_n\le C b_n$ for all $n\ge n_0$, and $a_n=o(b_n)$ if $a_n/b_n\to 0$. We write $a_n=\omega(b_n)$ if $a_n/b_n\to\infty$, and $a_n=\Theta(b_n)$ if both $a_n=O(b_n)$ and $b_n=O(a_n)$ hold. 

For random sequences, let $\{X_n\}$ be random variables and $\{b_n\}$ a positive deterministic sequence. We write $X_n = O_{\mathbb{P}}(b_n)$ if the family $\{X_n/b_n\}$ is bounded in probability, that is, for every $\varepsilon>0$ there exists $M<\infty$ such that $
    \sup_{n} \mathbb{P}\bigl(|X_n/b_n| > M\bigr) \le \varepsilon.$
We write $X_n = o_{\mathbb{P}}(b_n)$ if $X_n/b_n \to 0$ in probability, that is, for every $\varepsilon>0$,
$ \mathbb{P}\bigl(|X_n/b_n| > \varepsilon\bigr) \longrightarrow 0$ as $n\to\infty.$

\section{Preliminaries}
\label{sec:problemFormulation}

The particle picker's role is to classify each patch as either a particle or noise assuming no overlap between patches. 
Each micrograph patch, or equivalently, candidate particle is represented as $\{y_i\}_{i=0}^{N-1} \in \mathbb{R}^d$, where $N$ is the number of candidate particles, and $d$ is their dimension (which may correspond to the number of pixels or voxels in 2D image or 3D volume).
Accordingly, the particle-picking task is defined by a binary function $f: \mathbb{R}^d \to \{0,1\}$:  
\begin{align}
    f(y_i) = 
    \begin{cases} 
        1, & \text{if } y_i \text{ is a particle}, \\
        0, & \text{otherwise}.
    \end{cases}
\end{align}

To analyze confirmation bias in particle picking, we consider two noise models for the candidate particles $\{y_i\}_{i=0}^{N-1}$. In the Gaussian setting, we assume $y_i \sim \mathcal{N}(0,\Sigma)$, which covers both Model~\ref{model:whiteNoiseIID} (i.i.d.\ with $\Sigma = \sigma^2 I_{d}$) and Model~\ref{model:stationaryGaussianNoisePatches} (stationary Gaussian process with covariance $\Sigma$ and $\alpha$-mixing coefficients satisfying Assumption~\eqref{eq:alpha-mixing-summability-1d}).
Alternatively, in the spherically symmetric i.i.d.\ setting (Model~\ref{model:isotropicIID}), we assume $y_i$ are i.i.d.\ with density $p_{\mathrm{sph}}(y)$, which by the spherical symmetry has the radial form
\begin{align}
    p_{\mathrm{sph}}(y) = \phi(\|y\|_2), \qquad y \in \mathbb{R}^d.
\end{align}

We consider a particle picker based on \textit{template matching}, as outlined in Algorithm~\ref{alg:particlePickerTemplateMatching}. This picker operates by cross-correlating each candidate particle $\{y_i\}_{i=0}^{N-1}$ with $L$ predefined templates, denoted $\{x_\ell\}_{\ell=0}^{L-1}$. Templates are assumed to be normalized so that $\|x_\ell\|_2 = 1$ for all $\ell \in \{0, \ldots, L-1\}$. If the correlation with at least one template exceeds a threshold $T$, the observation is classified as a particle and added to the set of selected particles, $\mathcal{A}$. We denote by $M =\abs{\mathcal{A}}$ the number of selected particles. 

To examine the confirmation bias introduced during the particle selection stage, it is necessary to establish a metric to compare the extracted particles with the initial templates used in the template selection process. Specifically, we analyze two approaches applied to the particles extracted during this stage: one is based on the 2D classification output (relevant for cryo-EM) and the other is based on the 3D reconstruction process (relevant for sub-tomogram averaging in cryo-ET and 3D reconstruction process in single-particle cryo-EM).
In Appendix \ref{sec:2Dclass}, we focus on the 2D classification output, modeled using Gaussian mixture models (GMM). We analyze the relationship between the maximum-likelihood estimates of the GMM means and the predefined templates employed during particle picking.
In Appendix~\ref{sec:3D-reconstruction}, we turn to 3D reconstruction process and analyze the connection between the 3D template volume used in the particle-picking stage and the maximum-likelihood estimator derived from the 3D reconstruction process output. This analysis aims to uncover how the initial templates shape the reconstruction results and whether they contribute to biases in the reconstruction workflow.

We state the following known result for truncated Gaussian distributions which will be used through the proofs.
\begin{lem}[\cite{durrett2019probability}] \label{lem:C4}
Let $X \sim \mathcal{N}(0, \sigma^2)$. Then, for any threshold $T \in \mathbb{R}$,
\begin{align}
    \mathbb{E}[X \mid X \geq T] = \sigma \cdot \frac{\varphi(T/\sigma)}{Q(T/\sigma)}, \label{eqn:app_C13}
\end{align}
where $\varphi(z) = \frac{1}{\sqrt{2\pi}} e^{-z^2/2}$ is the standard normal density function,
$Q(z) = \int_z^\infty \varphi(t) \, dt = 1 - \Phi(z)$ is the standard Gaussian upper tail probability, and $\Phi$ is the CDF of the standard normal distribution.
In addition, the variance of $X$ conditioned on $X \geq T$ is given by:
\begin{align}
    \operatorname{Var}(X \mid X \geq T) = \sigma^2 \left(1 + \frac{T}{\sigma} \cdot \frac{\varphi(T/\sigma)}{Q(T/\sigma)} - \left(\frac{\varphi(T/\sigma)}{Q(T/\sigma)}\right)^2\right). \label{eqn:app_C14}
\end{align}
Asymptotically, as $T \to \infty$, we have, \footnote{Here, the symbol $f(t) \sim g(t)$ denotes asymptotic equivalence, i.e., $\lim_{t \to \infty} \frac{f(t)}{g(t)} = 1.$}
\begin{align}
    \mathbb{E}[X \mid X \geq T] &\sim T + \frac{\sigma^2}{T}, \\
    \operatorname{Var}(X \mid X \geq T) &\sim \sigma^2 \left(1 - \frac{\sigma^2}{T^2}\right).
\end{align}
\end{lem}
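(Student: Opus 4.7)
The plan is to establish the exact formulas first by direct calculation, then extract the asymptotics from the Mills ratio expansion.

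For the conditional mean, I would start from the definition
\begin{align}
\mathbb{E}[X \mid X \geq T] = \frac{1}{\mathbb{P}(X \geq T)} \int_T^\infty \frac{x}{\sqrt{2\pi}\,\sigma} \exp\!\left(-\frac{x^2}{2\sigma^2}\right) dx.
\end{align}
The numerator integrates in closed form because $\frac{d}{dx}\bigl[-\sigma^2 e^{-x^2/(2\sigma^2)}\bigr] = x\,e^{-x^2/(2\sigma^2)}$, which yields $\sigma^2 \cdot \tfrac{1}{\sqrt{2\pi}\,\sigma}\,e^{-T^2/(2\sigma^2)} = \sigma \cdot \varphi(T/\sigma)$ after using the standardization $\varphi(z) = (2\pi)^{-1/2}e^{-z^2/2}$. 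Dividing by $\mathbb{P}(X \geq T) = Q(T/\sigma)$ gives \eqref{eqn:app_C13}.

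For the variance, I would compute $\mathbb{E}[X^2 \mid X \geq T]$ via integration by parts, using $x^2\,e^{-x^2/(2\sigma^2)} = -\sigma^2 \cdot x \cdot \frac{d}{dx}\!\left(e^{-x^2/(2\sigma^2)}\right)$. Integration by parts produces the identity
\begin{align}
\int_T^\infty x^2 e^{-x^2/(2\sigma^2)}\,dx = \sigma^2 T\,e^{-T^2/(2\sigma^2)} + \sigma^2 \int_T^\infty e^{-x^2/(2\sigma^2)}\,dx,
\end{align}
so that $\mathbb{E}[X^2 \mid X \geq T] = \sigma^2 + \sigma T \cdot \varphi(T/\sigma)/Q(T/\sigma)$. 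Subtracting the square of \eqref{eqn:app_C13} then yields \eqref{eqn:app_C14} after elementary algebra.

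For the asymptotic statements as $T \to \infty$, the key tool is the Mills ratio expansion for the standard normal tail,
\begin{align}
Q(z) = \frac{\varphi(z)}{z}\left(1 - \frac{1}{z^2} + \frac{3}{z^4} - \cdots\right), \qquad z \to \infty,
\end{align}
which follows from iterated integration by parts of $\int_z^\infty e^{-t^2/2}\,dt$. Inverting this series gives $\varphi(z)/Q(z) = z + 1/z + O(1/z^3)$. Substituting $z = T/\sigma$ into \eqref{eqn:app_C13} immediately yields $\mathbb{E}[X \mid X \geq T] \sim T + \sigma^2/T$, and plugging the same expansion into \eqref{eqn:app_C14} together with $(\varphi/Q)^2 \sim z^2 + 2 + O(1/z^2)$ produces
\begin{align}
\operatorname{Var}(X \mid X \geq T) \sim \sigma^2\!\left[1 + \bigl(1 + \tfrac{\sigma^2}{T^2}\bigr) - \bigl(1 + \tfrac{2\sigma^2}{T^2}\bigr)\right] = \sigma^2\!\left(1 - \frac{\sigma^2}{T^2}\right).
\end{align}

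The one step requiring care is the Mills ratio expansion: the exact formulas for the mean and variance are routine, but the variance asymptotic involves a cancellation between the $1$ term, the $T \varphi/Q$ term, and the squared Mills ratio, so I would carry the expansion to order $1/z^2$ to make sure the leading $O(1)$ contributions from the latter two cancel correctly and only the $-\sigma^4/T^2$ correction survives. Everything else is elementary calculus.
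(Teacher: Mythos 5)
Your derivations of the two exact formulas and of the mean asymptotic are correct, and they supply more than the paper does: the paper states this lemma purely by citation to Durrett and never proves it (the only place part of it is rederived is the proof of Lemma \ref{lemma:3}, which obtains \eqref{eqn:app_C13} by the same antiderivative computation you use and then quotes the Mills-ratio asymptotic $\varphi(t)/Q(t)\sim t+1/t$, again by citation). For the conditional mean and its asymptotic, your route and the paper's are therefore essentially the same.

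The genuine problem is the variance asymptotic, exactly the step you flagged as delicate. Writing $z=T/\sigma$ and $\lambda=\varphi(z)/Q(z)=z+1/z-2/z^3+O(1/z^5)$, the three terms in \eqref{eqn:app_C14} expand as $1$, $z\lambda=z^2+1-2/z^2+O(1/z^4)$, and $\lambda^2=z^2+2-3/z^2+O(1/z^4)$. The $z^2$ terms cancel, but the surviving constants are $1$ and $2$, not $1$ and $1$ as in your final display, so the bracket equals $1+1-2+(-2+3)/z^2+O(1/z^4)=1/z^2+O(1/z^4)$, and hence
\begin{align}
    \operatorname{Var}(X\mid X\geq T)=\sigma^2\left(\frac{\sigma^2}{T^2}+O\!\left(\frac{\sigma^4}{T^4}\right)\right)\sim\frac{\sigma^4}{T^2}\;\longrightarrow\;0,
\end{align}
consistent with the fact that, conditionally on $X\geq T$, the overshoot $X-T$ is asymptotically exponential with rate $T/\sigma^2$. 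The quantities you actually substitute, $\frac{T}{\sigma}\cdot\frac{\varphi}{Q}\approx 1+\sigma^2/T^2$ and $(\varphi/Q)^2\approx 1+2\sigma^2/T^2$, contradict the expansion $\varphi/Q=z+1/z+O(1/z^3)$ that you correctly state two lines earlier (both quantities in fact diverge like $T^2/\sigma^2$). The upshot is that the claimed asymptotic $\operatorname{Var}(X\mid X\geq T)\sim\sigma^2(1-\sigma^2/T^2)$ cannot be proved because it is false under the paper's own definition of $\sim$: the left-hand side tends to $0$ while the right-hand side tends to $\sigma^2$. This is an error in the lemma as stated (which propagates to the effective variance $\sigma_T^2$ in \eqref{eqn:app_C17}); your algebra reproduces the error rather than exposing it, and carrying the Mills-ratio expansion consistently to order $1/z^2$, as you proposed to do, would have revealed it.
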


\section{2D classification in cryo-EM} \label{sec:2Dclass}

\subsection{2D classification process overview} 

In cryo-EM, following the particle-picking stage, the extracted particles are typically subjected to a 2D classification process. This step groups particles into clusters based on their similarity, with each cluster represented by a centroid, which is the average of the observations assigned to that cluster. These centroids serve two primary purposes: they help assess the overall quality of the dataset and provide a basis for iteratively sorting out false positives or poorly aligned particles. In practice, the appearance of particle-like features in the centroids is often interpreted by researchers as evidence that the selected particles contain meaningful structural information.

\paragraph{2D Classification based on Gaussian mixture models.} 
During the 2D classification process, particles are often modeled using a GMM, a widely used and effective probabilistic framework for describing particle heterogeneity. While alternative models exist, we focus on the GMM due to its analytical tractability. In this framework, the observed data are assumed to be generated from a mixture of Gaussian distributions, with each component representing a distinct subpopulation within the dataset. Formally, the realizations of the extracted particles are modeled as:
\begin{align}
     (\text{Postulated statistics}) \qquad z_0, z_1, \ldots, z_{M-1} \stackrel{\text{i.i.d.}}{\sim} \sum_{\ell=0}^{L-1} w_\ell \cdot \mathcal{N}(\mu_\ell, \Sigma_\ell), \label{eqn:GMMmodel}
\end{align}
where $z_0, z_1, \ldots, z_{M-1}$ are the realizations of the \textit{postulated} GMM distribution of the extracted particles, $w_\ell$, $\mu_\ell$, and $\Sigma_\ell$ are the weight, mean, and covariance matrix of the $\ell$-th Gaussian component, respectively. The weights $\ppp{w_\ell}_{\ell=0}^{L-1}$ satisfy $w_\ell \geq 0$, and $\sum_{\ell=0}^{L-1} w_\ell = 1$, and are assumed to be known. For the purpose of theoretical modeling, we assume that the number of templates used in the particle-picking process is equal to the number of classes in the GMM, denoted by $L$.  this assumption. 
This assumption is formally stated below as Assumption~\ref{assump:0}.

\begin{assum}\label{assump:0}
The number of templates used in the particle extraction algorithm (Algorithm \ref{alg:particlePickerTemplateMatching}) and the number of classes assumed in the GMM during the 2D classification process \eqref{eqn:GMMmodel} are equal to $L$.
\end{assum}

\begin{remark}
    Throughout the Appendix, we use the notation $\{y_i\}_{i=0}^{N-1}$ to denote the candidate particles provided as input to the template-matching particle picker. The resulting set of selected particles, which serves as the input for downstream tasks such as 2D classification and 3D reconstruction, is denoted by $\{z_i\}_{i=0}^{M-1}$.
\end{remark}

\paragraph{The log-likelihood function of GMM.}
In cryo-EM, it is common to assume that all components of the GMM share the same covariance matrix $\Sigma_{\text{GMM}}$. 
The objective of the 2D classification process is to estimate the maximum-likelihood centroids $\ppp{\mu_\ell}_{\ell=0}^{L-1}$ in \eqref{eqn:GMMmodel}.
In this context, the likelihood of the GMM specified in \eqref{eqn:GMMmodel} is given by:
\begin{align}
    \nonumber \mathcal{L}\p{\ppp{\mu_\ell}_{\ell \in \pp{L}}; \mathcal{A} = \ppp{z_i}_{i \in \pp{M}}} & = \prod_{i=0}^{M-1} f \p{z_i ; \ppp{\mu_\ell}_{\ell=0}^{L-1}}
    \\ & = \prod_{i=0}^{M-1} \sum_{\ell=0}^{L-1} w_\ell \cdot \mathcal{N}\p{z_i ; \mu_\ell, \Sigma_{\text{GMM}}}, \label{eqn:maximumLikelihoodTermMain}
\end{align}
where we have defined, 
\begin{align}
    f\p{z ; \ppp{\mu_\ell}_{\ell=0}^{L-1}} \triangleq \sum_{\ell=0}^{L-1} w_\ell  f_{\ell}(z) = \sum_{\ell=0}^{L-1} w_\ell \cdot \mathcal{N}\p{z ; \mu_\ell, \Sigma_{\text{GMM}}}. \label{eqn:GMMmodelExcplicitMain}
\end{align}
Here, $\mathcal{N}(z_i ; \mu_\ell, \Sigma_{\text{GMM}})$ represents the multivariate Gaussian density for the $\ell$-th component, and the weights $w_\ell$ are the mixture probabilities for each component. The objective of the 2D classification process is to estimate the centroids $\ppp{\mu_\ell}_{\ell = 0}^{L-1}$, using the maximum-likelihood estimator that maximizes the likelihood \eqref{eqn:maximumLikelihoodTermMain}. 
The maximum-likelihood centroids, denoted by $\widehat{\mu}_\ell$, are formally defined as
\begin{align}
    \ppp{\widehat{\mu}_\ell}_{\ell=0}^{L-1} = \argmax_{\ppp{\mu_\ell}_{\ell=0}^{L-1}} \mathcal{L}(\ppp{\mu_\ell}_{\ell=0}^{L-1}; \mathcal{A}). \label{eqn:maximumOfLogLikelihood}
\end{align}

Thus, the aim of the 2D classification is to estimate the centroids of the GMM, defined in \eqref{eqn:GMMmodel} using the extracted unlabeled particles $\mathcal{A} = \ppp{z_i}_{i=0}^{M-1}$ 
(i.e., using the output particles of Algorithm~\ref{alg:particlePickerTemplateMatching}). The main question we address in this subsection is how the estimated centroids $\ppp{\widehat{\mu}_{\ell}}_{\ell=0}^{L-1}$ relate to the templates $\ppp{x_{\ell}}_{\ell=0}^{L-1}$ used in the particle-picker template-matching (Algorithm \ref{alg:particlePickerTemplateMatching}), when the input micrograph is pure noise.

\begin{remark}
We use the notation $\Sigma_{\s{GMM}}$ to denote the covariance matrix of the postulated GMM model described in~\eqref{eqn:GMMmodel}, and $\Sigma$ to denote the covariance matrix of the noise in the input to the particle picker, as described in Algorithm \ref{alg:particlePickerTemplateMatching}. These two covariance matrices are not necessarily equal in our analysis. 
\end{remark}

\paragraph{The extracted particles form a non-Gaussian mixture model.} \label{sec:misspecifiedModels}

Although the extracted particles are often assumed to follow a Gaussian distribution, as defined in \eqref{eqn:GMMmodel}, the statistical properties of the particle picking process described in Algorithm~\ref{alg:particlePickerTemplateMatching} deviate from this assumption. Formally, the law of a particle $Z$ extracted by Algorithm~\ref{alg:particlePickerTemplateMatching} is a finite mixture
\begin{align}
    g^{}(z) = \sum_{\ell=0}^{L-1} \pi_\ell \, g_\ell^{(T)}(z), \qquad z \in \mathbb{R}^d,
    \label{eqn:mixtureModelParticlePicker}
\end{align}
where $L$ is the number of templates, the mixing weights satisfy $\pi_\ell \ge 0$ and $\sum_{\ell=0}^{L-1} \pi_\ell = 1$, and $g_\ell^{(T)}:\mathbb{R}^d \to [0,\infty)$ denotes the density associated with template $x_\ell$.

In the Gaussian setting, where the underlying patches satisfy
$y_i \sim \mathcal{N}(0,\Sigma)$, each component has the truncated-Gaussian form
\begin{align}
    g_\ell^{(T)}(z) = C_\ell^{(T)}\, \exp\!\left(-\frac{1}{2} z^\top \Sigma^{-1} z\right) \,\mathbbm{1}_{\{\langle z, x_\ell\rangle \ge T\}},    \label{eqn:componentOfMixtureModelParticlePicker}
\end{align}
where $T$ is the template–matching threshold and $C_\ell^{(T)}$ is the
normalizing constant chosen so that
$\int_{\mathbb{R}^d} g_\ell^{(T)}(z)\,dz = 1$, that is, 
\begin{align}
    C_\ell^{(T)} = \biggl(\int_{\{\langle z, x_\ell\rangle \ge T\}}\exp\!\left(-\frac{1}{2} z^\top \Sigma^{-1} z\right) dz\biggr)^{-1}.
\end{align}
In the spherically symmetric i.i.d. noise model (Model~\ref{model:isotropicIID}), with base density $p_{\mathrm{sph}}$, the corresponding components are
\begin{align}
    g_\ell^{(T)}(z) = C_\ell^{(T)}\, p_{\mathrm{sph}}(z)\,  \mathbbm{1}_{\{\langle z, x_\ell\rangle \ge T\}}, \label{eqn:componentOfMixtureModelParticlePicker-2}
\end{align}
with $C_\ell^{(T)}$ again defined by the normalization condition $\int_{\mathbb{R}^d} g_\ell^{(T)}(z)\,dz = 1$.

\begin{remark}
In both cases, we denote the resulting component densities by the same symbols $g_\ell^{(T)}$ and $C_\ell^{(T)}$. Thus, $g_\ell^{(T)}$ and $C_\ell^{(T)}$ should be understood as model-dependent: in the Gaussian setting, they are defined via the Gaussian base density, whereas in the spherically symmetric setting, they are defined using $p_{\mathrm{sph}}$. We keep the same notation to avoid introducing separate symbols for each model.
\end{remark}

In other words, $g_\ell^{(T)}$ represents the distribution from which a particle associated with the $\ell$-th template is selected, meaning its correlation with the template $x_\ell$ exceeds the threshold $T$. Clearly, $g_\ell^{(T)}(z)$ does not follow a Gaussian distribution as it is strictly positive only for $z \in \mathbb{R}^d$ values which satisfy $\langle z, x_\ell \rangle \geq T$. 
Figure~\ref{fig:appendix_B1} illustrates the probability density function of this truncated Gaussian distribution in one and two dimensions for the case $\Sigma = I_d$.

\begin{figure}[t!]
    \centering
    \includegraphics[width=0.9\linewidth]{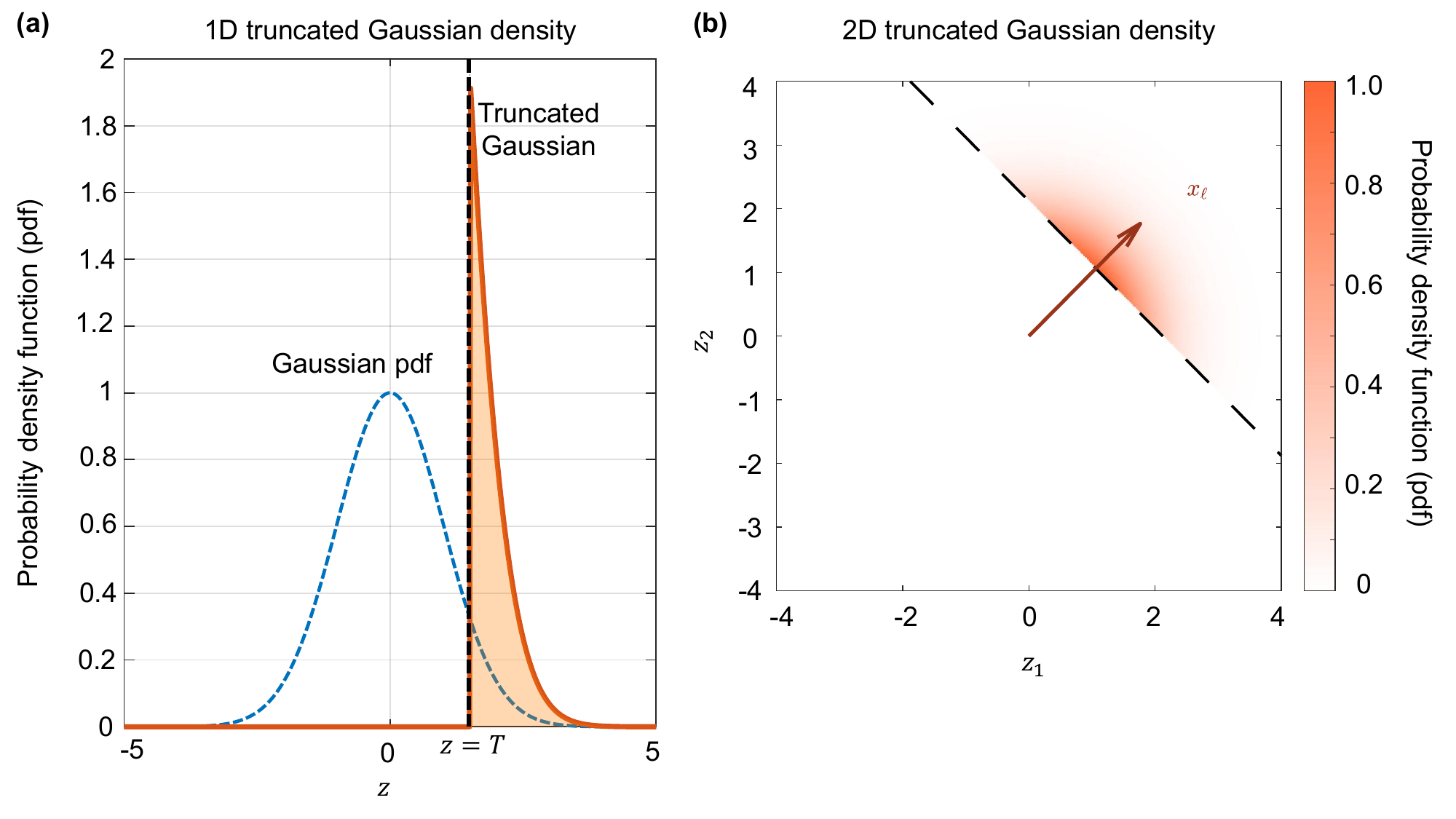}
    \caption{\textbf{Template-matching-induced truncation of a Gaussian distribution.} 
    \textbf{(a)} In one dimension, the original Gaussian density (dashed) is truncated by a hard threshold $T$, resulting in a re-normalized density $g_\ell(z)$ supported only on the region where $z \geq T$. 
    \textbf{(b)} In two dimensions, the original isotropic Gaussian is truncated to the half-space defined by $\langle z, x_\ell \rangle \geq T$, where $x_\ell$ is the template direction (shown as a red arrow) and the dashed line denotes the threshold boundary.}
    \label{fig:appendix_B1}
\end{figure}

\paragraph{Misspecification of true statistics and the GMM model.} 
It is important to recognize that the statistical assumptions underlying the particles, specifically, the GMM statistics defined in \eqref{eqn:GMMmodel}, do not accurately reflect the true statistics of the particle picking process based on template matching, as described in \eqref{eqn:mixtureModelParticlePicker} and \eqref{eqn:componentOfMixtureModelParticlePicker}. Consequently, \textit{the models are misspecified}, and this must be considered when estimating the centroids.

\paragraph{Relationship between 2D classification centers and templates.}  
The model presented here raises two central questions: 
\begin{enumerate}
    \item What is the relationship between the mean of each component $g_\ell^{(T)}$ in the true mixture model \eqref{eqn:componentOfMixtureModelParticlePicker} and the corresponding template $x_\ell$ when the number of particles $M \to \infty$? (answered in Proposition \ref{thm:prop1} and Proposition~\ref{prop:truncated-mean-isotropic})
    \item In the 2D classification process based on a GMM model, how do the templates $\{x_\ell\}_{\ell=0}^{L-1}$ relate to the GMM maximum-likelihood estimators of the means $\{\widehat{\mu}_\ell\}_{\ell=0}^{L-1}$, as specified in \eqref{eqn:maximumOfLogLikelihood}? (answered in Proposition \ref{thm:high-threshold-MLE-alignment})
\end{enumerate}

\subsection{Bias of per-template picked particles}
We begin by presenting a theoretical result that establishes the relationship between the mean of the $\ell$-th component of the mixture model, as defined in \eqref{eqn:componentOfMixtureModelParticlePicker}, and the corresponding template $x_\ell$.

\paragraph{The centers of the template matching process. } Using the definitions in Algorithm~\ref{alg:particlePickerTemplateMatching}, let $\widehat{m}_\ell^{(T)}$ be defined as:
\begin{align}
    \widehat{m}_\ell^{(T)} = \frac{1}{|\mathcal{A}_\ell^{(T)}|} \sum_{z_i \in \mathcal{A}_\ell^{(T)}} z_i, \label{eq:mu_l_def}
\end{align}
where
\begin{align}
    \mathcal{A}_\ell^{(T)} = \{z_i \in \mathcal{A} \mid \langle z_i, x_\ell \rangle \geq T\}. \label{eq:calA_ell}
\end{align}
Here, $\widehat{m}_\ell^{(T)}$ represents the centroid of the particles whose correlation with the template $x_\ell$ exceeds the threshold $T$. The key question we address is: what is the relationship between $\widehat{m}_\ell^{(T)}$ and the corresponding template $x_\ell$?

To this end, it is convenient to define, for every $\ell \in \pp{L}$, the following sets:
\begin{align}
    \mathcal{V}_\ell^{(T)} \triangleq \ppp{v \in \mathbb{R}^d: {{\langle{v, x_\ell\rangle}} \geq T } }, \label{eqn:VlDef}
\end{align}
which collect all vectors whose correlation with $x_\ell$ exceeds the threshold $T$. Note that $\{\mathcal{V}_\ell^{(T)}\}_{\ell\in[L]}$ are deterministic and are not necessarily disjoint.

We now characterize the limiting behavior of the empirical centers $\widehat{m}_\ell^{(T)}$ under two noise models: general stationary Gaussian noise and spherically symmetric noise.

\paragraph{Stationary Gaussian noise.}
We first consider the case where the candidate particles arise from a stationary Gaussian process. The next proposition shows that, under mild mixing conditions, the empirical centers align with the templates $\Sigma x_\ell$; the proof is given in Appendix~\ref{sec:proofOfStaionaryGaussianMeans}.

\begin{proposition}
\label{thm:prop1}
Let $\{y_i\}_{i=0}^{N-1}$ denote the input candidate particles to Algorithm~\ref{alg:particlePickerTemplateMatching}, drawn according to either Model~\ref{model:whiteNoiseIID} or Model~\ref{model:stationaryGaussianNoisePatches}. In both cases, each $y_i$ is marginally Gaussian with mean zero and non-singular covariance matrix $\Sigma \in \mathbb{R}^{d \times d}$, and in the stationary model the process is $\alpha$-mixing with coefficients satisfying the summability condition~\eqref{eq:alpha-mixing-summability-1d}.

Let $\{x_\ell\}_{\ell=0}^{L-1}$ denote the normalized templates, $T \in \mathbb{R}^{+}$ the template-matching threshold, and $\mathcal{A} \subset \{y_i\}_{i=0}^{N-1}$ the output set of selected particles of Algorithm~\ref{alg:particlePickerTemplateMatching}.
For each $\ell$, let $\mathcal{A}_\ell^{(T)}$ be the corresponding subset defined in~\eqref{eq:calA_ell}, and let $\widehat{m}_\ell^{(T)}$ be its empirical mean as in~\eqref{eq:mu_l_def}. Then:

\begin{enumerate}
    \item For every fixed $T < \infty$ and every $0 \le \ell \le L-1$,
    \begin{align}
        \widehat{m}_\ell^{(T)} = \frac{1}{|\mathcal{A}_\ell^{(T)}|} \sum_{y_i \in \mathcal{A}_\ell^{(T)}} y_i \xrightarrow[N\to\infty]{\mathrm{a.s.}}  m_\ell^{(T)} ,
    \end{align}
    where
    \begin{align}
         m_\ell^{(T)}  = \mathbb{E}\bigl[y_0 \,\big|\, \langle y_0, x_\ell \rangle \ge T\bigr] = \lambda_\ell(T)\, \frac{\Sigma x_\ell}{x_\ell^\top \Sigma x_\ell},
    \end{align}
    for some scalar $\lambda_\ell(T) > 0$. In particular, $ m_\ell^{(T)} $ lies in the one-dimensional subspace spanned by $\Sigma x_\ell$.

    \item As $T \to \infty$, we have
    \begin{align}
        \lim_{T \to \infty} \frac{ m_\ell^{(T)} }{T} = \frac{\Sigma x_\ell}{x_\ell^\top \Sigma x_\ell},
    \end{align}
    and hence
    \begin{align}
        \lim_{T \to \infty} \lim_{N \to \infty}    \frac{\widehat{m}_\ell^{(T)}}{T} = \frac{\Sigma x_\ell}{x_\ell^\top \Sigma x_\ell},
    \end{align}
    where the convergence holds almost surely, for every $\ell \in \{0,\dots,L-1\}$.
\end{enumerate}
\end{proposition}

\paragraph{Spherically symmetric noise.}
We next specialize to the spherically symmetric setting, where the conditional means align exactly with the templates themselves; the proof of the following proposition is given in Appendix~\ref{sec:proofOfIsotropicNoise}.

\begin{proposition}
\label{prop:truncated-mean-isotropic}
Let $\{y_i\}_{i=0}^{N-1}$ denote the input candidate particles to 
Algorithm~\ref{alg:particlePickerTemplateMatching}, drawn according to 
Model~\ref{model:isotropicIID}, distributed with common distribution $Y \in \mathbb{R}^d$. Let $\{x_\ell\}_{\ell=0}^{L-1}$ denote the normalized templates, $\|x_\ell\|_2=1$, $T\in\mathbb{R}^{+}$ the template–matching threshold, and $\mathcal{A}\subset\{y_i\}_{i=0}^{N-1}$ the output set of selected particles of Algorithm~\ref{alg:particlePickerTemplateMatching}.
For each $\ell$, let $\mathcal{A}_\ell^{(T)}$ be the corresponding subset defined in~\eqref{eq:calA_ell}, and let $\widehat{m}_\ell^{(T)}$ be its empirical mean as in~\eqref{eq:mu_l_def}. Then:
\begin{enumerate}
    \item For every fixed $T<\infty$ and every $0\le\ell\le L-1$,
    \begin{align}
        \widehat{m}_\ell^{(T)} \xrightarrow[N\to\infty]{\mathrm{a.s.}}  m_\ell^{(T)} 
        \triangleq \mathbb{E}\bigl[Y \,\big|\, \langle Y,x_\ell\rangle \ge T\bigr],
    \end{align}
    and there exists a scalar $\beta(T)\geq T$, independent of $\ell$, such that for all $\ell \in \{0,1, \ldots, L-1 \}$,
    \begin{align}
         m_\ell^{(T)} = \beta(T)\,x_\ell.
    \end{align}
    In particular, each conditional mean lies in the one-dimensional subspace 
    spanned by the corresponding template $x_\ell$.

    \item As $T\to\infty$, we have
    \begin{align}
        \lim_{T\to\infty}\frac{ m_\ell^{(T)} }{T} = x_\ell,
    \end{align}
    and hence
    \begin{align}
        \lim_{T\to\infty}\lim_{N\to\infty}\frac{\widehat{m}_\ell^{(T)}}{T} = x_\ell,
    \end{align}
    where the convergence holds almost surely, for every $\ell\in\{0,\dots,L-1\}$.
\end{enumerate}
\end{proposition}

This result shows that, under Model~\ref{model:isotropicIID}, averaging the noise observations $\{y_i\}_{i=0}^{N-1}$ whose correlation with the template $x_\ell$ exceeds the threshold $T$ converges almost surely to the template $x_\ell$, scaled by a factor $\beta(T) \ge T$. Moreover, the second part of the result implies that for asymptotically large thresholds $T$, the ratio $\beta(T) / T$ approaches $1$, so that $\beta(T)$ attains its lower bound asymptotically.

In essence, Propositions~\ref{thm:prop1} and~\ref{prop:truncated-mean-isotropic} establish that the centers of the mixture components induced by template matching resemble the templates (or filtered templates) up to a scalar factor. This is in contrast with the ``Einstein from noise'' phenomenon and the associated confirmation bias in 2D classification, where the resulting estimates are positively correlated with the initialization but do not fully reproduce it~\cite{balanov2024einstein}.

\subsubsection{Proof of proposition~\ref{thm:prop1}} \label{sec:proofOfStaionaryGaussianMeans}
We treat both noise models in a unified way. In either Model~\ref{model:whiteNoiseIID} or Model~\ref{model:stationaryGaussianNoisePatches}, each candidate particle $y_i$ is marginally Gaussian with mean zero and covariance $\Sigma$, where $\Sigma$ is non-singular. In the white-noise model the $y_i$ are i.i.d.; in the stationary model the sequence $\{y_i\}$ is strictly stationary and $\alpha$-mixing with coefficients satisfying the summability condition~\eqref{eq:alpha-mixing-summability-1d}.

Fix $\ell$ and $T$, and recall that
\begin{align}
    \mathcal{A}_\ell^{(T)} = \bigl\{y_i : \langle y_i, x_\ell \rangle \ge T\bigr\},
    \qquad
    \widehat{m}_\ell^{(T)} = \frac{1}{|\mathcal{A}_\ell^{(T)}|} \sum_{y_i \in \mathcal{A}_\ell^{(T)}} y_i.
\end{align}
We can rewrite $\widehat{m}_\ell^{(T)}$ as a ratio of empirical averages:
\begin{align}
    \widehat{m}_\ell^{(T)} &= \frac{\frac{1}{N} \sum_{i=0}^{N-1} y_i\, \mathbbm{1}_{\{\langle y_i, x_\ell \rangle \ge T\}}}{\frac{1}{N} \sum_{i=0}^{N-1} \mathbbm{1}_{\{\langle y_i, x_\ell \rangle \ge T\}}}.
    \label{eq:ratio-representation}
\end{align}

\paragraph{Step 1: Law of large numbers (i.i.d. and $\alpha$-mixing cases).}
In the white-noise model, the terms $y_i\,\mathbbm{1}_{\{\langle y_i, x_\ell \rangle \ge T\}}$ and $\mathbbm{1}_{\{\langle y_i, x_\ell \rangle \ge T\}}$ are i.i.d. and integrable, so the classical SLLN applies.

In the stationary case, both sequences $\{y_i\,\mathbbm{1}_{\{\langle y_i, x_\ell \rangle \ge T\}}\}$ and $\{\mathbbm{1}_{\{\langle y_i, x_\ell \rangle \ge T\}}\}$ are strictly stationary with $\alpha$-mixing satisfying $\sum_{r=1}^\infty \alpha(r)^{\delta/(2+\delta)} < \infty$, and they have finite moments of all orders because the underlying field is Gaussian. Hence, by the strong law of large numbers for $\alpha$-mixing sequences (see, e.g.,~\cite{bradley2005basic}), we have almost surely, as $N\to\infty$,
\begin{align}
    \frac{1}{N} \sum_{i=0}^{N-1} y_i\,\mathbbm{1}_{\{\langle y_i, x_\ell \rangle \ge T\}} &\xrightarrow[]{\mathrm{a.s.}} \mathbb{E}\bigl[y_0\,\mathbbm{1}_{\{\langle y_0, x_\ell \rangle \ge T\}}\bigr], \\
    \frac{1}{N} \sum_{i=0}^{N-1}\mathbbm{1}_{\{\langle y_i, x_\ell \rangle \ge T\}} &\xrightarrow[]{\mathrm{a.s.}} \mathbb{P}\bigl(\langle y_0, x_\ell \rangle \ge T\bigr) > 0,
\end{align}
for every finite $T$. Combining these limits with \eqref{eq:ratio-representation} and applying the continuous mapping theorem yields
\begin{align}
    \widehat{m}_\ell^{(T)} \xrightarrow[N\to\infty]{\mathrm{a.s.}}  m_\ell^{(T)}  \triangleq \frac{\mathbb{E}\bigl[y_0\,\mathbbm{1}_{\{\langle y_0, x_\ell \rangle \ge T\}}\bigr]}{\mathbb{P}\bigl(\langle y_0, x_\ell \rangle \ge T\bigr)}
    =
    \mathbb{E}\bigl[y_0 \,\big|\, \langle y_0, x_\ell \rangle \ge T\bigr].
    \label{eq:mu-ell-T-def}
\end{align}
This proves the first convergence in part (1).

\paragraph{Step 2: Conditional mean for Gaussian distribution.}
We now compute $ m_\ell^{(T)} $ explicitly for a Gaussian vector $y_0 \sim \mathcal{N}(0,\Sigma)$, with $\Sigma$ non-singular. Set
\begin{align}
    s \triangleq \langle y_0, x_\ell \rangle = x_\ell^\top y_0.
\end{align}
Then $(y_0, s)$ is jointly Gaussian with $\mathbb{E}[y_0] = 0$, $\mathbb{E}[s] = 0$, and
\begin{align}
    \mathrm{Var}(s) &= x_\ell^\top \Sigma x_\ell, \\
    \mathrm{Cov}(y_0,s) &= \mathbb{E}[y_0 s] = \mathbb{E}[y_0 y_0^\top] x_\ell = \Sigma x_\ell.
\end{align}
By the standard formula for the conditional expectation of a Gaussian vector given a linear observation~\cite{anderson1958introduction}, we have
\begin{align}
    \mathbb{E}[y_0 \mid s] = \Sigma x_\ell \,(x_\ell^\top \Sigma x_\ell)^{-1} s.
\end{align}
Therefore, by iterated conditioning,
\begin{align}
     m_\ell^{(T)} = \mathbb{E}[y_0 \mid s \ge T] = \mathbb{E}\bigl[\mathbb{E}[y_0 \mid s]\mid s \ge T\bigr] = \Sigma x_\ell \,(x_\ell^\top \Sigma x_\ell)^{-1} \mathbb{E}[s \mid s \ge T].
    \label{eq:mu-ell-T-gaussian}
\end{align}
Since $s \sim \mathcal{N}\bigl(0, x_\ell^\top \Sigma x_\ell\bigr)$, we may
write $s = \sqrt{x_\ell^\top \Sigma x_\ell}\,Z$, where $Z\sim\mathcal{N}(0,1)$.
Thus
\begin{align}
    \mathbb{E}[s \mid s \ge T] = \sqrt{x_\ell^\top \Sigma x_\ell}\, \mathbb{E}\!\left[Z \,\big|\, Z \ge\frac{T}{\sqrt{x_\ell^\top \Sigma x_\ell}}\right].
\end{align}
Combining this with~\eqref{eq:mu-ell-T-gaussian}, we obtain
\begin{align}
     m_\ell^{(T)}  = \lambda_\ell(T)\, \frac{\Sigma x_\ell}{x_\ell^\top \Sigma x_\ell},
\end{align}
where
\begin{align}
    \lambda_\ell(T) \triangleq \mathbb{E}[s \mid s \ge T] = \sqrt{x_\ell^\top \Sigma x_\ell}\, \mathbb{E}\!\left[Z \,\big|\, Z \ge \frac{T}{\sqrt{x_\ell^\top \Sigma x_\ell}}
    \right] > 0.
\end{align}
This shows that $ m_\ell^{(T)} $ lies in the one-dimensional subspace spanned by $\Sigma x_\ell$, completing part (1).

\paragraph{Step 3: Asymptotics as $T\to\infty$.}
It remains to analyze the behavior of $ m_\ell^{(T)} /T$ as $T\to\infty$. Let $Z \sim \mathcal{N}(0,1)$ with density $\varphi$ and distribution function $\Phi$. Then, by Lemma~\ref{lem:C4}
\begin{align}
    \mathbb{E}[Z \mid Z \ge t] = \frac{\int_t^\infty z \,\varphi(z)\,dz}{ \int_t^\infty \varphi(z)\,dz} = \frac{\varphi(t)}{1-\Phi(t)}.
\end{align}
and using the Mills ratio asymptotics,
\begin{align}
    \frac{\varphi(t)}{1-\Phi(t)} = t + \frac{1}{t} + O\!\left(\frac{1}{t^3}\right) \qquad (t\to\infty),
\end{align}
we obtain, in particular,
\begin{align}
    \mathbb{E}[Z \mid Z \ge t] = t + \frac{1}{t} + O\!\left(\frac{1}{t^3}\right) \sim t \qquad \text{as } t\to\infty.
\end{align}
Taking $t = T/\sqrt{x_\ell^\top \Sigma x_\ell}$ in the expression for $\lambda_\ell(T)$, we obtain
\begin{align}
    \frac{\lambda_\ell(T)}{T} = \frac{1}{T}\, \sqrt{x_\ell^\top \Sigma x_\ell}\, \mathbb{E}\!\left[Z \,\big|\, Z \ge \frac{T}{\sqrt{x_\ell^\top \Sigma x_\ell}} \right] = \frac{1}{t}\,\mathbb{E}[Z\mid Z\ge t],
\end{align}
with $t = T/\sqrt{x_\ell^\top \Sigma x_\ell} \to \infty$ as $T\to\infty$. Thus
\begin{align}
    \lim_{T\to\infty} \frac{\lambda_\ell(T)}{T} = 1.
\end{align}
Substituting into~\eqref{eq:mu-ell-T-gaussian} yields
\begin{align}
    \lim_{T \to \infty} \frac{ m_\ell^{(T)} }{T} = \frac{\Sigma x_\ell}{x_\ell^\top \Sigma x_\ell}.
\end{align}
Combining this with the almost sure convergence
$\widehat{m}_\ell^{(T)} \to  m_\ell^{(T)} $ from~\eqref{eq:mu-ell-T-def} gives
\begin{align}
    \lim_{T\to\infty} \lim_{N\to\infty} \frac{\widehat{m}_\ell^{(T)}}{T} = \frac{\Sigma x_\ell}{x_\ell^\top \Sigma x_\ell},
\end{align}
which completes the proof.

\subsubsection{Proof of proposition~\ref{prop:truncated-mean-isotropic}} \label{sec:proofOfIsotropicNoise}
\paragraph{Step 1: Law of large numbers.} 
Fix $\ell$ and $T$. As in~\eqref{eq:mu-ell-T-def}, we rewrite $\widehat{m}_\ell^{(T)}$ as a ratio of empirical averages. By the strong law of large numbers for $\alpha$-mixing sequences with a finite $(2+\delta)$-moment (see~\eqref{eq:alpha-mixing-summability-1d} and Model~\ref{model:isotropicIID}), we obtain, for each fixed $T$,
\begin{align}
    \widehat{m}_\ell^{(T)} \xrightarrow[N\to\infty]{\mathrm{a.s.}} m_\ell^{(T)} \triangleq \frac{\mathbb{E}\bigl[Y\,\mathbbm{1}_{\{\langle Y, x_\ell \rangle \ge T\}}\bigr]}{\mathbb{P}\bigl(\langle Y, x_\ell \rangle \ge T\bigr)} = \mathbb{E}\bigl[Y \,\big|\, \langle Y, x_\ell \rangle \ge T\bigr].
    \label{eq:mu-ell-T-def-2}
\end{align}
Since the one-dimensional marginal $\langle Y,x_\ell\rangle$ has a continuous density and non-compact support (by spherical symmetry and the assumptions in Model~\ref{model:isotropicIID}), we have $\mathbb{P}\bigl(\langle Y, x_\ell \rangle \ge T\bigr) > 0$ for every finite $T$, so the denominator in~\eqref{eq:mu-ell-T-def-2} is strictly positive. 

\paragraph{Step 2: Direction of the conditional mean.}
We now show that $ m_\ell^{(T)} $ is colinear with $x_\ell$. This condition can be equivalently expressed algebraically: for any vector $u \in \mathbb{R}^d$ orthogonal to $x_\ell$ (i.e., $\langle u, x_\ell \rangle = 0$), the mean $ m_\ell^{(T)} $ must also be orthogonal to $u$ i.e.,
\begin{align}
    \langle u,  m_\ell^{(T)}  \rangle = 0. \label{eqn:orthogonalIsZero}
\end{align}
To prove~\eqref{eqn:orthogonalIsZero}, we state the following Lemma.
\begin{lem} \label{lemma:1}
Let $\mathcal{P} \p{y} = p_{\mathrm{sph}}(y)$ be the distribution of the random variable $Y$ according to Model~\ref{model:isotropicIID}. Assume $u \in \mathbb{R}^d$, satisfying $\langle u, x_\ell \rangle = 0$. Then, for every $T$,
\begin{align}
    \langle u,  m_\ell^{(T)}  \rangle \, \mathbb{P} \, [\langle Y, x_\ell \rangle \geq T] = \int_{\mathbb{R}^d} \langle y, u \rangle \mathcal{P} \p{y} \mathbbm{1}_{\langle y, x_\ell \rangle \geq T} \ dy = 0. \label{eqn:SLLNsingleMixtureMean}
\end{align}
\end{lem}

In other words, Lemma~\ref{lemma:1} shows that $x_\ell$ and $ m_\ell^{(T)} $ are collinear, which is equivalent to
\begin{align}
     m_\ell^{(T)}  = \alpha_\ell^{(T)} \, x_{\ell},
\end{align}
where,
\begin{align}
    \alpha_\ell^{(T)} \triangleq \langle  m_\ell^{(T)} , x_{\ell} \rangle \in \mathbb{R}.
\end{align}
In Lemma \ref{lemma:2}, stated below, we prove that $\alpha_\ell^{(T)} \geq T$, and that $\alpha_\ell^{(T)}$ is independent of $\ell$, i.e., $\alpha_{\ell_1}^{(T)} = \alpha_{\ell_2}^{(T)}$, for every $\ell_1, \ell_2 \in \pp{L}$, which completes the proof of the first part of the proposition.

\begin{lem} \label{lemma:2}
Recall the definition of $ m_\ell^{(T)} $ in \eqref{eq:mu-ell-T-def-2}. For every $\ell_1, \ell_2 \in \pp{L}$, the following holds,
\begin{align}
    \alpha_{\ell_1}^{(T)} = \langle  m_{\ell_1}^{(T)} , x_{\ell_1} \rangle = \langle  m_{\ell_2}^{(T)} , x_{\ell_2} \rangle = \alpha_{\ell_2}^{(T)} \geq T.
\end{align}
\end{lem}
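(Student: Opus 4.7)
The plan is to leverage Part (1) of the proposition: we already know from Lemma \ref{lemma:1} and Equation~\eqref{eqn:orthogonalIsZero} that, as $N\to\infty$, $m_\ell^{(T)}\xrightarrow{\s{a.s.}}\alpha_\ell\,x_\ell$ for some scalar $\alpha_\ell$. Since $\|x_\ell\|=1$, taking the inner product of both sides with $x_\ell$ immediately identifies $\alpha_\ell$ as a scalar quantity, namely
\begin{align}
\alpha_\ell=\langle m_\ell^{(T)},x_\ell\rangle \xrightarrow{\s{a.s.}} \frac{\mathbb{E}\!\left[\langle y_1,x_\ell\rangle\,\mathbbm{1}\{\langle y_1,x_\ell\rangle\geq T\}\right]}{\mathbb{P}\!\left[\langle y_1,x_\ell\rangle\geq T\right]}=\mathbb{E}\!\left[\langle y_1,x_\ell\rangle\,\big|\,\langle y_1,x_\ell\rangle\geq T\right],
\end{align}
where I used the limit in \eqref{eqn:SLLNsingleMixtureMean} and the definition of $\mathcal{V}_\ell^{(T)}$.

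Next, I would observe that $\alpha_\ell$ is determined by the one-dimensional marginal law of $\langle y_1,x_\ell\rangle$. Since $y_1\sim\mathcal{N}(0,\sigma^2 I_{d\times d})$ and $\|x_\ell\|=1$, the projection $\langle y_1,x_\ell\rangle$ is distributed as $\mathcal{N}(0,\sigma^2)$, independently of which unit template $x_\ell$ is chosen. Consequently the conditional expectation above depends only on $\sigma$ and $T$, not on $\ell$, so $\alpha_{\ell_1}=\alpha_{\ell_2}$ for all $\ell_1,\ell_2\in[L]$. Since $\alpha_\ell>0$ for all finite $T$ (it equals $\sigma\varphi(T/\sigma)/Q(T/\sigma)$ by Lemma~\ref{lem:C4}), the absolute value is vacuous and we can drop it.

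Finally, the lower bound $\alpha_\ell\geq T$ is immediate from the monotonicity of conditional expectation: the conditioning event forces $\langle y_1,x_\ell\rangle\geq T$, hence $\mathbb{E}[\langle y_1,x_\ell\rangle\mid \langle y_1,x_\ell\rangle\geq T]\geq T$.

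There is essentially no hard step here; the entire argument reduces to the rotational symmetry of the isotropic Gaussian together with a one-line conditional expectation inequality. The only thing to be careful about is appealing cleanly to Part (1) of the proposition (already proved) so that we are justified in writing $m_\ell^{(T)}\approx\alpha_\ell x_\ell$ and then reading off $\alpha_\ell$ as a scalar, rather than reproving the vanishing of the orthogonal component.
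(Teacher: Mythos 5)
Your proof is correct and follows essentially the same route as the paper: both the template-independence of $\alpha_\ell$ and the bound $\alpha_\ell \geq T$ come from the rotational invariance of the isotropic Gaussian together with the trivial fact that conditioning on $\langle y_1, x_\ell\rangle \geq T$ forces the conditional mean of that projection to be at least $T$. Your reduction to the one-dimensional marginal $\langle y_1, x_\ell\rangle \sim \mathcal{N}(0,\sigma^2)$ is a slightly more streamlined packaging of the paper's explicit orthonormal change of variables in the $d$-dimensional integral, but the underlying argument is identical.
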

Following Lemma~\ref{lemma:2}, we denote by $\beta(T) \triangleq \alpha_\ell^{(T)}$, which is independent of $\ell$.

\paragraph{Step 3: Asymptotics as $T\to\infty$.}
From \eqref{eq:mu-ell-T-def-2} and the discussion above, we have
\begin{align}
    \beta(T) = \mathbb{E}[Z \mid Z>T],
\end{align}
where $Z$ is the common one–dimensional marginal of $\langle Y,u\rangle$ for any
unit vector $u$. By assumption in Model~\ref{model:isotropicIID},
\begin{align}
    \lim_{T\to\infty}\frac{\beta(T)}{T} = \lim_{T\to\infty}\frac{\mathbb{E}[Z\mid Z>T]}{T} = 1.
\end{align}
Therefore,
\begin{align}
    \lim_{T\to\infty}\frac{ m_\ell^{(T)} }{T} = \lim_{T\to\infty}\frac{\beta(T)}{T}\,x_\ell = x_\ell,
\end{align}
for every $\ell$. Combining this with the almost sure convergence $\widehat{m}_\ell^{(T)}\to m_\ell^{(T)} $ from \eqref{eq:mu-ell-T-def-2}, we obtain
\begin{align}
    \lim_{T\to\infty}\lim_{N\to\infty}\frac{\widehat{m}_\ell^{(T)}}{T} = x_\ell,
\end{align}
for each $0\le\ell\le L-1$. This proves part (2) and completes the proof.
It remains to prove the lemmas.

\begin{proof}[Proof of Lemma~\ref{lemma:1}]
Fix $u \in \mathbb{R}^d$, such that $\langle u, x_\ell \rangle = 0$. First, we show that for every $v \in \mathcal{V}_\ell^{(T)}$, there is a unique corresponding $v' \in \mathcal{V}_\ell^{(T)}$ such that $\langle u, v \rangle = - \langle u,v' \rangle$, and $\mathcal{P}\p{v} = \mathcal{P}\p{v'}$.
Every $v \in \mathbb{R}^d$ can be decomposed uniquely by $v = \gamma_v u + w$, where $w \in \mathbb{R}^d$ orthogonal to $u$, $\langle u, w \rangle = 0$.  
Define $v' = -\gamma_v u + w$. Clearly, the map $v \to v'$ is unique, and due to the orthogonality between $u$ and $w$, we have, $\langle u, v \rangle = \gamma_v$, and $\langle u, v' \rangle = -\gamma_v$; thus, $\langle u, v \rangle = - \langle u,v' \rangle$. In addition, $v \in \mathcal{V}_\ell^{(T)}$ if and only if $v' \in \mathcal{V}_\ell^{(T)}$, since,
\begin{align}
    v \in \mathcal{V}_\ell^{(T)}  & \Longleftrightarrow   \langle v, x_\ell \rangle \geq T \Longleftrightarrow \langle \gamma_v u + w, x_\ell \rangle \geq T  
    \nonumber \\ & \Longleftrightarrow \langle w, x_\ell \rangle \geq T \label{eqn:iffBelongToVell}
    \\ &  \Longleftrightarrow \langle -\gamma_v u + w, x_\ell \rangle \geq T \Longleftrightarrow \langle v', x_\ell \rangle \geq T \Longleftrightarrow  v' \in \mathcal{V}_\ell^{(T)}, \nonumber
\end{align}
where the transition in \eqref{eqn:iffBelongToVell} follows from the orthogonality between $u$ and $x_\ell$.

Next, we show that $\mathcal{P}\p{v} = \mathcal{P}\p{v'}$. As $\mathcal{P}(v)$ is a spherically symmetric function that depends solely on the norm of $v$. Thus, we prove below that $\norm{v} = \norm{v'}$, which would prove that $\mathcal{P} \p{v} = \mathcal{P} \p{v'}$: \begin{align}
    \norm{v}^2 = \norm{\gamma_v u + w}^2 = \gamma_v^2\norm{u}^2 + \norm{w}^2 = \norm{-\gamma_v u + w}^2 = \norm{v'}^2,
\end{align}
where in the second equality we have used the orthogonality between $u$ and $w$. Therefore, $\norm{v} = \norm{v'}$, which shows that $\mathcal{P} \p{v} = \mathcal{P} \p{v'}$.

Next, note that~\eqref{eqn:SLLNsingleMixtureMean} can be rewritten as follows,
\begin{align}
    \int_{\mathbb{R}^d} \langle y, u \rangle \mathcal{P} \p{y} & \mathbbm{1}_{\langle y, x_\ell \rangle \geq T} \ dy  = \int_{y \in \mathcal{V}_\ell^{(T)}} \langle y, u \rangle \mathcal{P} \p{y} \ dy \nonumber
    \\ & = \int_{y \in \mathcal{V}_\ell^{(T)}: \langle y, u \rangle > 0} \langle y, u \rangle \mathcal{P} \p{y} \ dy + \int_{y \in \mathcal{V}_\ell^{(T)}: \langle y, u \rangle < 0} \langle y, u \rangle \mathcal{P} \p{y} \ dy. \label{eqn:Appx_C13}
\end{align}
Thus, since for every $v \in \mathcal{V}_\ell^{(T)}$ there is a unique corresponding $v' \in \mathcal{V}_\ell^{(T)}$ such that ${\langle{v, u\rangle}} = -{\langle{v', u\rangle}}$ and $\mathcal{P}\p{v} = \mathcal{P}\p{v'}$, it is clear that the r.h.s. of \eqref{eqn:Appx_C13} is zero, which completes the proof.
\end{proof}

\begin{proof}[Proof of Lemma \ref{lemma:2}]
First, we show that for every $\ell_1, \ell_2 \in \pp{L}$, $\langle x_{\ell_1}, {m_{\ell_1}^{(T)}} \rangle = \langle x_{\ell_2}, {m_{\ell_2}^{(T)}} \rangle$, which proves that $\alpha$ is independent of $\ell$. As the templates $\ppp{x_\ell}_{\ell=0}^{L-1}$ are normalized, there is a orthogonal matrix $O_{\ell_1, \ell_2}$, such that $O_{\ell_1, \ell_2} x_{\ell_1} = x_{\ell_2}$. Then, following from the right-hand-side of \eqref{eqn:SLLNsingleMixtureMean}, we have,
\begin{align}
    \langle x_{\ell_1}, { m_{\ell_1}^{(T)} } \rangle \, \mathbb{P} \, [\langle Y, x_{\ell_1} \rangle \geq T] & = 
    \int_{\mathbb{R}^d} \langle y, x_{\ell_1} \rangle \mathcal{P} \p{y} \mathbbm{1}_{\langle y, x_{\ell_1} \rangle \geq T} \ dy \label{eqn:app_C19}
    \\ & = \int_{\mathbb{R}^d} \langle y, O_{\ell_2, \ell_1} x_{\ell_2} \rangle \mathcal{P} \p{y} \mathbbm{1}_{\langle y, O_{\ell_2, \ell_1} x_{\ell_2} \rangle \geq T} \ dy \label{eqn:app_C20}
    \\ & = \int_{\mathbb{R}^d} \langle O_{\ell_2, \ell_1} y, O_{\ell_2, \ell_1} x_{\ell_2} \rangle \mathcal{P} \p{O_{\ell_2, \ell_1} y} \mathbbm{1}_{\langle O_{\ell_2, \ell_1} y, O_{\ell_2, \ell_1} x_{\ell_2} \rangle \geq T} \ dy \label{eqn:app_C21}
    \\ & = \int_{\mathbb{R}^d} \langle y, x_{\ell_2} \rangle \mathcal{P} \p{O_{\ell_2, \ell_1} y} \mathbbm{1}_{\langle y, x_{\ell_2} \rangle \geq T} \ dy \label{eqn:app_C22}
    \\ & = \int_{\mathbb{R}^d} \langle y, x_{\ell_2} \rangle \mathcal{P} \p{y} \mathbbm{1}_{\langle y, x_{\ell_2} \rangle \geq T} \ dy \label{eqn:app_C23}
    \\ & = \langle x_{\ell_2}, { m_{\ell_2}^{(T)} } \rangle \, \mathbb{P} \, [\langle Y, x_{\ell_2} \rangle \geq T], \label{eqn:app_C24}
\end{align}
where \eqref{eqn:app_C19} follows from the definition of $ m_\ell^{(T)} $; \eqref{eqn:app_C20} follows from the orthonormal operator definition of $O_{\ell_2, \ell_1}$; \eqref{eqn:app_C21} follows from substitution of variables $y \to O_{\ell_2, {\ell_1}} y$; \eqref{eqn:app_C22} follows from the property of orthonormal operator $O$, which satisfies, $\langle O  a, O b \rangle = \langle a, b \rangle$ for every $a,b \in \mathbb{R}^d$; \eqref{eqn:app_C23} follows from the spherical symmetry property of $\mathcal{P} \p{y} = \mathcal{P} \p{O y}$, that is for every orthonormal operator $O$ and for every $y \in \mathbb{R}^d$, $\norm{O y} = \norm{y}$; and \eqref{eqn:app_C24} follows from the definition of $ m_\ell^{(T)} $. In addition, by  the spherical symmetry of $Y$, we have,
\begin{align}
    \mathbb{P} \, [\langle Y, x_{\ell_1} \rangle \geq T] = \mathbb{P} \, [\langle Y, x_{\ell_2} \rangle \geq T]
\end{align}

Next, we show that $\alpha_\ell \geq T$. Similar to Lemma~\ref{lemma:1}, for every $y \in \mathcal{V}_\ell^{(T)}$, we can decompose it into $y = \beta_y x_\ell + \gamma_y w_y$, where $\langle w_y, x_\ell  \rangle = 0$, and $\beta_y \geq T$, by the definition of the domain of the integral. Thus, we can decompose the integral in \eqref{eqn:SLLNsingleMixtureMean} as follows:
\begin{align}
    \int_{\mathbb{R}^d} \langle y, x_\ell \rangle \mathcal{P} \p{y} \mathbbm{1}_{\langle y, x_\ell \rangle \geq T} \ dy & = \int_{\mathbb{R}^d} \langle \beta_y x_\ell, x_\ell \rangle \mathcal{P} \p{y} \mathbbm{1}_{\langle y, x_\ell \rangle \geq T} \ dy \nonumber
    \\ &  \quad + \int_{\mathbb{R}^d} \langle \gamma_y w_y, x_\ell \rangle \mathcal{P} \p{y} \mathbbm{1}_{\langle y, x_\ell \rangle \geq T} \ dy \label{eqn:app_D21}
\end{align}

By Lemma~\ref{lemma:1}, there exists $y' = \beta_y x_\ell - \gamma_y w_y$, such that $\norm{y'} = \norm{y}$, and $\langle y', x_\ell \rangle \geq T$. Thus, as $\mathcal{P}$ is spherically symmetric, that is, $\mathcal{P}(y) = \mathcal{P}(y')$ for $\norm{y} = \norm{y'}$, the last term in \eqref{eqn:app_D21} vanishes:
\begin{align}
    \int_{\mathbb{R}^d} \langle \gamma_y w_y, x_\ell \rangle \mathcal{P} \p{y} \mathbbm{1}_{\langle y, x_\ell \rangle \geq T} \ dy = 0. \label{eqn:app_D22}
\end{align}
Substituting \eqref{eqn:app_D22} into \eqref{eqn:app_D21} results,
\begin{align}
    \int_{\mathbb{R}^d} \langle y, x_\ell \rangle \mathcal{P} \p{y} \mathbbm{1}_{\langle y, x_\ell \rangle \geq T} \ dy & = \int_{\mathbb{R}^d} \langle \beta_y x_\ell, x_\ell \rangle \mathcal{P} \p{y} \mathbbm{1}_{\langle y, x_\ell \rangle \geq T} \ dy \nonumber
     \\ & = \int_{\mathbb{R}^d} \beta_y \|x_\ell\|^2 \mathcal{P} \p{y} \mathbbm{1}_{\langle y, x_\ell \rangle \geq T} \ dy \nonumber
     \\ & = \int_{\mathbb{R}^d} \beta_y  \mathcal{P} \p{y} \mathbbm{1}_{\langle y, x_\ell \rangle \geq T} \ dy,
     \label{eqn:app_D23}
\end{align}
where we have used $\|x_\ell\| = 1$. As $\beta_y \geq T$ for every $y \in \mathcal{V}_\ell^{(T)}$, we have,
\begin{align}
    \alpha_\ell = \frac{\int_{\mathbb{R}^d} \beta_y  \mathcal{P} \p{y} \mathbbm{1}_{\langle y, x_\ell \rangle \geq T} \ dy}{\int_{\mathbb{R}^d} \mathcal{P} \p{y} \mathbbm{1}_{\langle y, x_\ell \rangle \geq T} \ dy} \geq T, \label{eqn:app_D24_1}
\end{align}
which completes the proof of the lemma.
\end{proof}

\subsection{GMM maximum likelihood on template-selected particles} \label{subsec:GMM-likelhood-template-selected}
In the previous section, we showed that the particle-picking process induces a mixture model with $L$ components, where $L$ is the number of templates. The corresponding component means $\{m_\ell^{(T)}\}_{\ell=0}^{L-1}$ reproduce the templates (up to a scaling factor that depends on the threshold $T$ and the noise model). However, this does not automatically imply that the output of the maximum-likelihood estimation for a GMM, recovers these centroids. The key difficulty is that the GMM is only a \emph{postulated} model: the true distribution of the picked particles is not Gaussian.

In general, under model correctness, maximum-likelihood estimators converge to the true parameter values~\cite{newey1994large}. In particular, if the extracted particles were in fact drawn from a GMM with component means $\{m_\ell^{(T)}\}$, then the maximum-likelihood estimators $\{\widehat\mu_\ell\}_{\ell=0}^{L-1}$ in~\eqref{eqn:maximumOfLogLikelihoodMain} would converge to $\{m_\ell^{(T)}\}$. In our setting, however, the postulated model is an $L$-component GMM, while the \emph{actual} statistics of the picked particles are described by the mixture in~\eqref{eqn:mixtureModelParticlePicker}-\eqref{eqn:componentOfMixtureModelParticlePicker}, which is non-Gaussian. We must therefore analyze the behavior of the GMM maximum-likelihood estimator under model misspecification.

The connection between Proposition~\ref{thm:prop1} and the GMM mean estimators is illustrated in Figure~\ref{fig:appendix_C1}. In the labeled setting of Proposition~\ref{thm:prop1}, where the particle-template assignments are known, the empirical means converge directly to scaled versions of the templates. In contrast, in the unlabeled setting underlying GMM-based mean estimation, the component means converge to the templates only in the high-threshold regime.

\begin{figure}[t!]
    \centering
    \includegraphics[width=0.9\linewidth]{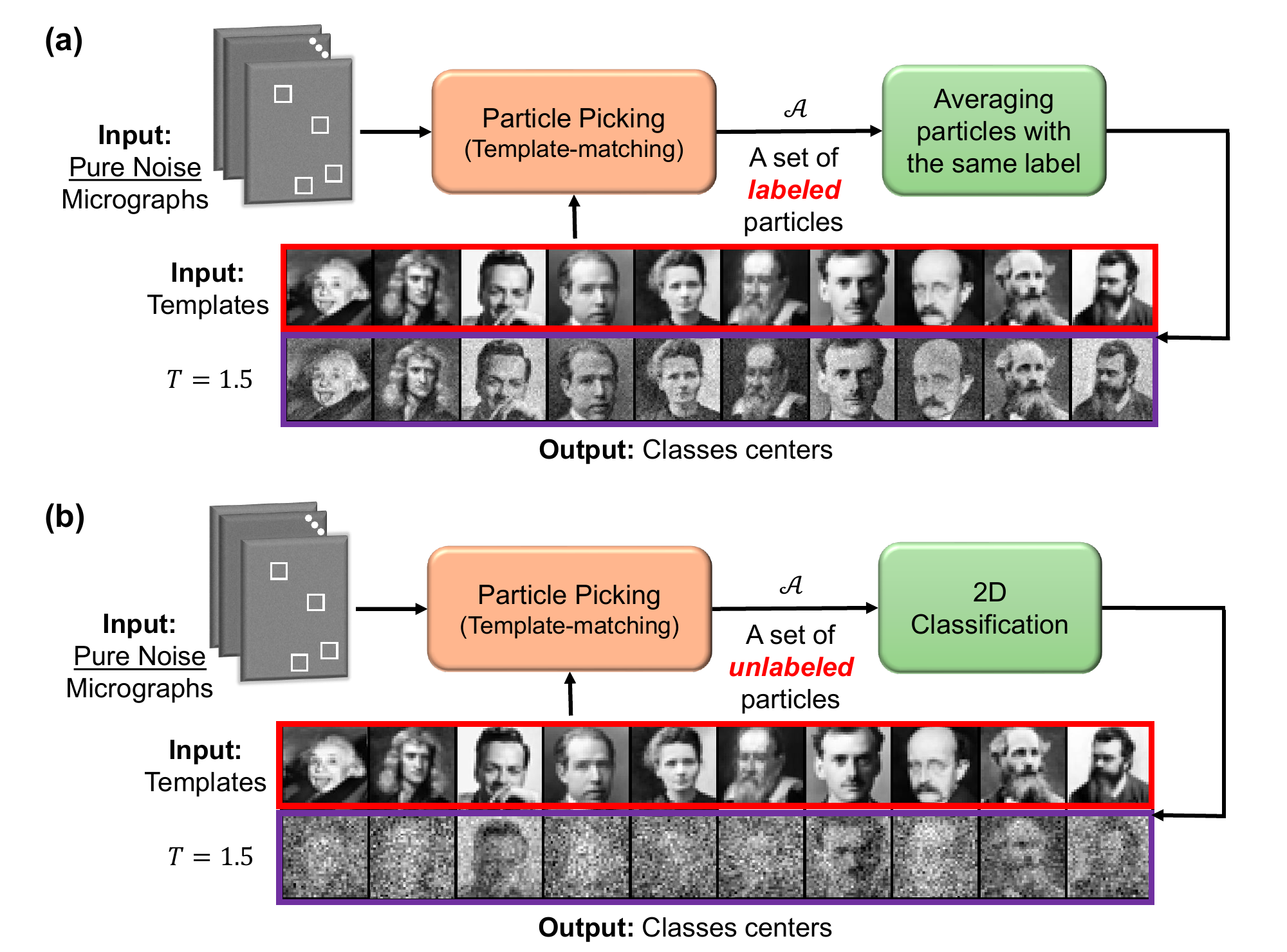}
    \caption{\textbf{Illustration of the relationship between the means for labeled particles (Proposition~\ref{thm:prop1}) and unlabeled particles (Proposition~\ref{thm:high-threshold-MLE-alignment}).} Starting from pure-noise observations, template-based particle picking selects highly correlated candidates. \textbf{(a)} When the particles are labeled, that is, each selected particle is assigned to a known template, Proposition~\ref{thm:prop1} characterizes the asymptotic behavior of the mean of particles within each label. These means converge to the corresponding templates up to a scaling factor,    and this convergence holds for any threshold $T$.
    \textbf{(b)} When the particles are unlabeled, i.e., the problem involves latent variables, the 2D classification step treats the data as a GMM. Proposition~\ref{thm:high-threshold-MLE-alignment} shows that as the threshold increases, the estimated class centers increasingly align with the original templates (see Figure~\ref{fig:3}). However, at low thresholds, this property does not necessarily occur as can be seen in the example. This simulation uses $M = 10^5$ particles collected during the particle-picking process and image sizes of $d = 50 \times 50$. The data generation model for this experiment is the same as presented in Model~\ref{model:whiteNoiseIID}.}
    \label{fig:appendix_C1}
\end{figure}

Recall the postulated GMM model from~\eqref{eqn:GMMmodelExcplicitMain}
\begin{align}
    f\bigl(z; \{\mu_\ell\}_{\ell=0}^{L-1}\bigr) = \sum_{\ell=0}^{L-1} w_\ell f_\ell(z) = \sum_{\ell=0}^{L-1} w_\ell \,\mathcal{N}\bigl(z;\mu_\ell,\Sigma_{\mathrm{GMM}}\bigr),
    \label{eqn:GMMmodelExplicit}
\end{align}
and the true mixture model of the extracted particles
\begin{align}
    g\bigl(z; \{m_\ell^{(T)}\}_{\ell=0}^{L-1}\bigr) = \sum_{\ell=0}^{L-1} \pi_\ell\, g_\ell^{(T)}(z),
    \label{eqn:trueMixtureStatistics-updated}
\end{align}
where $g_\ell^{(T)}$ is defined by~\eqref{eqn:componentOfMixtureModelParticlePicker} in the Gaussian case and by~\eqref{eqn:componentOfMixtureModelParticlePicker-2} in the spherical symmetric case. The vectors $m_\ell^{(T)} \in \mathbb{R}^d$ are the component means of the true mixture,
\begin{align}
    m_\ell^{(T)} \triangleq \mathbb{E}_{g_\ell^{(T)}}[Z] = \int_{\mathbb{R}^d} z\,g_\ell^{(T)}(z)\,dz. \label{eqn:mixture-models-means}
\end{align}
By Proposition~\ref{thm:prop1} and Proposition~\ref{prop:truncated-mean-isotropic},
\begin{align}
    \lim_{T\to\infty} \frac{m_\ell^{(T)}}{T} = v_\ell \neq 0,
\end{align}
where
\begin{align}
    v_\ell =
    \begin{cases}
        x_\ell, & \text{under Model~\ref{model:isotropicIID}},\\[0.1cm]
        \dfrac{\Sigma x_\ell}{x_\ell^\top \Sigma x_\ell}, & \text{under } y_i\sim \mathcal{N}(0,\Sigma).
    \end{cases} \label{eqn:v_ell_def}
\end{align}
Moreover, for every $\varepsilon>0$,
\begin{align}
    \mathbb{P}\!\left( \left\| \frac{Y}{T} - v_\ell \right\|_2 > \varepsilon\,\Bigg|\, \langle x_\ell,Y\rangle \ge T \right) \xrightarrow[T\to\infty]{} 0,
    \label{eq:component-concentration-appendix-updated}
\end{align}
which follows from~\eqref{eq:component-concentration} for the spherically symmetric model and can be verified directly in the Gaussian case. Intuitively, conditional on exceeding a large threshold in direction $x_\ell$, the picks concentrate in a narrow cone around the direction $v_\ell$ with radius $O(T)$.

The output of Algorithm~\ref{alg:particlePickerTemplateMatching} is the set of extracted particles $ \mathcal{A} = \{z_i\}_{i=0}^{M-1}\subset\mathbb{R}^d$, which we model as a stationary sequence with marginal density $g$. The GMM maximum-likelihood estimators of the component means are defined by
\begin{align}
    \{\widehat\mu_\ell\}_{\ell=0}^{L-1} \triangleq \argmax_{\{\mu_\ell\}_{\ell=0}^{L-1}} \prod_{i=0}^{M-1} f\bigl(z_i;\{\mu_\ell\}_{\ell=0}^{L-1}\bigr).
    \label{eqn:maximumLikelihoodTerm-updated}
\end{align}
For fixed $T<\infty$, it is convenient to introduce the empirical average log-likelihood
\begin{align}
    \mathcal{L}_{M,T}(\{\mu_\ell\}) \triangleq \frac{1}{M}\sum_{i=0}^{M-1} \log f\bigl(z_i;\{\mu_\ell\}\bigr),
    \label{eq:def-empirical-loglik-updated}
\end{align}
and the corresponding population objective
\begin{align}
    \mathcal{H}_T(\{\mu_\ell\}) \triangleq \mathbb{E}_{Z\sim g(\cdot;\{m_\ell^{(T)}\})} \bigl[\log f(Z;\{\mu_\ell\})\bigr].
    \label{eq:def-population-loglik-updated}
\end{align}

The next lemma is the standard misspecified maximum-likelihood estimation fact: the GMM maximum-likelihood estimator converges to the KL minimizer between $g$ and the GMM family $\{f(\cdot;\{\mu_\ell\})\}$.

\begin{lem}[maximum-likelihood estimator under model misspecification]
\label{lemma:KL-MLE}
Let $\{\widehat\mu_\ell\}_{\ell=0}^{L-1}$ be the maximum-likelihood estimators defined in~\eqref{eqn:maximumLikelihoodTerm-updated} for the postulated GMM model with density $f$ in~\eqref{eqn:GMMmodelExplicit}, and let $g$ be the true density of the extracted particles, given by~\eqref{eqn:trueMixtureStatistics-updated}. Assume that the underlying process $\{z_i\}$ is strictly stationary and $\alpha$-mixing with summable coefficients~\eqref{eq:alpha-mixing-summability-1d}. Then, as $M\to\infty$,
\begin{align}
    \{\widehat\mu_\ell\}_{\ell=0}^{L-1} \xrightarrow[\;M\to\infty\;]{\mathrm{a.s.}} \argmin_{\{\mu_\ell\}_{\ell=0}^{L-1}} D_{\mathrm{KL}}\!\Bigl(g(\cdot;\{m_\ell^{(T)}\}) \,\Big\|\, f(\cdot;\{\mu_\ell\})\Bigr),
    \label{eqn:KLdiv-updated}
\end{align}
where $D_{\mathrm{KL}}$ denotes the Kullback–Leibler divergence. In particular, any sequence of maximizers of $\mathcal{L}_{M,T}$ converges almost surely to a maximizer of $\mathcal{H}_T$.
\end{lem}

Lemma~\ref{lemma:KL-MLE} shows that, for each fixed threshold $T$, the GMM maximum-likelihood estimator finds in the limit $M \to \infty$ the Gaussian mixture that is closest in KL divergence to the true mixture $g(\cdot;\{m_\ell^{(T)}\})$. We now analyze this KL minimizer in the high-threshold regime.

\begin{proposition}
[High-threshold alignment of GMM centers]
\label{thm:high-threshold-MLE-alignment}
For each $(N,T)$, let $\{z_i\}_{i=0}^{M-1}$ be the extracted particles, modeled as a stationary sequence with marginal density~\eqref{eqn:trueMixtureStatistics-updated}, where $M = M(N,T)$ is the number of picks produced by Algorithm~\ref{alg:particlePickerTemplateMatching}. Recall the definition of $v_\ell$ from~\eqref{eqn:v_ell_def}. Let
\begin{align}
    \{\widehat\mu_\ell(N,T)\}_{\ell=0}^{L-1} \in \argmax_{\{\mu_\ell\}} \prod_{i=0}^{M-1} f\bigl(z_i;\{\mu_\ell\}\bigr)
    \label{eq:def-MLE-centers-updated}
\end{align}
denote the corresponding GMM maximum-likelihood estimators of the component means. Assume that the underlying process is $\alpha$-mixing with summable coefficients~\eqref{eq:alpha-mixing-summability-1d}. Then there exists a permutation $\pi$ of $\{0,\dots,L-1\}$ such that
\begin{align}    
    \lim_{T\to\infty} \lim_{N\to\infty} \frac{\widehat\mu_{\pi(\ell)}(N,T)}{T} = v_\ell, \qquad 0\le\ell\le L-1,
    \label{eq:double-limit-centers-updated}
\end{align}
where the convergence holds in probability.
\end{proposition}

\begin{proof} [Proof of Proposition~\ref{thm:high-threshold-MLE-alignment}] We prove in steps.

\paragraph{Step 1: Population limit for fixed $T$.}
By Lemma~\ref{lemma:KL-MLE} and the assumption $M(N,T)\to\infty$ as $N\to\infty$, any sequence of GMM maximum-likelihood estimators $\{\widehat\mu_\ell(N,T)\}$ converges almost surely, for each fixed $T$, to a maximizer $\{\mu_\ell^\star(T)\}$ of the population objective $\mathcal{H}_T$, i.e.
\begin{align}
    \{\mu_\ell^\star(T)\} \in \argmax_{\{\mu_\ell\}} \mathcal{H}_T(\{\mu_\ell\}).
    \label{eq:mu-star-maximizer-updated}
\end{align}

\paragraph{Step 2: High-threshold structure of the population objective.}
We next show that there exists a permutation $\pi$, such that the following holds as $T \to \infty$:
\begin{align}
    \max_{0\le\ell\le L-1} \|\mu_{\pi(\ell)}^\star(T) -m_\ell^{(T)}\|_2 = o(T).
\end{align}
Using the mixture representation~\eqref{eqn:trueMixtureStatistics-updated}, we can decompose
\begin{align}
    \mathcal{H}_T(\{\mu_\ell\}) = \sum_{\ell=0}^{L-1} \pi_\ell\, \mathbb{E}_{Z\sim g_\ell^{(T)}} \bigl[\log f(Z;\{\mu_\ell\})\bigr].
    \label{eq:HT-decomposition-updated}
\end{align}
By~\eqref{eq:component-concentration-appendix-updated}, if $Z\sim g_\ell^{(T)}$ then
\begin{align}
    \frac{Z}{T} \xrightarrow[T\to\infty]{\mathbb{P}} v_\ell, \qquad \frac{Z - m_\ell^{(T)}}{T} \xrightarrow[T\to\infty]{\mathbb{P}} 0.
    \label{eq:Z-concentration-updated}
\end{align}
The log-density of the $k$-th Gaussian component of $f$ satisfies
\begin{align}
    \log \mathcal{N}(Z;\mu_k,\Sigma_{\mathrm{GMM}}) = -\frac{1}{2} (Z-\mu_k)^\top \Sigma_{\mathrm{GMM}}^{-1}(Z-\mu_k) + C,
\end{align}
for a constant $C$ independent of $Z$ and $\mu_k$. 
Expanding the quadratic form at $Z = m_\ell^{(T)} + R_T$ with $R_T/T\to 0$ in probability, we obtain
\begin{align}
    (Z-\mu_k)^\top \Sigma_{\mathrm{GMM}}^{-1}(Z-\mu_k) &= (m_\ell^{(T)}-\mu_k)^\top\Sigma_{\mathrm{GMM}}^{-1}(m_\ell^{(T)}-\mu_k)
    \nonumber
    \\ &\quad
    + 2 R_T^\top\Sigma_{\mathrm{GMM}}^{-1}(m_\ell^{(T)}-\mu_k) + R_T^\top\Sigma_{\mathrm{GMM}}^{-1}R_T.
\end{align}
Since the cross term is $O_{\mathbb{P}}(\|R_T\|_2 T)=o_{\mathbb{P}}(T^2)$ and
the remainder term is $O_{\mathbb{P}}(\|R_T\|_2^2)=o_{\mathbb{P}}(T^2)$, we have, 
\begin{align}
    (Z-\mu_k)^\top \Sigma_{\mathrm{GMM}}^{-1}(Z-\mu_k) = (m_\ell^{(T)}-\mu_k)^\top\Sigma_{\mathrm{GMM}}^{-1}(m_\ell^{(T)}-\mu_k) + o_{\mathbb{P}}(T^2),
\end{align}
uniformly over configurations with $\|m_\ell^{(T)}-\mu_k\|_2=O(T)$. Consequently, the leading contribution to $\mathbb{E}_{Z\sim g_\ell^{(T)}}[\log f(Z;\{\mu_\ell\})]$ is of order $-T^2$ and is minimized (i.e., the log-likelihood is maximized) by choosing at least one center $\mu_k$ close to $m_\ell^{(T)}$ on the scale $T$. Thus, for $T$ large enough, any maximizer $\{\mu_\ell^\star(T)\}$ of $\mathcal{H}_T$ must satisfy
\begin{align}
    \min_k \|\mu_k^\star(T)-m_\ell^{(T)}\|_2 = o(T), \qquad 0\le\ell\le L-1.
\end{align}
Since the directions $\{v_\ell\}$ are distinct, the high-threshold cones around $\{m_\ell^{(T)}\}$ become disjoint on the scale $T$, and we can index the centers so that there exists a permutation $\pi$ with
\begin{align}
    \max_{0\le\ell\le L-1} \|\mu_{\pi(\ell)}^\star(T) -m_\ell^{(T)}\|_2 = o(T).
    \label{eq:mu-star-close-to-m-updated}
\end{align}
Dividing~\eqref{eq:mu-star-close-to-m-updated} by $T$ and using $m_\ell^{(T)}/T \to v_\ell$ yields
\begin{align}
    \max_{0 \le \ell \le L-1} \left\| \frac{\mu_{\pi(\ell)}^\star(T)}{T} - v_\ell \right\|_2 \xrightarrow[T\to\infty]{} 0.
\end{align}
Thus, we obtain for every $0\le\ell\le L-1$,
\begin{align}
    \frac{\mu_{\pi(\ell)}^\star(T)}{T} \xrightarrow[T\to\infty]{} v_\ell.
    \label{eq:pop-center-limit-updated}
\end{align}

\paragraph{Step 3: Combining the limits.}
Fix $\ell$. For each $T$, ~\eqref{eq:mu-star-maximizer-updated} and Lemma~\ref{lemma:KL-MLE} imply that
\begin{align}
    \widehat\mu_{\pi(\ell)}(N,T) \xrightarrow[N\to\infty]{\mathrm{a.s.}} \mu_{\pi(\ell)}^\star(T),
\end{align}
while~\eqref{eq:pop-center-limit-updated} gives
\begin{align}
    \frac{\mu_{\pi(\ell)}^\star(T)}{T} \xrightarrow[T\to\infty]{} v_\ell.
\end{align}
Combining these yields the double limit
\begin{align}
    \lim_{T\to\infty}\;\lim_{N\to\infty} \frac{\widehat\mu_{\pi(\ell)}(N,T)}{T} = v_\ell,
\end{align}
with convergence in probability, which is~\eqref{eq:double-limit-centers-updated}.
\end{proof}

\subsection{Proof of Theorem~\ref{thm:classesCentersVersusTemplatesInformal} and Theorem~\ref{thm:classesCentersVersusTemplatesInformalStationary}}
\label{sec:proofOfMainThms}

The proof is a direct consequence of Proposition~\ref{thm:prop1}, Proposition~\ref{prop:truncated-mean-isotropic} and Proposition~\ref{thm:high-threshold-MLE-alignment}.
For each $(N,T)$, let $\{\widehat\mu_\ell(N,T)\}_{\ell=0}^{L-1}$ denote the GMM maximum-likelihood estimators of the component means based on the extracted set $\mathcal{A}$, as in~\eqref{eq:def-MLE-centers-updated}.
By Proposition~\ref{thm:prop1} and Proposition~\ref{prop:truncated-mean-isotropic}, the means $\{m_\ell^{(T)}\}$ of the true mixture components satisfy
\begin{align}
    \lim_{T\to\infty} \frac{m_\ell^{(T)}}{T} = v_\ell \neq 0,
\end{align}
with
\begin{align}
    v_\ell =
    \begin{cases}
    x_\ell, & \text{under Model~\ref{model:isotropicIID}},\\[0.1cm]
    \dfrac{\Sigma x_\ell}{x_\ell^\top \Sigma x_\ell}, & \text{under } y_i\sim \mathcal{N}(0,\Sigma).
    \end{cases}
\end{align}
In the spherically symmetric setting of the theorem (Model~\ref{model:whiteNoiseIID}  with $\Sigma \propto I_d$ or Model~\ref{model:isotropicIID}), we have $v_\ell = x_\ell$ for all $\ell$.

By Proposition~\ref{thm:high-threshold-MLE-alignment}, that there exists a permutation $\pi$ of $\{0,\dots,L-1\}$ such that
\begin{align}
    \lim_{T\to\infty} \lim_{N\to\infty}
    \frac{\widehat\mu_{\pi(\ell)}(N,T)}{T} = v_\ell,
    \qquad 0\le \ell\le L-1,
\end{align}
where the convergence holds in probability.
This proves the two theorems for the spherically symmetric case with $v_\ell = x_\ell$ and for the stationary Gaussian case with $v_\ell = \frac{\Sigma x_\ell}{x_\ell^\top \Sigma x_\ell}$.

\subsection{Proof of Proposition~\ref{prop:finite-sample-MSE}}  \label{sec:app_sampleComplexity}
Fix $L \ge 2$ and templates $\{x_\ell\}_{\ell=0}^{L-1}\subset\mathbb R^d$ with $x_{\ell_1}\neq x_{\ell_2}$ for $\ell_1\neq\ell_2$.
Let $Y\sim\mathcal{N}(0,\Sigma)$ with $\Sigma\succ 0$, and consider the template--matching selection rule at threshold $T>0$ as in Algorithm~\ref{alg:particlePickerTemplateMatching}. Recall the definition of $g_T(z)$ from~\eqref{eqn:componentOfMixtureModelParticlePicker}, and recall the postulated Gaussian mixture model $f(z;\{\mu_k\})$ from~\eqref{eqn:GMMmodelExcplicitMain}.
Recall the definition of $\{\mu_k^\star(T)\}$ from~\eqref{eq:mu-star-maximizer-updated}
\begin{align}
    \{\mu_k^\star(T)\} \in \argmin_{\{\mu_k\}} D_{\mathrm{KL}}\bigl(g_T \,\|\, f(\cdot;\{\mu_k\})\bigr).
    \label{eq:def-pseudotrue-means}
\end{align}
and recall that $\widehat{\mu}_\ell(\mathcal{A})$ denotes the mixture-model maximum-likelihood estimator of the $\ell$-th class mean obtained from the extracted set of patches $\mathcal{A}$.
Set
\begin{align}
    v_\ell \triangleq \frac{\Sigma x_\ell}{x_\ell^\top\Sigma x_\ell}, \qquad \sigma_\ell^2 \triangleq x_\ell^\top\Sigma x_\ell.
    \label{eq:def-v-ell}
\end{align}

We first establish auxiliary lemmas and then prove the proposition. Lemma~\ref{lem:high-threshold-bias-Gaussian} is proved in the end of the end of the subsection.
\begin{lem}
\label{lem:high-threshold-bias-Gaussian}
There exist constants $T_0>0$ and $C_{\mathrm{bias}}>0$ (depending only on $\Sigma$, $\{x_\ell\}$, and $\{w_k\}$) and a permutation $\pi$ of $\{0,\dots,L-1\}$ such that, for all $T\ge T_0$ and all $0\le\ell\le L-1$,
\begin{align}
    \big\|\mu_{\pi(\ell)}^\star(T) - T\,v_\ell\big\|_2 \le \frac{C_{\mathrm{bias}}}{T}.
    \label{eq:mu-star-high-threshold-bias}
\end{align}
\end{lem}

\begin{lem}
\label{lem:variance-bound-GMM}
Under the assumption of Proposition~\ref{prop:finite-sample-MSE} There exist constants $M_0 \ge 1$ and $C_{\mathrm{var}}>0$, independent of $d,M,T$, such that for all $M\ge M_0$ and $T\ge T_0$,
\begin{align}
    \mathbb{E}\big\|\widehat{\mu}_\ell(\mathcal{A}) - \mu_\ell^\star(T)\big\|_2^2 \le \frac{C_{\mathrm{var}}\,d}{M}, \qquad 0\le \ell \le L-1. \label{eq:lem-MSE-mu-variance-bound}
\end{align}
\end{lem}

Lemma~\ref{lem:variance-bound-GMM} is a standard result in the literature on well-separated GMMs; see, for example,~\cite{newey1994large, moitra2010settling, dasgupta1999learning}. In our setting, the high-threshold regime $T \to \infty$ ensures that the mixture components are well separated, so these results apply for sufficiently large $T$.

\begin{proof}[Proof of Proposition~\ref{prop:finite-sample-MSE}]
Fix $\ell\in\{0,\dots,L-1\}$ and $T\ge T_0$, and let $\pi$ be the permutation from Lemma~\ref{lem:high-threshold-bias-Gaussian}.
We decompose
\begin{align}
    \widehat{\mu}_{\pi(\ell)}(\mathcal{A}) - T v_\ell = \big(\widehat{\mu}_{\pi(\ell)}(\mathcal{A}) - \mu_{\pi(\ell)}^\star(T)\big) + \big(\mu_{\pi(\ell)}^\star(T) - T v_\ell\big).
    \label{eq:finite-MSE-vector-decomp}
\end{align}
Using $\|a+b\|_2^2 \le 2\|a\|_2^2 + 2\|b\|_2^2$ and taking expectations in
\eqref{eq:finite-MSE-vector-decomp} gives
\begin{align}
    \mathbb{E}\Big[\big\|\widehat{\mu}_{\pi(\ell)}(\mathcal{A}) - T v_\ell\big\|_2^2\Big] \le 2\,\mathbb{E}\Big[\big\|\widehat{\mu}_{\pi(\ell)}(\mathcal{A}) - \mu_{\pi(\ell)}^\star(T)\big\|_2^2\Big] + 2\,\big\|\mu_{\pi(\ell)}^\star(T) - T v_\ell\big\|_2^2.
    \label{eq:finite-MSE-decomposition}
\end{align}
By the variance bound of Lemma~\ref{lem:variance-bound-GMM},
\begin{align}
    \mathbb{E}\Big[\big\|\widehat{\mu}_{\pi(\ell)}(\mathcal{A}) - \mu_{\pi(\ell)}^\star(T)\big\|_2^2\Big] \le \frac{C_{\mathrm{var}}\,d}{M}, \qquad M\ge M_0,\; T\ge T_0,
    \label{eq:finite-MSE-variance-part}
\end{align}
while the high-threshold bias control of
Lemma~\ref{lem:high-threshold-bias-Gaussian} yields
\begin{align}
    \big\|\mu_{\pi(\ell)}^\star(T) - T v_\ell\big\|_2^2 \le \frac{C_{\mathrm{bias}}^2}{T^2}, \qquad T\ge T_0.
    \label{eq:finite-MSE-bias-part}
\end{align}
Substituting \eqref{eq:finite-MSE-variance-part} and \eqref{eq:finite-MSE-bias-part} into \eqref{eq:finite-MSE-decomposition}, we obtain
\begin{align}
    \mathbb{E}\Big[\big\|\widehat{\mu}_{\pi(\ell)}(\mathcal{A}) - T v_\ell\big\|_2^2\Big] &\le 2\,\frac{C_{\mathrm{var}}\,d}{M} + 2\,\frac{C_{\mathrm{bias}}^2}{T^2} \nonumber
    \\
    & \le C_1\,\frac{d}{M} + \frac{C_2}{T^2},
    \label{eq:finite-MSE-final}
\end{align}
for all $M\ge M_0$ and $T\ge T_0$, where
$C_1 \triangleq 2C_{\mathrm{var}}$ and $C_2 \triangleq 2C_{\mathrm{bias}}^2$ are
independent of $d,M,T$, completing the proof.
\end{proof}

\begin{proof}[Proof of Lemma~\ref{lem:high-threshold-bias-Gaussian}]
Fix $\ell$ and consider the scalar projection $U_\ell \triangleq \langle x_\ell, Y\rangle$. Then $U_\ell\sim\mathcal{N}(0,\sigma_\ell^2)$ with $\sigma_\ell^2$ as in~\eqref{eq:def-v-ell}.
For the one--dimensional Gaussian tail, the conditional mean admits the standard Mills--ratio expansion (Lemma~\ref{lem:C4}):
\begin{align}
    \mathbb{E}[U_\ell \mid U_\ell \ge T] = \sigma_\ell\,\frac{\varphi(a_T)}{1-\Phi(a_T)}, \qquad a_T \triangleq \frac{T}{\sigma_\ell},
    \label{eq:trunc-1d-mean-def}
\end{align}
where $\varphi$ and $\Phi$ denote the standard normal pdf and cdf, respectively.
As $T \to \infty$,
\begin{align}
    \frac{\varphi(a_T)}{1-\Phi(a_T)} = a_T + \frac{1}{a_T} + O\!\left(\frac{1}{a_T^3}\right), 
    \label{eq:Mills-ratio}
\end{align}
so that
\begin{align}
    \mathbb{E}[U_\ell \mid U_\ell \ge T] &= \sigma_\ell\left(a_T + \frac{1}{a_T} + O\!\left(\frac{1}{a_T^3}\right)\right) \nonumber\\
    &= T + \frac{\sigma_\ell^2}{T} + O\!\left(\frac{1}{T^3}\right).
    \label{eq:trunc-1d-mean-asymptotic}
\end{align}

Next, decompose $Y$ along $v_\ell$ and its orthogonal complement. By Gaussian regression,
\begin{align}
    \mathbb{E}[Y \mid U_\ell = u] = \frac{u}{\sigma_\ell^2}\,\Sigma x_\ell = u\,v_\ell.
    \label{eq:Gaussian-regression}
\end{align}
Recall the definition of $m_\ell^{(T)}$ from ~\eqref{eqn:mixture-models-means}. Conditioning first on $\{U_\ell\ge T\}$ and then on $U_\ell$, we obtain from~\eqref{eq:Gaussian-regression}--\eqref{eq:trunc-1d-mean-asymptotic}
\begin{align}
    m_\ell^{(T)} &= \mathbb{E}[Y \mid U_\ell \ge T] = \mathbb{E}[\,\mathbb{E}[Y \mid U_\ell]\mid U_\ell \ge T] \nonumber\\
    &= \mathbb{E}[U_\ell \mid U_\ell \ge T]\,v_\ell \nonumber
    \\
    &= \left(T + \frac{\sigma_\ell^2}{T} + O\!\left(\frac{1}{T^3}\right) \right)v_\ell, \qquad T\to\infty.
    \label{eq:m-ell-T-expansion}
\end{align}
In particular, there exist constants $T_1>0$ and $C_1>0$ such that for all $T\ge T_1$,
\begin{align}
    \big\|m_\ell^{(T)} - T v_\ell\big\|_2 \le \frac{C_1}{T}.
    \label{eq:trunc-mean-1-over-T}
\end{align}

By definition, the KL--minimizer $\{\mu_k^\star(T)\}$ in~\eqref{eq:def-pseudotrue-means} also maximizes the population log--likelihood $
\mathcal{H}_T(\{\mu_k\}) \triangleq
\mathbb{E}_{Z\sim g_T}\big[\log f(Z;\{\mu_k\})\big]$. The gradient of $\mathcal{H}_T$ with respect to $\mu_k$ is
\begin{align}
    \nabla_{\mu_k}\mathcal{H}_T(\{\mu_j\}) = \mathbb{E}_{Z\sim g_T}\Big[\gamma_k(Z;\{\mu_j\})\,\Sigma_{\text{GMM}}^{-1}(Z-\mu_k)\Big],
    \label{eq:grad-pop-loglik}
\end{align}
where
\begin{align}
    \gamma_k(z;\{\mu_j\}) \triangleq    \frac{w_k\,\mathcal{N}(z; \mu_k; \Sigma_{\text{GMM}})} {\sum_{m=0}^{L-1} w_m\,\mathcal{N}(z; \mu_m; \Sigma_{\text{GMM}})},
    \label{eq:def-responsibility}
\end{align}
is the posterior responsibility of component $k$ under the model $f$.
At the true parameter we have the stationarity condition
\begin{align}
\nabla_{\mu_k}\mathcal{H}_T(\{\mu_j^\star(T)\}) = 0,
\qquad k=0,\dots,L-1.
\label{eq:stationary-condition}
\end{align}

Decompose the expectation in~\eqref{eq:grad-pop-loglik} over the true mixture $g_T$:
\begin{align}
    0  &= \mathbb{E}_{Z\sim g_T}\Big[
    \gamma_k(Z;\{\mu_j^\star(T)\})\,\Sigma_{\text{GMM}}^{-1}(Z-\mu_k^\star(T))
    \Big] \nonumber
    \\ &= \sum_{\ell=0}^{L-1} \pi_\ell
    \mathbb{E}_{Z\sim g_\ell^{(T)}}\Big[
    \gamma_k(Z;\{\mu_j^\star(T)\})\,\Sigma_{\text{GMM}}^{-1}(Z-\mu_k^\star(T))\Big].
    \label{eq:grad-decomposition}
\end{align}
For large $T$, the component distributions $\{g_\ell^{(T)}\}$ are concentrated in disjoint high-dimensional caps around $\{m_\ell^{(T)}\}$, and the distance between points of $g_\ell^{(T)}$ and $g_{k}^{(T)}$ for $k \neq \ell$ is of order $T$. Consequently, there exist constants $c,C_2>0$ such that, for all $k\neq\ell$,
\begin{align}
    \mathbb{E}_{Z\sim g_\ell^{(T)}}\big[\gamma_k(Z;\{\mu_j^\star(T)\})\big] \le C_2 e^{-cT^2},
    \label{eq:responsibility-off-diagonal}
\end{align}
while
\begin{align}
    \mathbb{E}_{Z\sim g_\ell^{(T)}}\big[\gamma_\ell(Z;\{\mu_j^\star(T)\})\big] = 1 + O(e^{-cT^2}).
    \label{eq:responsibility-diagonal}
\end{align}
Plugging \eqref{eq:responsibility-off-diagonal}--\eqref{eq:responsibility-diagonal} into \eqref{eq:grad-decomposition} and isolating the dominant term with $\ell=k$ yields
\begin{align}
    \pi_k \mathbb{E}_{Z\sim g_k^{(T)}}\big[\Sigma_{\text{GMM}}^{-1}(Z-\mu_k^\star(T))\big] = O(e^{-cT^2}).
\end{align}
Equivalently, as $\pi_k > 0$ for every $k \in \{0, 1, \ldots L-1 \}$,
\begin{align}
    \mu_k^\star(T) - m_k^{(T)} =
    O(e^{-cT^2}).
    \label{eq:mu-star-minus-m-exp}
\end{align}
Thus there exist constants $T_2>0$ and $C_2'>0$ such that for all $T\ge T_2$,
\begin{align}
    \big\|\mu_k^\star(T) - m_k^{(T)}\big\|_2 \le C_2' e^{-cT^2}.
    \label{eq:mu-star-minus-m-explicit}
\end{align}

Combining~\eqref{eq:trunc-mean-1-over-T} and~\eqref{eq:mu-star-minus-m-explicit} via the triangle inequality, we obtain for $T\ge T_0\triangleq\max\{T_1,T_2\}$,
\begin{align}
    \big\|\mu_\ell^\star(T) - T v_\ell\big\|_2 & \le \big\|\mu_\ell^\star(T) - m_\ell^{(T)}\big\|_2
    + \big\|m_\ell^{(T)} - T v_\ell\big\|_2 \nonumber\\
    &\le C_2' e^{-cT^2} + \frac{C_1}{T}.
    \label{eq:mu-star-bias-two-terms}
\end{align}
Since $e^{-cT^2} \le 1/T$ for all sufficiently large $T$, we may absorb the exponential term into the $1/T$ term: there exists $C_{\mathrm{bias}}>0$ and possibly a larger $T_0$ such that, for all $T\ge T_0$,
\begin{align}
    \big\|\mu_\ell^\star(T) - T v_\ell\big\|_2 \le \frac{C_{\mathrm{bias}}}{T},
\end{align}
for each $\ell$. This completes the proof.

\end{proof}

\subsection{Template matching with local maxima}
\label{subsec:local-maxima}

In practical implementations, template matching and the subsequent extraction are typically restricted to local maxima of the correlation score: rather than selecting every location where the cross-correlation exceeds a threshold $T$, one first constructs a correlation heatmap and then retains only those locations that are \emph{local maxima} in a prescribed neighborhood.

To connect this to our theoretical model, for each template $x_\ell$ and patch center $s\in\mathbb{Z}$, define the template-matching score
\begin{align}
    C_\ell(s) \triangleq \langle x_\ell, y_s\rangle,
    \label{eq:Cl-def-1d}
\end{align}
where $y_s$ is the patch extracted at location $s$ as in~\eqref{eq:ys-def-1d}. The corresponding heatmap is
\begin{align}
    C(s) \triangleq \max_{0\le \ell \le L-1} C_\ell(s),
    \label{eq:C-max-def-1d}
\end{align}
and a patch is retained at $s$ if $C(s)\ge T$ and $s$ is a local maximum of $C$ in a prescribed neighborhood.

To formalize local maxima, fix an integer radius $r\ge 0$ and, for each $s\in\mathbb{Z}$, define
\begin{align}
    N(s) \triangleq \{t\in \mathbb{Z} : |t-s|\le r\}.
    \label{eq:Ns-example}
\end{align}
We say that $s$ is a local maximum of $C$ if $C(s) \ge C(s')$ for all $s'\in N(s)$.
Intuitively, in the high-threshold regime $T\to\infty$, exceedances above $T$ become rare. In this regime, the local-maximum constraint has a negligible effect on the leading-order behavior: conditional on having at least one exceedance in a finite window, it is overwhelmingly likely that there is exactly one high exceedance, and hence at most one high local maximum. Consequently, in the high-threshold regime, the non-local-maximum suppression step does not change the leading-order bias induced by template matching.

\begin{proposition}
\label{prop:isolated-exceedances-1d}
Let $\{\xi_t\}_{t\in\mathbb{Z}}$ be a stationary Gaussian noise process satisfying~\eqref{eq:stationary-cov-1d} and the $\alpha$-mixing condition~\eqref{eq:alpha-mixing-summability-1d}, and let $C(s)$ be the template-matching heatmap defined in~\eqref{eq:Cl-def-1d}-\eqref{eq:C-max-def-1d} for a finite collection of templates $\{x_\ell\}_{\ell=0}^{L-1}$. Fix a finite index window $W\subset\mathbb{Z}$ and define
\begin{align}
    N_T(W) \triangleq \bigl|\{s\in W : C(s)\ge T\}\bigr|
\end{align}
to be the number of exceedances of level $T$ in $W$.
Then, as $T\to\infty$,
\begin{align}
    \mathbb{P}\big(N_T(W)\ge 2 \,\big|\, N_T(W)\ge 1\big) = \frac{\mathbb{P}(N_T(W)\ge 2)}{\mathbb{P}(N_T(W)\ge 1)} \longrightarrow 0.
    \label{eq:ratio-exceedances}
\end{align}
\end{proposition}

\begin{proof}[Proof of Proposition~\ref{prop:isolated-exceedances-1d}]
By stationarity of $\{C(s)\}_{s\in\mathbb Z}$,
\begin{align}
    \lambda_T \triangleq \mathbb{E}[N_T(W)] = \sum_{s\in W} \mathbb{P}\big(C(s)\ge T\big) = |W|\,\mathbb{P}\big(C(0)\ge T\big).
\end{align}
Since $C(0)$ is the maximum of finitely many centered Gaussian random variables $\{C_\ell(0)\}_{\ell=0}^{L-1}$ (see~\eqref{eq:Cl-def-1d}–\eqref{eq:C-max-def-1d}), it has a sub-Gaussian tail: there exist constants $c_1>0$ and $K_1<\infty$ such that
\begin{align}
    \mathbb{P}\big(C(0)\ge T\big) \le K_1 e^{-c_1 T^2}, \qquad T\ge 0,
\end{align}
by a union bound and standard Gaussian tail estimates~\cite{durrett2019probability}. Hence $\lambda_T \to 0$ exponentially fast as $T\to\infty$.

Because $W$ is finite, we can bound the probability of multiple exceedances by a union bound over pairs of indices:
\begin{align}
    \mathbb{P}\big(N_T(W)\ge 2\big) \le \sum_{\substack{s,t\in W\\ s\neq t}} \mathbb{P}\big(C(s)\ge T,\; C(t)\ge T\big).
    \label{eq:pair-union}
\end{align}
Fix $s\neq t$. By definition $C(s) = \max_{0\le \ell\le L-1} C_\ell(s)$ and $C(t)=\max_{0\le k\le L-1} C_k(t)$ with $C_\ell(s) = \langle x_\ell, y_s\rangle$, where each $C_\ell(s)$ is a centered Gaussian linear functional of the underlying field $\{\xi_u\}$. Thus,
\begin{align}
    \mathbb{P}\big(C(s)\ge T,\; C(t)\ge T\big) \le \sum_{\ell,k=0}^{L-1} \mathbb{P}\big(C_\ell(s)\ge T,\; C_k(t)\ge T\big).
    \label{eq:max-to-components}
\end{align}
For each fixed pair $(\ell,k)$ and $s\neq t$, the vector $(C_\ell(s),C_k(t))$ is a non-degenerate bivariate Gaussian. In particular, its correlation coefficient satisfies $|\rho_{\ell,k}(s,t)|<1$ (a consequence of strict positive-definiteness of the covariance in~\eqref{eq:stationary-cov-1d}). Standard bivariate Gaussian tail bounds~\cite{leadbetter2012extremes} then yield
\begin{align}
    \mathbb{P}\big(C_\ell(s)\ge T,\; C_k(t)\ge T\big) \le K_2 e^{-c_2 T^2}, \qquad T\ge 0,
\end{align}
for some constants $K_2<\infty$ and $c_2>c_1$ that are uniform over $(\ell,k)$ and $s\neq t$ in the finite window $W$. Combining this with~\eqref{eq:max-to-components} gives
\begin{align}
    \mathbb{P}\big(C(s)\ge T,\; C(t)\ge T\big) \le L^2 K_2 e^{-c_2 T^2}.
\end{align}
Substituting into~\eqref{eq:pair-union}, we obtain
\begin{align}
    \mathbb{P}\big(N_T(W)\ge 2\big) \le |W|(|W|-1)L^2 K_2 e^{-c_2 T^2} = O\big(e^{-c_2 T^2}\big).
\end{align}

On the other hand, by Markov’s inequality and the definition of $\lambda_T$,
\begin{align}
    \mathbb{P}\big(N_T(W)\ge 1\big) \le \lambda_T = |W|\,\mathbb{P}\big(C(0)\ge T\big) \le |W|K_1 e^{-c_1 T^2},
    \label{eq:Nge1-upper}
\end{align}
while a lower bound is obtained by the union bound with a correction for double counting:
\begin{align}
    \mathbb{P}\big(N_T(W)\ge 1\big) &\ge \sum_{s\in W} \mathbb{P}\big(C(s)\ge T\big) - \sum_{\substack{s,t\in W\\ s\neq t}} \mathbb{P}\big(C(s)\ge T,\; C(t)\ge T\big)
    \nonumber\\
    &= \lambda_T - O\big(e^{-c_2 T^2}\big).
    \label{eq:Nge1-lower}
\end{align}
Since $c_2>c_1$, the error term $e^{-c_2 T^2}$ is $o(e^{-c_1 T^2})$, and hence $O(e^{-c_2 T^2}) = o(\lambda_T)$ as $T\to\infty$. Combining~\eqref{eq:Nge1-upper} and~\eqref{eq:Nge1-lower} gives
\begin{align}
    \mathbb{P}\big(N_T(W)\ge 1\big) \sim \lambda_T.
\end{align}
Therefore, as $T \to \infty$,
\begin{align}
    \mathbb{P}\big(N_T(W)\ge 2 \,\big|\, N_T(W)\ge 1\big) = \frac{\mathbb{P}(N_T(W)\ge 2)}{\mathbb{P}(N_T(W)\ge 1)} = \frac{O(e^{-c_2 T^2})}{\lambda_T}
    \longrightarrow 0,
\end{align}
because $\lambda_T = \Theta(e^{-c_1 T^2})$ with $c_1 < c_2$. This proves~\eqref{eq:ratio-exceedances}.
\end{proof}

\section{3D reconstruction}
\label{sec:3D-reconstruction}

\subsection{Unified reconstruction model for cryo-EM and cryo-ET}
\label{sec:app_recon_unified}

\paragraph{Forward model.}
We describe cryo-EM and cryo-ET within a common framework.
Let $V \in \mathbb{R}^{d_{\mathrm{vol}}}$ denote the underlying 3D volume, and let
$\{R_i\}_{i=0}^{M-1} \subset \mathsf{SO}(3)$ be random rotations.
Each extracted datum $z_i \in \mathbb{R}^d$ (with $d$ the number of pixels or voxels of the observed object) is modeled as
\begin{align}
    z_i = \Pi\bigl(R_i \cdot V\bigr) + \epsilon_i, \qquad i = 0,\dots,M-1,
    \label{eqn:unified-cryo-model}
\end{align}
where $\Pi$ is a linear imaging operator and 
$\{\epsilon_i\}_{i=0}^{M-1} \overset{\text{i.i.d.}}{\sim} \mathcal{N}(0,\Sigma_V)$ are additive Gaussian noise terms with covariance $\Sigma_V \succ 0$.

\begin{itemize}
    \item \emph{Cryo-ET:} $\Pi$ is the identity operator on volumes, so $z_i$ is a 3D subtomogram and $d = d_{\mathrm{vol}}$.
    \item \emph{Cryo-EM:} $\Pi$ is the tomographic projection operator (optionally including CTF and other imaging effects), so $z_i$ is a 2D particle image and $d < d_{\mathrm{vol}}$.
\end{itemize}

In both modalities, a particle-picking stage first detects and extracts $z_i$ from the raw data (micrographs in cryo-EM, tomograms in cryo-ET), and a subsequent reconstruction stage estimates $V$ from the extracted observations $\{z_i\}_{i=0}^{M-1}$.

\begin{remark}
We use $\Sigma_V$ to denote the noise covariance in the \emph{postulated} reconstruction model~\eqref{eqn:unified-cryo-model}, and $\Sigma$ to denote the covariance of the noise in the input to the particle picker, as in Algorithm~\ref{alg:particlePickerTemplateMatching}. These covariance matrices need not coincide in our analysis.
\end{remark}

\paragraph{Template generation and particle picking.}
During particle picking, a bank of templates is generated from a single template volume $V_{\mathrm{template}}$ by applying a finite set of rotations $\{R_\ell\}_{\ell=0}^{L-1} \subset \mathsf{SO}(3)$:
\begin{align}
    x_\ell \triangleq \Pi\bigl(R_\ell \cdot V_{\mathrm{template}}\bigr), \qquad \ell = 0,\dots,L-1.
\end{align}
In cryo-ET, these are 3D templates (since $\Pi = I$); in cryo-EM, they are 2D projection templates.
The templates $\{x_\ell\}$ are used as inputs to Algorithm~\ref{alg:particlePickerTemplateMatching}, with a correlation threshold $T$.
When Algorithm~\ref{alg:particlePickerTemplateMatching} is applied to pure noise whose candidate patches satisfy $y_i \sim \mathcal{N}(0,\Sigma)$ (or more generally Model~\ref{model:isotropicIID}), the law of a picked datum $Z$ is a finite mixture
\begin{align}
    g^{(T)}(z)\triangleq \sum_{\ell=0}^{L-1} \pi_\ell\, g_\ell^{(T)}(z), \qquad z\in\mathbb{R}^d,
    \label{eqn:actualUnifiedStatistics}
\end{align}
where $\pi_\ell \ge 0$, $\sum_{\ell=0}^{L-1}\pi_\ell = 1$ reflect the relative frequencies of picks associated with template $x_\ell$, and $g_\ell^{(T)}$ is the component density corresponding to template $x_\ell = \Pi(R_\ell\cdot V_{\mathrm{template}})$.

In the Gaussian setting $y_i \sim \mathcal{N}(0,\Sigma)$, each component is a truncated Gaussian,
\begin{align}
    g_\ell^{(T)}(z) = C_\ell^{(T)} \exp\!\left(-\frac{1}{2} z^\top \Sigma^{-1} z\right) \mathbbm{1}_{\{\langle z, \Pi(R_\ell \cdot V_{\text{template}})\rangle \ge T\}},     \label{eqn:componentOfMixtureModelParticlePickerUnified}
\end{align}
where $C_\ell^{(T)}$ is the normalizing constant chosen so that $\int_{\mathbb{R}^d} g_\ell^{(T)}(z)\,dz = 1$.
In the spherically symmetric i.i.d.\ noise model (Model~\ref{model:isotropicIID}) with base density $p_{\mathrm{sph}}$, the components take the form
\begin{align}
    g_\ell^{(T)}(z) = C_\ell^{(T)}\, p_{\mathrm{sph}}(z)\, \mathbbm{1}_{\{\langle z, \Pi(R_\ell \cdot V_{\text{template}}) \rangle \ge T\}},       \label{eqn:componentOfMixtureModelParticlePickerUnified-2}
\end{align}
with $C_\ell^{(T)}$ again defined by the normalization $\int_{\mathbb{R}^d} g_\ell^{(T)}(z)\,dz = 1$.

Thus, in both cryo-EM and cryo-ET, the output of the template-matching particle picker with templates $x_\ell = \Pi(R_\ell\cdot V_{\mathrm{template}})$ and threshold $T$ is a stationary sequence of observations whose marginal distribution is the mixture~\eqref{eqn:actualUnifiedStatistics}.

\paragraph{Reconstruction likelihood.}
The goal of the reconstruction step is to estimate $V$ in the postulated model~\eqref{eqn:unified-cryo-model} from the extracted observations $\mathcal{A} = \{z_i\}_{i=0}^{M-1}$.
Let $f(z;V)$ denote the likelihood of a single \emph{postulated} observation:
\begin{align}
    f(z;V) = \frac{1}{(2\pi)^{d/2}\det (\Sigma_V)^{1/2}} \int_{R\in\mathsf{SO}(3)} \exp\!\left(-\frac{(\,z - \Pi(R\cdot V)\,)^\top \Sigma_V^{-1}(\,z - \Pi(R\cdot V)\,)}{2}\right)\,d\rho(R),      \label{eqn:distributionUnifiedAlign}
\end{align}
where $\rho$ is a prior distribution on rotations.

The corresponding log-likelihood of $V$ given the extracted data is
\begin{align}
    \mathcal{L}\big(V;\mathcal{A}\big) &\triangleq \frac{1}{M} \sum_{i=0}^{M-1}\log \pp{\int_{R\in\mathsf{SO}(3)} \exp\!\left( -\frac{(\,z_i - \Pi(R\cdot V)\,)^\top \Sigma_V^{-1}(\,z_i - \Pi(R\cdot V)\,)}{2}\right)\,d\rho(R)} +  \text{const}. \label{eqn:assumedLikeEstUnifiedIntegral}
\end{align}
In practice, the integral over $\mathsf{SO}(3)$ is approximated by a finite set of rotations. For analytical convenience (and to match the picking stage), we consider a discretization $\{R_\ell\}_{\ell=0}^{L-1}\subset\mathsf{SO}(3)$ with associated weights $\{w_\ell\}_{\ell=0}^{L-1}$ satisfying $w_\ell\ge 0$ and $\sum_{\ell=0}^{L-1} w_\ell = 1$. Then, the single-particle density is approximated by
\begin{align}
    f(z;V) = \frac{1}{(2\pi)^{d/2}\det (\Sigma_V)^{1/2}} \sum_{\ell=0}^{L-1} w_\ell\, \exp\!\left(-\frac{(\,z - \Pi(R_\ell \cdot V)\,)^\top \Sigma_V^{-1}(\,z - \Pi(R_\ell \cdot V)\,)}{2} \right),    
    \label{eqn:distributionUnifiedAlignDisc}
\end{align}
and the discretized likelihood becomes
\begin{align}
    \mathcal{L}\big(V;\mathcal{A}\big) \triangleq \frac{1}{M} \sum_{i=0}^{M-1}\log \pp{\sum_{\ell = 0}^{L-1} w_\ell\,\exp\!\left( -\frac{(\,z_i - \Pi(R_\ell\cdot V)\,)^\top \Sigma_V^{-1}(\,z_i - \Pi(R_\ell\cdot V)\,)}{2} \right)}  +  \text{const}.  
    \label{eqn:assumedLikeEstUnified}
\end{align}
The maximum-likelihood estimator of the volume under this discretized model is
\begin{align}
    \widehat{V} \triangleq \argmax_{V} \mathcal{L}(V;\mathcal{A}).    \label{eqn:maximumOfLogLikelihoodUnified}
\end{align}

Note that the \emph{postulated} statistics of the extracted data, $f(z;V)$ in~\eqref{eqn:distributionUnifiedAlignDisc}, differ from the \emph{actual} statistics of the picked observations, $g^{(T)}(z)$ in~\eqref{eqn:actualUnifiedStatistics}. Thus, in both cryo-EM and cryo-ET, the reconstruction model is misspecified, and the high-threshold analysis developed in the GMM setting applies with the unified forward operator $\Pi$ (identity for cryo-ET, tomographic projection for cryo-EM).

\begin{assum}\label{assump:1}
The same finite set of rotations $\{R_\ell\}_{\ell=0}^{L-1}\subset\mathsf{SO}(3)$ is used both to generate the templates in the particle-picking stage,
\begin{align*}
    x_\ell = \Pi\bigl(R_\ell \cdot V_{\mathrm{template}}\bigr),
    \qquad \ell = 0,\dots,L-1,
\end{align*}
and as the discretization of $\mathsf{SO}(3)$ in the reconstruction model~\eqref{eqn:distributionUnifiedAlignDisc}.
\end{assum}

The main question is how the reconstructed volume $\widehat{V}$ in~\eqref{eqn:maximumOfLogLikelihoodUnified} relates to the template volume $V_{\mathrm{template}}$ used for picking, when the underlying candidate observations are drawn from a zero-mean noise model (Model~\ref{model:whiteNoiseIID} or Model~\ref{model:isotropicIID}). At first glance, one might expect $\widehat{V}$ to converge to zero, since the input data have zero mean before selection. However, Corollary~\ref{thm:3DstructureInformal} shows that, as the number of extracted observations $M \to \infty$ and the threshold $T\to\infty$, the reconstructed volume $\widehat{V}$ instead converges (up to scaling) to the template volume $V_{\mathrm{template}}$, rotated by an element of $\mathsf{SO}(3)$.

\subsection{Proof of Corollary~\ref{thm:3DstructureInformal}}\label{sec:proofOfCorollary3Dreconstruction}

Let $\widehat{V}(N,T)$ denote any maximum-likelihood estimator of the volume under the discretized unified model \eqref{eqn:distributionUnifiedAlignDisc}--\eqref{eqn:maximumOfLogLikelihoodUnified}, based on the $M = M(N,T)$ extracted observations produced by Algorithm~\ref{alg:particlePickerTemplateMatching}.
We reduce the statement to the high-threshold alignment of GMM centers (Proposition~\ref{thm:high-threshold-MLE-alignment}).

\paragraph{Step 1: Unified likelihood as a constrained GMM.}
Under the unified model~\eqref{eqn:distributionUnifiedAlignDisc}, a single observation $z$ has density
\begin{align}
    f(z;V) = \frac{1}{(2\pi)^{d/2}\det (\Sigma_V)^{1/2}} \sum_{\ell=0}^{L-1} w_\ell\, \exp\!\left(-\frac{(\,z - \Pi(R_\ell \cdot V)\,)^\top \Sigma_V^{-1}(\,z - \Pi(R_\ell \cdot V)\,)}{2} \right),
\end{align}
which is exactly a Gaussian mixture with component means
\begin{align}
    \mu_\ell(V) \;\triangleq\; \Pi\bigl(R_\ell \cdot V\bigr), \qquad \ell=0,\dots,L-1,
\end{align}
shared covariance $\Sigma_V$, and fixed weights $\{w_\ell\}$.
Thus the reconstruction log-likelihood $\mathcal{L}(V;\mathcal{A})$ in~\eqref{eqn:assumedLikeEstUnified} is the GMM log-likelihood
\begin{align}
    \mathcal{L}(V;\mathcal{A}) = \frac{1}{M} \sum_{i=0}^{M-1} \log f(z_i;V) = \mathcal{L}_{\mathrm{GMM}}\bigl(\{\mu_\ell(V)\};\mathcal{A}\bigr),
\end{align}
evaluated on the special parametrization $\mu_\ell = \Pi(R_\ell \cdot V)$.

For each $(N,T)$, define the induced centers
\begin{align}
    \widehat{\mu}_\ell(N,T) \triangleq \Pi\bigl(R_\ell \cdot \widehat{V}(N,T)\bigr), \qquad \ell=0,\dots,L-1.
\end{align}
By construction, $\{\widehat{\mu}_\ell(N,T)\}$ is a maximizer of the empirical GMM log-likelihood over the restricted parameter set
\begin{align}
    \mathcal{M}_{\mathrm{orbit}} \triangleq \Bigl\{(\mu_0,\dots,\mu_{L-1}) : \exists\, V\in\mathbb{R}^{d_{\mathrm{vol}}} \text{ such that } \mu_\ell = \Pi(R_\ell\cdot V),\; 0\le\ell\le L-1 \Bigr\}.
\end{align}

\paragraph{Step 2: Population limit for fixed $T$.}
By the same misspecified maximum-likelihood estimation argument used in Lemma~\ref{lemma:KL-MLE}, for each fixed $T$, as $N\to\infty$ we have
\begin{align}
    \widehat{V}(N,T) \xrightarrow[N\to\infty]{\mathrm{a.s.}} V^\star(T),
    \label{eqn:KL-convergence-unified}
\end{align}
where $V^\star(T)$ maximizes the population objective
\begin{align}
    \mathcal{H}_T^{\mathrm{unified}}(V) \triangleq \mathbb{E}_{Z\sim g^{(T)}}\bigl[\log f(Z;V)\bigr],
\end{align}
with $g^{(T)}$ the picking-induced mixture~\eqref{eqn:actualUnifiedStatistics}.
Equivalently, the induced centers
\begin{align}
    \mu_\ell^\star(T) \triangleq \Pi\bigl(R_\ell \cdot V^\star(T)\bigr)
\end{align}
form a maximizer of the GMM population objective
\begin{align}
    \mathcal{H}_T(\{\mu_\ell\}) \triangleq \mathbb{E}_{Z\sim g^{(T)}}\bigl[\log f(Z;\{\mu_\ell\})\bigr],
\end{align}
restricted to $\mathcal{M}_{\mathrm{orbit}}$:
\begin{align}
    \{\mu_\ell^\star(T)\} \in \argmax_{\{\mu_\ell\}\in\mathcal{M}_{\mathrm{orbit}}} \mathcal{H}_T(\{\mu_\ell\}).
\end{align}

\paragraph{Step 3: High-threshold structure and alignment with the template orbit.}
By Proposition~\ref{thm:prop1} and Proposition~\ref{prop:truncated-mean-isotropic},
the component means $m_\ell^{(T)}$ of the true mixture $g^{(T)}$ satisfy, under spherical symmetric noise,
\begin{align}
    \frac{m_\ell^{(T)}}{T} \xrightarrow[T\to\infty]{} x_\ell = \Pi\bigl(R_\ell \cdot V_{\mathrm{template}}\bigr), \qquad 0\le\ell\le L-1.
\end{align}
Thus, in the notation of Proposition~\ref{thm:high-threshold-MLE-alignment}, we have
\begin{align}
    v_\ell = x_\ell = \Pi\bigl(R_\ell \cdot V_{\mathrm{template}}\bigr).
\end{align}

Applying Proposition~\ref{thm:high-threshold-MLE-alignment} to the same mixture $g^{(T)}$, we obtain that there exists a permutation $\pi$ of $\{0,\dots,L-1\}$ such that any unconstrained population maximizer $\{\tilde{\mu}_\ell^\star(T)\}$ of $\mathcal{H}_T$ satisfies
\begin{align}
    \frac{\tilde{\mu}_{\pi(\ell)}^\star(T)}{T} \xrightarrow[T\to\infty]{} v_\ell = \Pi\bigl(R_\ell \cdot V_{\mathrm{template}}\bigr), \qquad 0\le\ell\le L-1.
    \label{eq:unconstrained-centers-limit-unified}
\end{align}

The high-threshold limiting configuration
\begin{align}
   \lim_{T\to\infty} \frac{m_\ell^{(T)}}{T} = \Pi\bigl(R_\ell \cdot V_{\mathrm{template}}\bigr)
\end{align}
lies in the restricted manifold $\mathcal{M}_{\mathrm{orbit}}$, since
\begin{align}
    \bigl(\Pi(R_0\cdot V_{\mathrm{template}}), \dots, \Pi(R_{L-1}\cdot V_{\mathrm{template}}) \bigr) = \{\mu_\ell(V_{\mathrm{template}})\}.
\end{align}
Therefore, the constrained maximizers $\{\mu_\ell^\star(T)\}$
cannot do better asymptotically (at scale $T^2$) than the unconstrained maximizers
$\{\tilde{\mu}_\ell^\star(T)\}$, and the same high-threshold argument as in
Step~2 of the proof of Proposition~\ref{thm:high-threshold-MLE-alignment}
implies that there exists a rotation $R\in\mathsf{SO}(3)$ such that
\begin{align}
    \frac{\Pi(R_\ell \cdot V^\star(T))}{T} = \frac{\mu_\ell^\star(T)}{T} \xrightarrow[T\to\infty]{} \Pi\bigl(R_\ell \cdot (R \cdot V_{\mathrm{template}})\bigr),
    \qquad 0\le\ell\le L-1.
\end{align}
Under Assumption~\ref{assump:Pi-identifiability}, the map
$V \mapsto \{\Pi(R_\ell\cdot V)\}_{\ell=0}^{L-1}$ identifies $V$ up to a global rotation. Hence the above convergence is only possible if
\begin{align}
    \frac{V^\star(T)}{T} \xrightarrow[T\to\infty]{} R \cdot V_{\mathrm{template}},
    \label{eq:population-volume-limit-unified}
\end{align}
for some $R\in\mathsf{SO}(3)$.

\paragraph{Step 4: Combining the limits.}
Fix $T$. From~\eqref{eqn:KL-convergence-unified} we have
\begin{align}
    \widehat{V}(N,T) \xrightarrow[N\to\infty]{\mathrm{a.s.}} V^\star(T),
\end{align}
and hence
\begin{align}
    \frac{\widehat{V}(N,T)}{T} \xrightarrow[N\to\infty]{\mathrm{a.s.}} \frac{V^\star(T)}{T}.
\end{align}
Combining this with the population limit~\eqref{eq:population-volume-limit-unified} and taking the
iterated limits $N\to\infty$, then $T\to\infty$, yields
\begin{align}
    \lim_{T\to\infty}\lim_{N\to\infty} \frac{\widehat{V}(N,T)}{T} =  R \cdot V_{\mathrm{template}},
\end{align}
with convergence in probability, which is exactly~\eqref{eq:cryoET-white-limit}.

\section{Experimental methods}
\label{sec:methods}

\subsection{Setup of the numerical experiments} 
In the theoretical model presented in Section~\ref{sec:model}, candidate particles $\{y_i\}_{i=0}^{N-1}$ were assumed to be drawn i.i.d. from an underlying distribution. However, in practical particle picking from micrographs, candidates are not sampled i.i.d.; instead, they are extracted using a sliding-window approach that introduces spatial overlap between neighboring regions and induces statistical dependencies.
Furthermore, when selecting particles, it is desirable to avoid redundancy by ensuring that the selected patches do not overlap. To achieve this, the particle detection procedure must enforce a non-overlapping constraint: only one candidate patch should be selected from any set of overlapping regions. The algorithm used to implement this constraint is described in detail in Algorithm~\ref{alg:particlePickerTemplateMatchingWithOverlapping}.

This algorithm identifies candidate particles in a collection of 2D micrographs $\{Y_i\}_{i=0}^{R-1} \subset \mathbb{R}^{k \times k}$ by computing their cross-correlation with a set of template images $\{x_\ell\}_{\ell=0}^{L-1} \subset \mathbb{R}^{d \times d}$. For each micrograph-template pair, a 2D circular cross-correlation (assuming periodic boundary conditions) is computed to produce a heatmap $C_{\ell,i}$, where each pixel indicates the similarity between a local patch of the micrograph and the template. Each $C_{\ell,i}$ has the same spatial resolution as the micrograph.
To consolidate the response across templates, the algorithm computes a combined heat map $C_i$ by taking the maximum pixel over all $L$ templates. This results in a score map where each entry reflects the strongest template match at that location. The pixels in $C_i$ are then flattened and sorted in descending order by score.
The algorithm then iterates through the ranked pixels and selects particle candidates that (i) exceed a predefined threshold $T$, and (ii) do not spatially overlap with any previously selected patch. A binary mask is maintained to track used regions and enforce non-overlap. Only the highest-scoring, non-overlapping patches are added to the output set $\mathcal{A}$.
This non-overlapping template matching strategy is commonly used in cryo-EM; see, for example,~\cite{eldar2024object}.

\begin{algorithm}[t!]
\caption{\texttt{Template-Matching Particle Picker with Overlap}}
\label{alg:particlePickerTemplateMatchingWithOverlapping}
\textbf{Input:} Templates $\{x_\ell\}_{\ell=0}^{L-1} \subset \mathbb{R}^{d \times d}$,
         Micrographs $\{Y_i\}_{i=0}^{R-1} \subset \mathbb{R}^{k \times k}$,
         Threshold $T \in \mathbb{R}$ \\
\textbf{Output:} Set $\mathcal{A}$ of extracted particle patches
\begin{algorithmic}[1]
\State $\mathcal{A} \gets \emptyset$
\For{$i = 0$ to $R - 1$} \Comment{Loop over micrographs}
    \State $C_i \gets \text{zero matrix in } \mathbb{R}^{k \times k}$ \Comment{Heatmap storing the maximum cross-correlation}
    \For{$\ell = 0$ to $L - 1$} \Comment{Loop over templates}
        \State $C_{\ell,i} \gets x_\ell \star Y_i$ \Comment{2D circular cross-correlation}
        \ForAll{pixels $(u,v)$}
            \State $C_i[u,v] \gets \max(C_i[u,v], C_{\ell,i}[u,v])$ \Comment{Maximum of the heatmap}
        \EndFor
    \EndFor
    \State $(S_i, I_i) \gets \textsc{SortDescend}(\textsc{Flatten}(C_i))$ \Comment{Sort the heatmap}
    \State $M_i \gets \text{zero mask in } \{0,1\}^{k \times k}$ \Comment{Binary map tracking already selected regions}
    \For{$m = 0$ to $k^2 - 1$} \Comment{Loop over pixels}
        \State $(u, v) \gets \textsc{IndexToCoords}(I_i[m], k)$ \Comment{Convert flat index to 2D coordinates}
        \If{$S_i[m] > T$ \textbf{and} patch around $(u,v)$ of size $d \times d$ does not overlap in $M_i$} 
            \State $P \gets Y_i[u - d/2 : u + d/2,\, v - d/2 : v + d/2]$
            \State $\mathcal{A} \gets \mathcal{A} \cup \{P\}$
            \State Fill with ones the region of size $d \times d$ around $(u,v)$ in $M_i$
        \EndIf
    \EndFor
\EndFor
\State \Return $\mathcal{A}$
\end{algorithmic}
\end{algorithm}

\subsection{Validation under stationary correlated Gaussian noise}
\label{subsec:colored_noise_validation}

To complement the white-noise experiments in the main text, we carried out an additional supplementary simulation study under stationary correlated Gaussian noise, with the goal of empirically validating the prediction of Theorem~\ref{thm:classesCentersVersusTemplatesInformalStationary}. This experiment should be viewed as the correlated-noise analogue of the white-noise validation presented in the main text: there, the selected pure-noise samples yield class centers that align with the templates themselves, whereas here the theory predicts that the selected class centers should align with a covariance-shaped transform of the templates.

\begin{figure}[t!]
    \centering
    \includegraphics[width=0.9\linewidth]{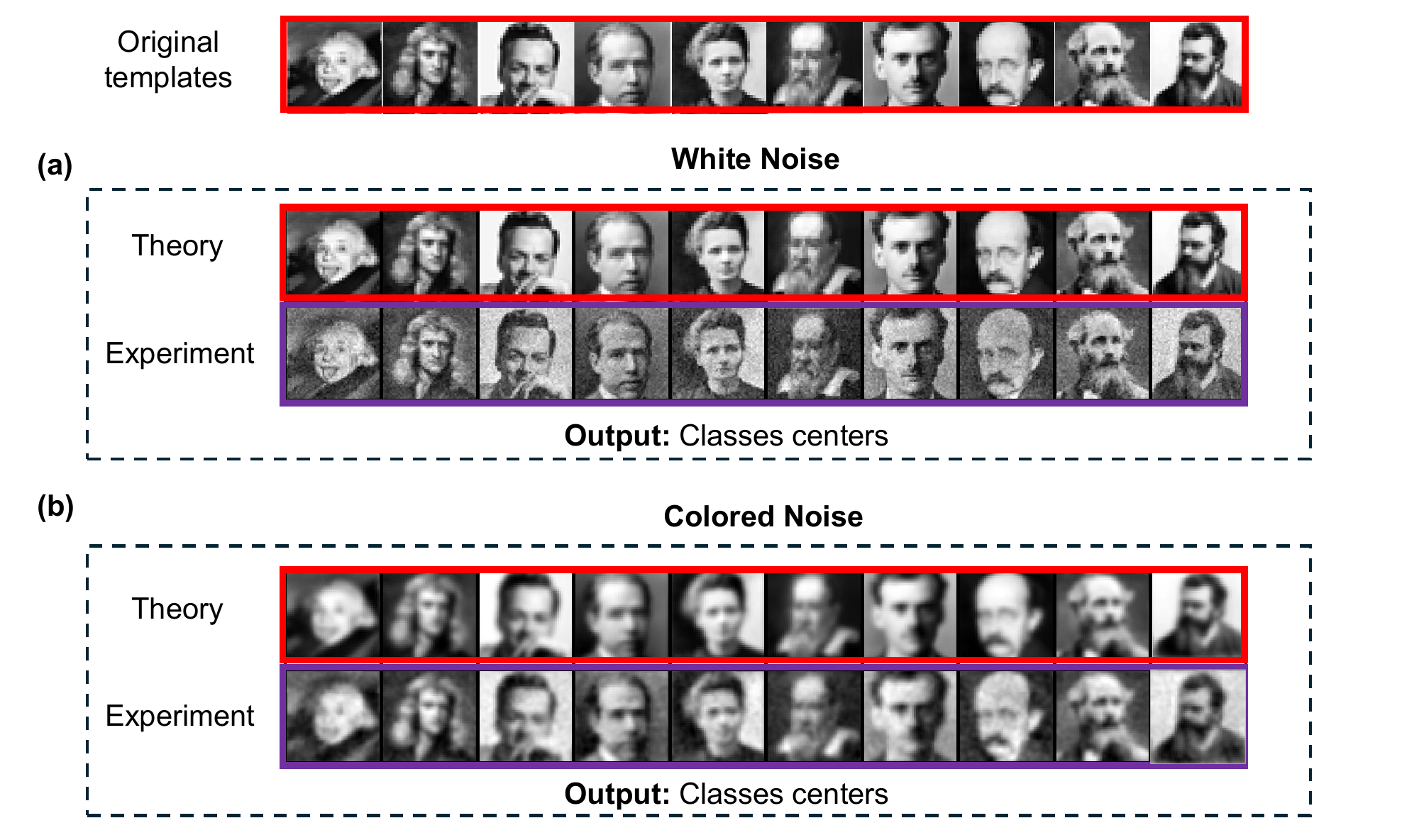}
    \caption{\textbf{Validation of template-induced bias under white and stationary correlated Gaussian noise.}
    Top row: the original templates. Panel (a) shows the white-noise setting. In this case, the theoretical prediction for the normalized class centers coincides with the templates themselves, and the empirical class centers obtained from the selected noise samples closely match this prediction. Panel (b) shows the stationary correlated-noise setting. Here, the theory predicts that the normalized class centers are no longer equal to the original templates, but rather to their covariance-shaped transforms under the noise covariance operator, given by~\eqref{eqn:colored_noise_align}. The empirical class centers again closely agree with the theoretical prediction.}
    \label{fig:colored_noise}
\end{figure}

In this experiment, the input patches are drawn from a zero-mean stationary Gaussian field with a prescribed non-flat two-dimensional power spectrum. Specifically, if $W$ denotes white Gaussian noise and $S(k_x,k_y)$ is a nonnegative power spectrum, then each realization is generated as
\begin{align}
    Y = \mathcal{F}^{-1}\!\big(\sqrt{S}\,\mathcal{F}(W)\big),
\end{align}
where $\mathcal{F}$ and $\mathcal{F}^{-1}$ denote the two-dimensional discrete Fourier transform and its inverse, respectively. In the colored-noise experiment, we choose $S$ explicitly as a Gaussian low-pass spectrum,
\begin{align}
    S(k_x,k_y)
    =
    \exp\!\left(
        -\frac{k_x^2+k_y^2}{2\ell^2}
    \right),
\end{align}
where $\ell>0$ is a bandwidth parameter and $(k_x,k_y)$ range over the discrete Fourier frequencies. We then normalize $S$ by its average value so that the resulting spatial variance is approximately one. Equivalently, the covariance operator of $Y$ is a stationary convolution operator $\Sigma$, which acts by pointwise multiplication by $S$ in the Fourier domain. In contrast to the white-noise case, $\Sigma$ is not proportional to the identity, and therefore the induced template-matching bias depends explicitly on the correlation structure of the noise.

We then apply the same template-matching selection rule as in the main text to a large collection of pure-noise samples, using a bank of normalized templates $\{x_\ell\}_{\ell=1}^L$. For each selected sample, we record the template attaining the largest correlation score, and for each template class $\ell$ we compute the empirical conditional mean of the selected samples. Theorem~\ref{thm:classesCentersVersusTemplatesInformalStationary} predicts that, after normalization by the selection threshold, this empirical class mean should align with
\begin{align}
    \label{eqn:colored_noise_align}
    \frac{\Sigma x_\ell}{x_\ell^\top \Sigma x_\ell}.
\end{align}
In the white-noise case, where $\Sigma \propto I$, this expression reduces to the template itself. In the correlated-noise setting, however, it becomes a covariance-shaped transform of the template.

The results, shown in Figure~\ref{fig:colored_noise}, confirm this prediction. Panel~(a) reproduces the white-noise benchmark: the theoretical and empirical normalized class centers both closely match the original templates. Panel~(b) shows the stationary correlated-noise setting: although the theoretical outputs are now visibly distorted relative to the original templates, the empirical class centers remain in close agreement with the corresponding theoretical predictions. Quantitatively, this agreement is also reflected in a PCC close to one between the empirical and theoretical class centers. Thus, as in the white-noise case studied in the main text, template-based selection from pure noise still produces a structured biased output; the main difference is that under correlated noise the limiting bias is shaped not only by the templates, but also by the covariance operator $\Sigma$.

\end{appendices}

\end{document}